\def\draft{0}
\def\conferenceversion{0}

\documentclass[11pt]{article}

\usepackage{etex}
\usepackage{verbatim}
\usepackage{xspace,enumerate}
\usepackage[dvipsnames]{xcolor}
\usepackage[T1]{fontenc}
\usepackage[full]{textcomp}
\usepackage[american]{babel}
\usepackage{mathtools}
\usepackage{amsthm}
\usepackage[normalem]{ulem}
\usepackage{fullpage}
\usepackage{tcolorbox}
\usepackage{todonotes}

\usepackage{mathpazo}
\usepackage{multicol}
\usepackage{graphicx}

\usepackage[
letterpaper,
top=1in,
bottom=1in,
left=1in,
right=1in]{geometry}
\usepackage{newpxtext} %
\usepackage{textcomp} %
\usepackage[varg,bigdelims]{newpxmath}
\usepackage[scr=rsfso]{mathalfa}%
\usepackage{bm} %
\linespread{1.1}%
\let\mathbb\varmathbb
\usepackage{microtype}
\usepackage[pagebackref,bookmarksnumbered,colorlinks=true,urlcolor=blue,linkcolor=blue,citecolor=OliveGreen]{hyperref}
\usepackage[capitalise,nameinlink]{cleveref}
\crefname{lemma}{Lemma}{Lemmas}
\crefname{fact}{Fact}{Facts}
\crefname{theorem}{Theorem}{Theorems}
\crefname{corollary}{Corollary}{Corollaries}
\crefname{claim}{Claim}{Claims}
\crefname{example}{Example}{Examples}
\crefname{algorithm}{Algorithm}{Algorithms}
\crefname{problem}{Problem}{Problems}
\crefname{definition}{Definition}{Definitions}
\crefname{exercise}{Exercise}{Exercises}
\usepackage{amsthm}

\newtheorem{theorem}{Theorem}[section]
\newtheorem*{theorem*}{Theorem}
\newtheorem{lemma}[theorem]{Lemma}
\newtheorem*{lemma*}{Lemma}
\newtheorem{fact}[theorem]{Fact}
\newtheorem*{fact*}{Fact}
\newtheorem{proposition}[theorem]{Proposition}
\newtheorem*{proposition*}{Proposition}

\newtheorem*{corollary*}{Corollary}

\newtheorem*{hypothesis*}{Hypothesis}
\newtheorem{conjecture}[theorem]{Conjecture}
\newtheorem*{conjecture*}{Conjecture}
\theoremstyle{definition}
\newtheorem{definition}[theorem]{Definition}
\newtheorem*{definition*}{Definition}

\newtheorem*{construction*}{Construction}

\newtheorem*{example*}{Example}
\newtheorem{question}[theorem]{Question}
\newtheorem*{question*}{Question}
\newtheorem{algorithm}[theorem]{Algorithm}
\newtheorem*{algorithm*}{Algorithm}

\newtheorem*{assumption*}{Assumption}

\newtheorem*{problem*}{Problem}

\newtheorem*{openquestion*}{Open Question}
\theoremstyle{remark}
\newtheorem{claim}[theorem]{Claim}
\newtheorem*{claim*}{Claim}
\newtheorem{remark}[theorem]{Remark}
\newtheorem*{remark*}{Remark}

\newtheorem*{observation*}{Observation}
\usepackage{paralist}
\frenchspacing
\let\originalleft\left
\let\originalright\right
\renewcommand{\left}{\mathopen{}\mathclose\bgroup\originalleft}
\renewcommand{\right}{\aftergroup\egroup\originalright}
\usepackage{turnstile}
\usepackage{mdframed}
\usepackage{tikz}
\usetikzlibrary{positioning}
\usepackage{caption}
\DeclareCaptionType{Algorithm}
\usepackage{newfloat}
\usepackage{array}
\usepackage{subfig}
\usepackage{bbm}
\usepackage{xparse}
\usepackage{amsthm} %
\makeatletter
\let\latexparagraph\paragraph
\RenewDocumentCommand{\paragraph}{som}{%
  \IfBooleanTF{#1}
    {\latexparagraph*{#3}}
    {\IfNoValueTF{#2}
       {\latexparagraph{\maybe@addperiod{#3}}}
       {\latexparagraph[#2]{\maybe@addperiod{#3}}}%
  }%
}
\newcommand{\maybe@addperiod}[1]{%
  #1\@addpunct{.}%
}
\makeatother

\usepackage{boxedminipage}

\newcommand{\Paren}[1]{\left(#1\right)}

\newcommand{\Brac}[1]{\left[#1\right]}

\newcommand{\abs}[1]{\lvert#1\rvert}

\newcommand{\card}[1]{\lvert#1\rvert}
\newcommand{\Card}[1]{\left\lvert#1\right\rvert}

\newcommand{\set}[1]{\{#1\}}

\newcommand{\norm}[1]{\lVert#1\rVert}
\newcommand{\Norm}[1]{\left\lVert#1\right\rVert}

\newcommand{\iprod}[1]{\langle#1\rangle}

\let\ip\Iprod

\newcommand{\Esymb}{\mathbb{E}}
\newcommand{\Psymb}{\mathbb{P}}

\DeclareMathOperator*{\E}{\Esymb}

\DeclareMathOperator*{\ProbOp}{\Psymb}
\renewcommand{\Pr}{\ProbOp}

\newcommand{\sge}{\succeq}

\renewcommand{\ij}{{ij}}

\newcommand\bdot\bullet

\DeclareMathOperator{\Tr}{Tr}

\DeclareMathOperator{\vol}{vol}

\DeclareMathOperator{\rank}{rank}

\newcommand{\ie}{i.e.,\xspace}

\newcommand{\etal}{et al.\xspace}

\newcommand{\Z}{\mathbb Z}

\newcommand{\N}{\mathbb N}
\newcommand{\R}{\mathbb R}
\newcommand{\C}{\mathbb C}

\newcommand{\problemmacro}[1]{\texorpdfstring{\textup{\textsc{#1}}}{#1}\xspace}

\newcommand{\uniquegames}{\problemmacro{unique games}}

\newcommand{\balancedseparator}{\problemmacro{balanced separator}}

\newcommand{\sparsestcut}{\problemmacro{sparsest cut}}

\newcommand{\edgeexpansion}{\problemmacro{edge expansion}}

\newcommand{\conductance}{\problemmacro{conductance}}
\newcommand{\cutimprovement}{\problemmacro{cut improvement}}

\newcommand{\smallsetexpansion}{\problemmacro{small-set expansion}}

\newcommand{\cN}{\mathcal N}

\newcommand{\cP}{\mathcal P}

\newcommand{\cS}{\mathcal S}

\newcommand{\bbP}{\mathbb P}

\renewcommand{\leq}{\leqslant}
\renewcommand{\le}{\leqslant}
\renewcommand{\geq}{\geqslant}

\newcommand*{\vertbar}{\rule[-1ex]{0.5pt}{2.5ex}}
\let\epsilon=\varepsilon
\numberwithin{equation}{section}
\newcommand\MYcurrentlabel{xxx}
\newcommand{\MYstore}[2]{%
  \global\expandafter \def \csname MYMEMORY #1 \endcsname{#2}%
}
\newcommand{\MYload}[1]{%
  \csname MYMEMORY #1 \endcsname%
}
\newcommand{\MYnewlabel}[1]{%
  \renewcommand\MYcurrentlabel{#1}%
  \MYoldlabel{#1}%
}
\newcommand{\MYdummylabel}[1]{}
\newcommand{\torestate}[1]{%
  \let\MYoldlabel\label%
  \let\label\MYnewlabel%
  #1%
  \MYstore{\MYcurrentlabel}{#1}%
  \let\label\MYoldlabel%
}
\newcommand{\restatedef}[1]{%
  \let\MYoldlabel\label
  \let\label\MYdummylabel
  \begin{definition*}[Restatement of \cref{#1}]
    \MYload{#1}
  \end{definition*}
  \let\label\MYoldlabel
}
\newcommand{\restatetheorem}[1]{%
  \let\MYoldlabel\label
  \let\label\MYdummylabel
  \begin{theorem*}[Restatement of \cref{#1}]
    \MYload{#1}
  \end{theorem*}
  \let\label\MYoldlabel
}
\newcommand{\restatelemma}[1]{%
  \let\MYoldlabel\label
  \let\label\MYdummylabel
  \begin{lemma*}[Restatement of \cref{#1}]
    \MYload{#1}
  \end{lemma*}
  \let\label\MYoldlabel
}
\newcommand{\restateprop}[1]{%
  \let\MYoldlabel\label
  \let\label\MYdummylabel
  \begin{proposition*}[Restatement of \cref{#1}]
    \MYload{#1}
  \end{proposition*}
  \let\label\MYoldlabel
}
\newcommand{\restatefact}[1]{%
  \let\MYoldlabel\label
  \let\label\MYdummylabel
  \begin{fact*}[Restatement of \cref{#1}]
    \MYload{#1}
  \end{fact*}
  \let\label\MYoldlabel
}
\newcommand{\restate}[1]{%
  \let\MYoldlabel\label
  \let\label\MYdummylabel
  \MYload{#1}
  \let\label\MYoldlabel
}

\newcommand{\eps}{\epsilon}

\allowdisplaybreaks
\sloppy
\newcommand*{\Id}{\mathbf{I}}

\newenvironment{algorithmbox}{\begin{mdframed}[nobreak=true]\begin{algorithm}}{\end{algorithm}\end{mdframed}}

\providecommand{\important}[1]{\texorpdfstring{\textup{\textsc{#1}}}{#1}\xspace}

\providecommand{\CD}{\important{SD}}
\providecommand{\CP}{\important{cp}}


\newcommand{\cay}{{\textup{\textsf{Cay}}}}
\let\Cay\cay

\definecolor{niceish}{HTML}{74b807} 
\ifnum\draft=1
\newcommand{\salil}[1]{\textcolor{blue}{[Salil: #1]}}
\newcommand{\jake}[1]{\textcolor{violet}{[Jake: #1]}}
\newcommand{\tom}[1]{\textcolor{WildStrawberry}{[Tommaso: #1]}}
\newcommand{\chris}[1]{\textcolor{niceish}{[Chris: #1]}}
\newcommand{\jiyu}[1]{\textcolor{Orange}{[Jiyu: #1]}}
\else
\newcommand{\salil}[1]{}
\newcommand{\jake}[1]{}
\newcommand{\tom}[1]{}
\newcommand{\chris}[1]{}
\newcommand{\jiyu}[1]{}
\fi

\newcommand{\el}{\ell}
\newcommand{\lam}{\lambda}

\newcommand{\al}{\alpha}

\newcommand{\gam}{\gamma}
\newcommand{\Gam}{\Gamma}

\newcommand{\om}{\om}

\newcommand{\matA}{\mathbf{A}}

\newcommand{\matD}{\mathbf{D}}

\newcommand{\matG}{\mathbf{G}}

\newcommand{\matL}{\mathbf{L}}

\newcommand{\matW}{\mathbf{W}}

\newcommand{\matPi}{\mathbf{\Pi}}

\DeclareMathOperator{\diam}{diam}



\DeclareMathOperator{\linspan}{span}

\RequirePackage[outline]{contour} 
\contourlength{0.065pt}
\contournumber{10}%

\newcommand{\low}{\textnormal{\textsc{low}}\xspace}

\newcommand{\mul}{\textnormal{\textsc{mul}}\xspace}

\ifpdf
\hypersetup{
    pdftitle={Sparsest cut and eigenvalue multiplicities on low degree Abelian Cayley graphs},
    pdfauthor={Tommaso d'Orsi, Chris Jones, Jake Ruotolo, Salil Vadhan, Jiyu Zhang}
}
\fi

\setcounter{MaxMatrixCols}{20}

\begin{document}
\date{}

\title{
Sparsest cut and eigenvalue multiplicities on low degree Abelian Cayley graphs 
\ifnum\draft=1 {\sc \small \\ Working Draft: Please Do Not Distribute} \fi
}

\ifnum\conferenceversion=1
\author{Author names omitted}
\else
\author{Tommaso d'Orsi\thanks{Bocconi University. \texttt{tommaso.dorsi@unibocconi.it}}
\and Chris Jones\thanks{Bocconi University. \texttt{chris.jones@unibocconi.it}}
\and Jake Ruotolo\thanks{Harvard University. \texttt{jakeruotolo@g.harvard.edu}}
\and Salil Vadhan\thanks{Harvard University. \texttt{salil\_vadhan@harvard.edu}}
\and Jiyu Zhang\thanks{Bocconi University. \texttt{jiyu.zhang@phd.unibocconi.it}}
}
\fi

\maketitle

\ifnum\conferenceversion=0
\centerline{\textit{In memory of Luca Trevisan (1971--2024)}}
\fi
\thispagestyle{empty}
\begin{abstract}
Whether or not the Sparsest Cut problem admits an efficient $O(1)$-approximation algorithm is a fundamental algorithmic question with connections to geometry and the Unique Games Conjecture.

Revisiting spectral algorithms for Sparsest Cut, we present a novel, simple algorithm that combines  eigenspace enumeration with a new algorithm for the Cut Improvement problem.
The runtime of our algorithm is parametrized by a quantity that we call the solution dimension $\CD_\varepsilon(G)$: the smallest $k$ such that the subspace spanned by the first $k$ Laplacian eigenvectors contains all but $\varepsilon$ fraction of a sparsest cut.

Our algorithm matches the guarantees of prior methods based on the threshold-rank paradigm, while also extending beyond them.
To illustrate this, we study its performance on low degree Cayley graphs over Abelian groups---canonical examples of graphs with poor expansion properties.

We prove that low degree Abelian Cayley graphs have small solution dimension,
yielding an algorithm that computes a $(1+\eps)$-approximation to the uniform Sparsest Cut of a degree-$d$ Cayley graph over an Abelian group of size $n$ in time $n^{O(1)}\cdot\exp\set{(d/\eps)^{O(d)}}$.

Along the way to bounding the solution dimension of Abelian Cayley graphs, we analyze their sparse cuts and spectra, proving that the collection of $O(1)$-approximate sparsest cuts has an $\eps$-net of size $\exp\{(d/\eps)^{O(d)}\}$ and

that the multiplicity of $\lam_2$ is bounded by $2^{O(d)}$.
The latter bound is tight and improves on a previous bound of $2^{O(d^2)}$ by Lee and Makarychev.

\end{abstract}
\clearpage
\thispagestyle{empty}
\microtypesetup{protrusion=false}
\tableofcontents{}
\microtypesetup{protrusion=true}

\clearpage
\pagestyle{plain}
\setcounter{page}{1}

\section{Introduction}\label{sec:introduction}

For an undirected  graph $G$, the sparsest cut measures how poor of an expander the graph is,
characterizing how slowly random walks on $G$ mix. For simplicity in this introduction, we will state our definitions just for the case of regular graphs:

\begin{definition}[Sparsest Cut]
    For a $d$-regular graph $G$ on $n$ vertices and a partition $(Q, \bar Q)$ of its vertex set $V(G)$, the 
    \emph{(normalized) density}
    of the cut $(Q, \bar Q)$ is defined as:
    \begin{align*}
        \psi_G(Q):=\frac{|E(Q, \bar{Q})|}{|Q|\cdot|\bar Q|}\cdot \frac{n}{d} \,.
    \end{align*}
    The \sparsestcut\ problem is, given $G$, to find $Q\subseteq V(G)$ that minimizes $\psi_G(Q).$  We write $\psi(G)$ to denote the minimum value of $\psi_G(Q)$ over all cuts $(Q, \bar Q)$.
\end{definition}
The normalization factor of $n/d$ ensures that the largest possible value of $\psi(G)$ is $1$, because for a randomly chosen cut $Q$, the expectation of $|E(Q,\bar Q)|$ is $nd/4$, and the expectation of $|Q|\cdot |\bar Q|$ is $n^2/4$.
We'll often switch back and forth between $\psi(G)$ and the closely related quantity of {\em conductance}, defined as $$\phi(G)=\min_{|Q|\leq n/2} \phi_G(Q),\quad \text{ where }\quad\phi_G(Q) = \frac{|E(Q,\bar Q)|}{d\cdot |Q|}.$$
We have $\phi(G)\leq \psi(G) \leq 2\cdot \phi(G)$, so the problem of approximating \conductance\ is equivalent to the problem of approximating \sparsestcut, up to multiplying the approximation factor by 2.  
Via known reductions, up to a constant factor of approximation,
\sparsestcut is also equivalent to approximating several other graph parameters such as \edgeexpansion and \balancedseparator \cite{arora2009expander}.

Because of its centrality to algorithms, computational complexity, combinatorics, and geometry, efficient algorithms for \sparsestcut\ have been the main focus of a long line of work. 
A polynomial-time $O(\log n)$-approximation algorithm was obtained via a linear programming relaxation by Leighton and Rao \cite{leighton1999multicommodity}, and later improved to $O(\sqrt{\log n})$ by Arora, Rao, and Vazirani (ARV) \cite{arora2009expander} via semidefinite programming. To this day, the ARV algorithm remains the state-of-the-art for general graphs. 

A central challenge to resolving the approximability of \sparsestcut
is its intricate relationship with the \emph{Unique Games Conjecture} (UGC) and the \textit{Small Set Expansion Hypothesis} (SSEH), themselves outstanding open problems with a close connection \cite{khot2002power, raghavendra2010graph, raghavendra2012reductions}. 
Assuming the SSEH, \sparsestcut does not have a polynomial-time $O(1)$-approximation algorithm \cite{raghavendra2012reductions} (the same holds for ``non-uniform'' \sparsestcut assuming the UGC \cite{chawla2006hardness, khot2015unique, arora2008unique}).
Thus we have some evidence that beating the $O(\sqrt{\log n})$ approximation factor of the ARV algorithm may be hard.
On the other side of the coin, searching for better approximation algorithms for \sparsestcut is an ostensible approach to developing algorithms for related problems, including \smallsetexpansion and \uniquegames.

Given the above situation, researchers have attempted to pair interesting instances with interesting algorithms to determine when constant-factor approximation to \sparsestcut is possible.
Below we summarize the most relevant algorithmic approaches and what is known about them.

\paragraph{Fiedler's Algorithm.}
The simplest algorithm for \sparsestcut is Fiedler's Algorithm, which is just to threshold the second eigenvector~\cite{fiedler1973algebraic}.
Fiedler's Algorithm always finds a cut $Q$ such that $\phi(G)\leq \sqrt{2\lambda_2}$, where $\lambda_2$ is the second-smallest eigenvalue of the normalized Laplacian of $G$.  Indeed, this performance guarantee for Fiedler's Algorithm is the proof of the right-hand side of Cheeger's Inequality, which says:
$$\lambda_2/2\leq \phi(G)\leq \sqrt{2\lambda_2}\,.$$
\salil{following sentence is new}\chris{this sentence felt a little misleading to me, if I understand correctly that Fiedler's algorithm actually works the best when the Cheeger lower bound is tight (so the approximation ratio is much better than $O(\sqrt{\lam_2}/\phi)$ in the cases where Fiedler is the most successful, e.g. the barbell graph). What do people think?}
This analysis of Fiedler's algorithm shows that it gives an $O(\sqrt{\lambda_2}/\phi(G))$ worst-case approximation ratio, which can be quite large.  Indeed, 
examples show that the approximation factor achieved by Fiedler's Algorithm can be as bad as $\Omega(n^{1/3})$ in general \cite{guattery1998quality}.

\paragraph{Expander-like graphs.} Starting around 15 years ago, a series of works gave approximation algorithms for   
\sparsestcut and \uniquegames 
on graphs that are similar to expanders.  A constant-factor approximation to \sparsestcut\ is trivial on actual expanders (every cut is a constant-factor approximation since $\psi(G)=\Omega(1)$), but it took a nontrivial work of Arora et al.~\cite{arora2008unique} to approximate \uniquegames\ on expanders.  This in turn led to efficient algorithms for \sparsestcut\ on graphs that are ``expander-like'' in the sense of having few small eigenvalues, as captured by the following definition:
\begin{definition}[$\tau$-threshold-rank]\label{def:tau-threshold-rank}
    For a graph $G$ and $2\geq \tau\geq 0\,,$ the  $\tau$-\emph{threshold-rank} $\mul_\tau(G)$ is the number of eigenvalues of the normalized Laplacian with value at most $\tau$.
\end{definition}
Recall that $G$ is an expander if and only if $\lambda_2$ is bounded away from 0.  Equivalently, there is a constant $\tau>0$ (independent of $n$) such that $\mul_\tau(G)=1$.  For every constant $\tau>0$, the works \cite{arora2015subexponential,barak2011rounding} give a constant-factor approximation algorithm for \sparsestcut\ that runs in time polynomial in $n$ but exponential in $\mul_\tau(G)$.
\salil{added BRS author names}
The algorithm of Barak, Raghavendra, and Steurer~\cite{barak2011rounding} is based on a rounding of a semidefinite programming relaxation, the $O(\mul_\tau(G))$'th level of the Sum-of-Squares hierarchy.
Recent work has extended the Sum-of-Squares technique based on looser expansion properties, including being a small-set expander \cite{bafna2021playing}, coming from a high-dimensional expander \cite{bafna2022high}, or having a ``succinct characterization'' of non-expanding sets \cite{bafna2023solving}.

The algorithm of Arora, Barak, and Steurer~\cite{arora2015subexponential} is based on enumerating the subspace spanned by the first $\mul_\tau(G)$ eigenvectors to find a vector that is very close to a sparse cut.
Approaches like this, which utilize multiple low eigenvectors of the Laplacian matrix, are natural generalizations of Fiedler's Algorithm, and we collectively refer to them as \emph{higher-order spectral algorithms}. \salil{minor edits to preceeding sentence}

\paragraph{Cheeger-lower-bound graphs.}
With a more careful use of the Sum-of-Squares hierarchy,
Guruswami and Sinop~\cite{guruswami2013approximating} showed that we can replace $\mul_{\Omega(1)}$ in the aforementioned results with $\mul_{O(\phi(G))}$.

\begin{theorem}[\cite{guruswami2013approximating}]\label{thm:threshold-rank}
    For all constants $\eps > 0\,$,
    \sparsestcut admits a $(1+\eps)$-approximation algorithm in time $n^{O(1)}\cdot \exp\{O(r)\}\,,$ where $r$ is the $O(\phi(G)/\eps)$-threshold-rank.
\end{theorem}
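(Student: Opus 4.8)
The plan is to run the natural \emph{subspace-enumeration} algorithm suggested by the introduction: look for the sparse cut among threshold (``sweep'') cuts of vectors drawn from the low eigenspace of the normalized Laplacian $\mathcal{L}$, which by hypothesis has dimension roughly $r$. Since conductance takes only $\poly(n)$ distinct values (see \cref{def:conductance}), we may assume we have guessed a $\tau$ with $(1+\tfrac{\delta}{2})\phi(G) \le \tau \le (1+\delta)\phi(G)$, so that the span $W$ of the eigenvectors of $\mathcal{L}$ with eigenvalue at most $\tau$ has dimension $r' := \mul_\tau(G) \le r$; we run the procedure for every guess, return the best cut found, and simply skip any guess for which $\mul_\tau(G)$ exceeds the budget $O(r)$. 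We view $W$ inside $\R^V$ with the degree inner product (pulling the eigenvectors back through $D^{-1/2}$), so that every $\psi \in W$ has Laplacian Rayleigh quotient at most $\tau$ while every $\psi$ in its $D$-orthogonal complement has Rayleigh quotient more than $\tau$. The algorithm enumerates a $\gamma(\delta)$-net of the unit sphere of $W$ in the degree norm --- of size $\exp(O_\delta(r'))$ --- and, for each net vector $w$ and each of the $\le n$ thresholds $t$, evaluates $\phi_G(\{i : w_i \ge t\})$, returning the sparsest cut encountered.

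For the analysis it suffices to exhibit a vector in $W$ whose best sweep cut has conductance $O_\delta(\phi(G))$, robustly enough to be detected by the constant-precision net. This is done by a projection argument. Take an optimal solution to the ARV semidefinite relaxation of \sparsestcut \cite{arora2009expander} (equivalent to \conductance up to an $O(1)$ factor), i.e.\ an embedding $i \mapsto v_i$ with objective value at most $\phi(G)$, and project each coordinate function $i \mapsto (v_i)_a$ onto $W$ to obtain a configuration $\{v_i'\}$. The Dirichlet energy $\sum_{(i,j)\in E}\snorm{v_i-v_j}$ (the SDP numerator) only decreases under the projection; the crux is that the ``variance'' $\tfrac{1}{\vol(G)}\sum_{i<j} d_i d_j \snorm{v_i-v_j}$ (the denominator) decreases by at most a factor $O_\delta(1)$. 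Indeed, the part of the embedding orthogonal to $W$ has Dirichlet energy at least $\tau$ times its own contribution to the variance, while the total Dirichlet energy is at most $\phi(G)$ times the total variance; hence the orthogonal part carries at most a $\tfrac{\phi(G)}{\tau} \le \tfrac{1}{1+\delta/2}$ fraction of the variance, so the projected part retains an $\Omega_\delta(1)$ fraction. This is exactly where $\delta > 0$ enters, and working at the SDP level --- where the objective is bounded by $\phi(G)$ with \emph{no} constant loss --- is what makes the argument survive all $\delta > 0$; projecting the indicator vector of the optimal set directly would yield only a Rayleigh-quotient bound of $2\phi(G)$ and fail for $\delta < 1$. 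The projected configuration is still a negative-type metric, now of SDP objective $O_\delta(\phi(G))$; moreover each projected coordinate function lies in $\Span(\psi_1,\dots,\psi_{r'})$, so $v_i'$ is a fixed linear image of $(\psi_1(i),\dots,\psi_{r'}(i))$, and every hyperplane cut of $\{v_i'\}$ --- in particular every sweep cut of a direction in $W$ --- is available to the enumeration. Rounding this low-dimensional, $O_\delta(\phi(G))$-objective instance to an $O(1)$-approximate cut along the lines of \cite{guruswami2013approximating, arora2015subexponential}, with the dimension-$r'$ loss absorbed into the $\exp(O(r'))$ budget, produces the promised direction, and robustness of sweep cuts (or a short smoothing argument) shows that a nearby net vector works as well. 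Dually, this realizes the cut as a hyperplane cut in a low-dimensional embedding of $G$, as advertised.

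I expect the main obstacle to lie in the last two steps above: controlling the projection so that the embedding does not collapse while the objective grows by only $O_\delta(1)$ --- the clean calculation sketched here may need care about normalization and about preserving the triangle inequalities --- and, more delicately for the running time, arguing that rounding the $r'$-dimensional instance can be performed, and detected by a constant-precision net of directions, within $\poly(n)\cdot\exp(O(r'))$ time rather than the $n^{O(r')}$ one naively gets from enumerating all combinatorially distinct hyperplane cuts. (An approach through an expander decomposition into $O(r')$ internally expanding pieces, enumerating the side of the cut for each piece, seems to need the threshold rank at a constant-factor-larger scale $C\phi(G)$ and so would not give the tight $(1+\delta)\phi(G)$ dependence in the statement.)
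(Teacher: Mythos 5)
The paper does not actually prove \cref{thm:threshold-rank}; it is cited from \cite{arora2015subexponential, guruswami2013approximating}, so there is no in-house proof to compare your argument against. Your sketch does capture the right structural idea, and the energy/variance bookkeeping is essentially correct: after $D$-centering, the ARV denominator equals the total $D$-weighted second moment of the coordinate functions, the Laplacian is block-diagonal along $W \oplus W^{\perp}$, and the bound $V_{W^\perp}/V \le \phi(G)/\tau \le 1/(1+\delta/2)$ on the orthogonal fraction of the variance follows, so the projected embedding does have SDP objective $O_\delta(\phi(G))$.

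The genuine gap is the sentence ``The projected configuration is still a negative-type metric.'' It is not, in general. Projecting every coordinate function onto $W$ acts on the Gram matrix by $G \mapsto \Pi_W G \,\Pi_W^{\mathsf{T}}$; this preserves positive semidefiniteness but does \emph{not} preserve the $\ell_2^2$ triangle inequalities, which are linear but not invariant constraints on $G$. Without them no ARV structure theorem applies, and the concluding step --- rounding the low-dimensional, $O_\delta(\phi(G))$-objective configuration to an $O(1)$-factor cut --- has nothing to stand on; a low-dimensional, low-objective PSD Gram matrix with no triangle inequalities can fail to carry any sweep cut better than $\sqrt{\phi(G)}$. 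This is precisely the difficulty that \cite{guruswami2013approximating} resolves by working at level $O(r)$ of the Lasserre hierarchy: conditioning on an $O(r)$-size seed is a projection that \emph{does} preserve all low-degree moment constraints, triangle inequalities included, rather than the naive spectral projection you use. (The present paper's algorithm for its own related theorem, \cref{lem:arv-advice} in \cref{sec:algorithm}, takes yet another route: it adds an explicit correlation-with-advice constraint to a degree-$4$ SoS relaxation, so that the two $\Omega(1)$-separated clusters ARV needs appear directly from the constraint rather than from the structure theorem.) Your closing caveats correctly locate this missing piece, but a fix --- not an acknowledgment --- is what the argument requires; the expander-decomposition workaround you consider and reject is indeed lossy in exactly the way you say.
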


When $\phi(G)$ is small, it is not clear that graphs with small $O(\phi(G))$-threshold-rank are ``expander-like'' anymore.  Instead, we can think of them as graphs where the lower bound of Cheeger's Inequality ($\lambda_2/2\leq \phi(G)$) is nearly tight in the sense that $G$ has few eigenvalues between $\lam_2/2$ and $\phi(G)$.

The crucial feature of this class of graphs is that they have relatively few distinct near-sparsest cuts.
Indeed, the indicator vector of a set $Q$ with conductance $O(\phi(G))$ must be close to the subspace spanned by the eigenvalues of magnitude $O(\phi(G))$.
Thus if the latter subspace has dimension $r$, all of the near-sparsest cuts can be covered by a net of size exponential in $r$.

A similar consequence of small $O(\phi(G))$-threshold-rank is that these graphs can be decomposed into a small number
of pieces with conductance at most $O(\phi(G))$\,, such
as depicted in \cref{fig:three-communities}.
See \cref{app:threshold-rank} for a simple statement and proof of this decomposition.

\begin{figure}[ht]
    \centering
    \includegraphics[width=0.25\linewidth]{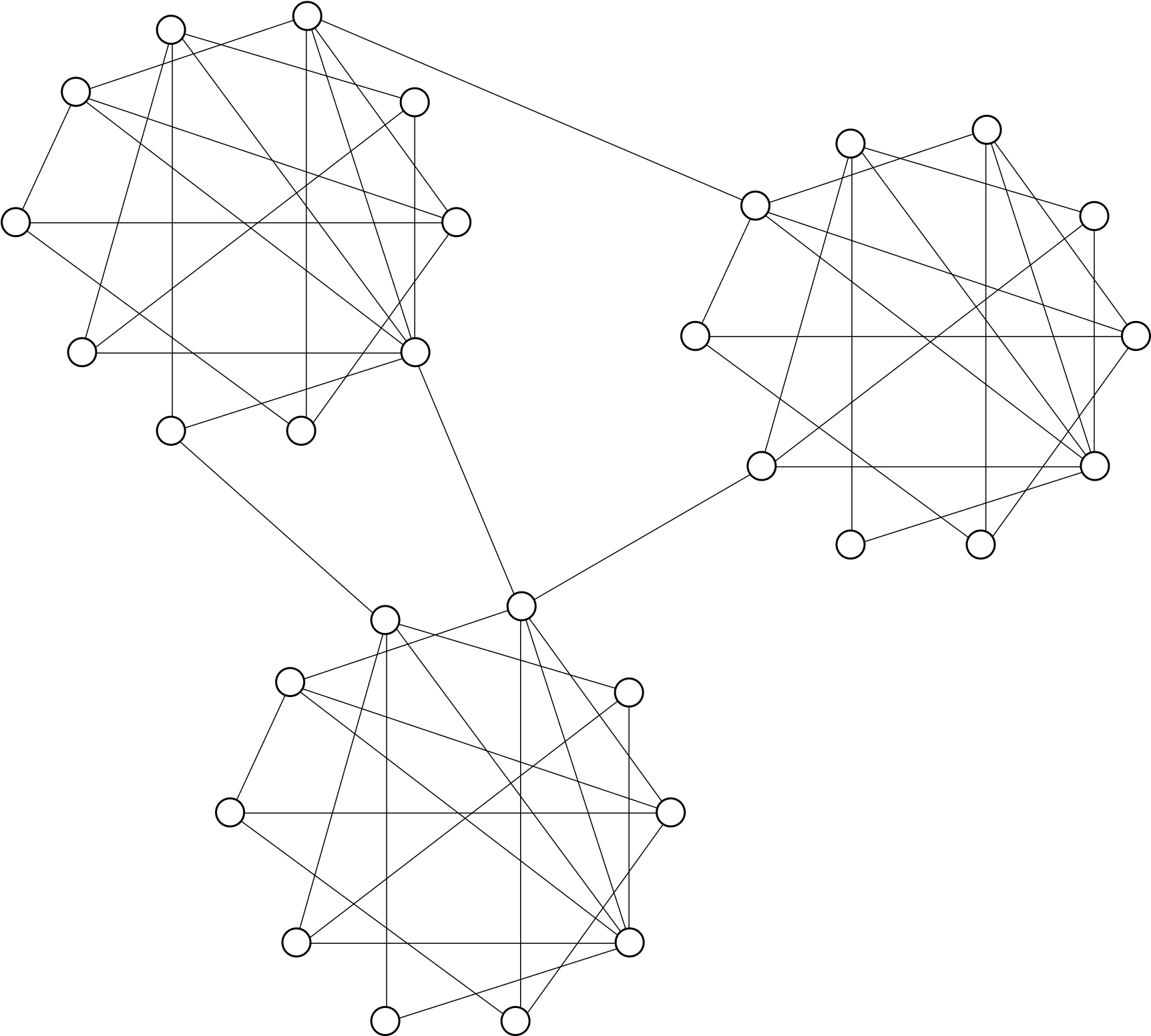}
    \caption{A graph with three distinct sparse cuts. In this graph, the 0/1 indicator vectors for the three pieces approximately span the three low eigenvectors. 
    }
    \label{fig:three-communities}
\end{figure}

\paragraph{Cut improvement.}
As noted by Guruswami and Sinop~\cite{guruswami2013approximating}, 
\cref{thm:threshold-rank} can also be proven via a higher-order spectral algorithm, based on eigenspace enumeration plus an algorithm for the \cutimprovement problem.\footnote{Guruswami and Sinop also study the {\em non-uniform} \sparsestcut\ problem, which is not covered by the results on \cutimprovement.}
Informally, \cutimprovement is the problem of: given a cut $Q \subseteq V(G)$, compute the sparsest cut that has large intersection with $Q$.\salil{changed `nearby in Hamming distance' to `has large intersection with'} 
This is a problem of independent interest, with practical applications~\cite{andersen2008algorithm,lang2009empirical,fountoulakis2023flow}.

Andersen and Lang \cite{andersen2008algorithm} gave a flow-based approximation algorithm for \cutimprovement\ 
which for our purposes can be summarized as follows.
\begin{theorem}[{\cite{andersen2008algorithm}}]
\label{thm:cut-improvement}
    Let $G$ be an $n$-vertex graph, let $0 < \eps < 1/2$, and let $Q^* \subseteq V(G)$\,.
    There is a polynomial-time algorithm that, given $G$ and $Q \subseteq V(G), |Q| \leq n/2$
    such that $|Q \cap Q^*| \geq (1 - \eps) |Q^*|$\,,
    returns $\hat Q \subseteq V(G)$ with $0 < \abs{\hat Q}\leq n/2$ and $\phi(\hat Q) \leq \frac{1}{1-2\eps} \cdot\phi(Q^*)$\,.

\end{theorem}
With this \cutimprovement\ algorithm in hand, we can derive \cref{thm:threshold-rank} (with a factor 2 loss in approximation factor) as follows: by
enumerating the eigenspace up to eigenvalue $O(\phi(G)/\eps)$, we can obtain a cut $Q$ that is $\eps$-close to a sparsest cut $Q^*$, and 
then we use \cref{thm:cut-improvement} to find $\hat Q$ such that $$\phi(\hat Q) \leq (1+O(\eps))\cdot \phi(Q^*) \leq (1+O(\eps))\cdot 2\psi(G),$$
where the factor 2 loss comes from moving between conductance and density.

\paragraph{Buser graphs.}
A class of graphs which is in a sense complementary to the class of Cheeger-lower-bound graphs consists of graphs that satisfy a \emph{Buser Inequality} $\phi = \Theta(\sqrt{\lam_2})$\,.
Examples of graphs with bounded degree $d$ which satisfy this inequality are: graphs with non-negative discrete curvature \cite{klartag2016discrete, munch2023non}, Cayley graphs over Abelian groups \cite{klartag2016discrete, oveis2021ARV}, or more generally Cayley graphs over nilpotent  
groups \cite{breuillard2016nilprogressions}, or graphs satisfying strong bounds on their volume growth \cite{benson2021volume}.
All these graphs satisfy $\phi(G) \geq \Omega(\sqrt{\lam_2 / d})$ where $d$ is the degree of the graph.

For solving these instances, it has been observed experimentally that spectral algorithms are less successful than flow-based algorithms
based on the Leighton--Rao LP and its successors \cite{leighton1999multicommodity, fountoulakis2023flow}.
For example, the computations of \cite[Table 2]{lang2009empirical} demonstrate that: for low-dimensional mesh graphs, Leighton--Rao succeeds while simple spectral thresholding algorithms are suboptimal, whereas the opposite is true for an ``expander with a planted cut'' similar to the graph in \cref{fig:three-communities}.

Note that the Buser Inequality $\phi(G) \geq \Omega(\sqrt{\lam_2/d})$ implies that Fiedler's Algorithm achieves $O(\sqrt{d})$-approximation to \sparsestcut
since the conductance of the Fiedler cut is $O(\sqrt{\lam_2})$\,.
Thus, Fiedler's Algorithm can still achieve constant approximation ratio
for constant $d$, although the approximation ratio decays with $d$.
The approximation ratio of Fiedler's Algorithm grows to $\Theta(\sqrt{\log n})$ for the hypercube graph (when it is perturbed slightly so that the $\{\pm 1\}$-indicator of the Majority function becomes the unique second eigenvector).

\subsection{Results}\label{sec:results}

In this work:
\begin{enumerate}

\item
We give a simple algorithm for the \cutimprovement problem using a modification of the Leighton--Rao linear program \cite{leighton1999multicommodity}.
This extends into a simple eigenspace enumeration algorithm for \sparsestcut in the same way as \cite[Section 4]{guruswami2013approximating}.

\item
We define a new complexity measure, the {\em solution dimension} $\CD$,
which determines the runtime of our eigenspace enumeration algorithm for \sparsestcut.
Comparing to \cref{thm:threshold-rank}, the solution dimension is no larger than the $O(\phi(G))$-threshold-rank, but surprisingly, it can be significantly smaller.
We demonstrate that this is the case for Cayley graphs over Abelian groups, which are a specific family of graphs satisfying the previously mentioned Buser Inequality. This gives us a PTAS for \sparsestcut\ in Abelian Cayley graphs of degree $d=o(\log\log n/\log\log\log n)$ and a subexponential time $(1+\eps)$-approximation for $d = o(\log n/\log \log n)$.

\item To prove our upper bound on the solution dimension,
we obtain several new facts about Abelian Cayley graphs.
We bound the number of sparse cuts in
an Abelian Cayley graph,
showing that all $O(1)$-approximate sparse cuts are $1-\eps$ contained in the span of the first $d^{O(d)}$ eigenvectors.
We also establish a new and tight bound of $2^{O(d)}$ on the multiplicity of $\lambda_2$\, improving a previous bound of $2^{O(d^2)}$ proven by Lee and Makaryachev~\cite{lee2008eigenvalue}\,.
In contrast to their proof, which is based on Kleiner’s proof of Gromov’s theorem \cite{kleiner2010new},
we directly show that the multiplicity of the $\lam_2$ eigenvalue in any regular graph can be robustly upper bounded by a notion of volume growth of the graph.

\end{enumerate}
We elaborate on these contributions below.

\smallskip

\salil{added subsubsection headings below}
\subsubsection{Cut Improvement}
Our first result is a new algorithm for the \cutimprovement problem.

\begin{theorem}
\torestate{
    \label{thm:arv-advice}
    Let $G$ be a regular $n$-vertex graph, let $0<\eps< 1/3$ and let $Q^*\subseteq V(G)$\,, $\vol(Q^*) \leq \vol(G)/2$. 
    There is a polynomial time algorithm that, given $G$ and $Q \subseteq [n]$ such that $\vol(Q \triangle Q^*) \leq \eps^3 \vol(Q^*)\,,$
    returns $\hat{Q}\subseteq [n]$ with $\psi(\hat{Q})\leq (1 + O(\eps))\cdot \psi(Q^*)\,.$
}
\end{theorem}
Compared to \cref{thm:cut-improvement} of Andersen and Lang~\cite{andersen2008algorithm}, this theorem offers a $(1+\eps)$-approximation to the density $\psi(Q^*)$ rather than to the conductance $\phi(Q^*)$, which will save us a factor of 2 in moving between the two.  
The algorithm of Andersen and Lang is based on iterative max flow computations on an augmented graph. In comparison, our algorithm seems conceptually simpler in that it uses the Leighton--Rao linear program with one extra constraint, followed by the simplest ball rounding scheme.
While other algorithms have been proposed for the \cutimprovement problem, such as a spectral algorithm by Mahoney et al \cite{mahoney2009spectral},
we are unaware of other algorithms besides Theorems \ref{thm:cut-improvement} and \ref{thm:arv-advice} that achieve $1+\eps$ approximation to the nearby sparse cut.

\subsubsection{Solution Dimension}
Our second contribution is a new characterization of graphs for which a combination of eigenspace enumeration  with an algorithm for \cutimprovement is an effective approach to \sparsestcut.
We name the parameter controlling the runtime of our spectral enumeration algorithm
the \emph{solution dimension} of the graph.
For a subspace $S\subseteq \R^n\,,$ let $\matPi_S\in \mathbb{R}^{n\times n}$ be the projection onto $S\,,$ let  $C_{\eps}(S) := \{x \in \R^n \, : \, \norm{x} =1, \norm{\matPi_{S} {x}}^2 \geq 1-\eps\}$ be the set of unit vectors near $S\,.$

\begin{definition}[Solution Dimension]\label{def:cut-dimension}
    Let $0\le \eps \le 1\,, c\geq 1\,,$ let $G$ be an
    $n$-vertex graph.
    Let $\matD \in \R^{n \times n}$ be the diagonal matrix whose $i$th entry is $\deg_G(i)$\,.
    Let $\lam_1 \leq \cdots \leq \lam_n$ be the sorted eigenvalues of the normalized Laplacian of $G$ and let $v_1,\dots, v_n \in \R^n$ be the associated eigenvectors.
    For $Q\subseteq [n]$, let $\bar{\mathbf{1}}_Q \in \R^n$ be the projection of $\mathbf{1}_Q$ orthogonal to the trivial eigenvector $v_1$.
    
    The $(\eps,c)$\emph{-solution-dimension} of $G$, denoted by $\CD_{\eps,c}(G)\,,$ is the smallest $k\in [n]$ such that
    there exists $Q \subseteq [n]$ with:
    \begin{enumerate}[(i)]
        \item $\psi_G(Q) \leq c \cdot \psi(G)\,.$
        \item $\matD^{1/2}\bar{\mathbf{1}}_Q / \norm{\matD^{1/2}\bar{\mathbf{1}}_Q} \in C_\eps(\textnormal{span}(v_2, \ldots,v_{k+1}) )$ .\footnote{Because of the centering, it does not matter whether the indicator of $Q$ is represented as a 0/1 vector or a $\pm 1$ vector, and $Q$ and $\bar Q$ are treated equivalently. For regular graphs, $\matD$ can be ignored.}
    \end{enumerate}
    When $c=1$ we simply write $\CD_{\eps}(G)\,.$ 
\end{definition}
\salil{moved the following paragraph from later up to here; it seems like it should precede the later discussion about the number of solutions. please double-check the quantitative bounds in the first line; I moved the 2 into the subscript}
It can be shown that $\CD_{\eps}(G) \leq \mul_{\psi(G)/\eps}(G) \leq \mul_{2\phi(G)/\eps}(G)$; indeed {\em every} sparse cut $Q$ can be approximated in the span of eigenvectors of eigenvalue at most $\psi(G)/\eps$.
(See \cref{sec:threshold-rank-vs-cut-dimensions} for a proof.)
Surprisingly, the $\phi(G)$-threshold-rank does \emph{not} always
match the solution dimension, allowing for potentially large speedups over \cref{thm:threshold-rank}.
The cycle graph is perhaps the most basic example with solution dimension $O(1)$ whereas $\mul_{\phi(G)} = \Theta(\sqrt{n})$, as we compute in \cref{app:cycle-graph}.
Put another way, it suffices for eigenspace enumeration to find a $(1-\eps)$ fraction of a sparse cut, instead of the entire cut,
which can dramatically reduce the runtime of the enumeration.

\salil{next sentence is a rephrasing of what we had before}
Furthermore, while $\mul_{\psi(G)/\eps}(G)$ being small implies that there are few ``distinct'' sparsest cuts (since all of them
are $\eps$-close to a low-dimensional subspace), small solution dimension only requires
that \emph{some} approximate sparsest cut is close to the low eigenspace.  

From this perspective, understanding whether graphs with small solution dimension yet many distinct sparsest cuts exist---and what their structure might be---appears to be a potentially valuable direction for developing new algorithms for \sparsestcut.

Combining \cref{def:cut-dimension} with \cref{thm:arv-advice} we deduce:

\begin{theorem}\torestate{
\label{thm:main-approximate-cut-dimension}
For all $0 \le \eps<1/3\,, c\geq 1\,,$ \sparsestcut admits a $c\cdot (1+O(\eps))$-approximation algorithm in time $n^{O(1)}\cdot \exp\set{\CD_{\eps^3,c}(G)\cdot O(\log 1/\eps)}\,.$
}
\end{theorem}

\subsubsection{Abelian Cayley Graphs}
Our next contribution is to show that using the solution dimension framework gives a dramatic speedup generically for low-degree Cayley graphs over Abelian groups.

\begin{definition}[Cayley Graph]
    \torestate{\label{def:abelian-cayley-graph}
    Let $\Gamma$ be a group and let $S$ be a multiset (called the set of \emph{generators}) from $\Gamma$ such that the multiplicity of $x \in S$ and $-x \in S$ is the same for all $x \in \Gam$. The \emph{Cayley graph} of $\Gamma$ generated by $S$, denoted $\cay(\Gamma, S)$, is the graph with vertex set $\Gamma$ and edge set $\{(v, v\cdot s): v \in \Gamma, s\in S\}$.
    }
\end{definition}

Even when restricted to constant degree,
Cayley graphs provide a simple way to construct graphs with interesting algebraic and geometric properties.
For example, the first expander construction by Margulis
is a 12-regular Cayley graph over the group $SL_3(\Z_p)$ \cite{margulis1973explicit}.

Constant-degree Cayley graphs over Abelian groups, and more generally nilpotent groups, are well-known in computer science and geometric group theory as \emph{non-expanding} graphs which have slow mixing properties.
Degree-$d$ Abelian Cayley graphs satisfy $\phi(G) \leq O(n^{-2/d})$ \cite{klawe1981non, friedman2006spectral},
the Cheeger upper bound is nearly tight due to the Buser inequality $\phi(G) \geq \Omega(\sqrt{\lam_2/d})$ \cite{klartag2016discrete, oveis2021ARV}, the volume of a radius-$r$ ball is at most $O(r^{d/2})$\,,
and they are flat or positively curved using any of several definitions of discrete curvature \cite{curvature_ricci_flatness, klartag2016discrete}.

\medskip

Trevisan posed the question of whether \sparsestcut admits an $O(1)$-approximation in polynomial time on Abelian Cayley graphs \cite{oveis2021ARV}.
We prove that higher-order spectral algorithms can
efficiently solve \sparsestcut on low-degree Abelian Cayley graphs using the solution dimension framework.

\begin{theorem}
    \torestate{\label{cor:apx-sparsest-cut-abelian}
        Let $G = \Cay(\Gam, S)$ be a Cayley graph over an Abelian group $\Gam, |\Gam| = n$ with generating set $S \subseteq \Gam, |S| = d$. There is an algorithm that finds a set $Q\subseteq [n], |Q| \leq n/2$ satisfying $\psi_G(Q)\leq (1 + \eps)\cdot \psi(G)$ in time $n^{O(1)}\cdot \exp\{(d/\eps)^{O(d)}\}\,.$
    }
\end{theorem}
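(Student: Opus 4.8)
The plan is to analyze an \emph{eigenspace enumeration} algorithm: compute the bottom of the spectrum of the normalized Laplacian $\mathcal{L}$ of $G$, and search for a sparse cut inside the corresponding low eigenspace. Two facts drive the analysis. First, for a low-degree Abelian Cayley graph the relevant low eigenspace is \emph{low-dimensional}. Second, because such graphs have non-negative discrete curvature, the near-optimal sparse cut essentially \emph{lives inside} this low eigenspace --- this is the ``mirror image'' of the Buser inequality $\phi(G)=\Theta_d(\sqrt{\lam_2})$, and it is what lets us use a Laplacian threshold of order $\lam_2$ in place of the threshold of order $\phi(G)$ used in the generic threshold-rank algorithm (\cref{thm:threshold-rank}). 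Concretely: since $G=\Cay(\Gam,S)$, $\mathcal{L}$ is diagonalized by the characters $\set{\chi_a : a\in\hat\Gam}$ with eigenvalue $\lam_a=\tfrac1d\sum_{s\in S}(1-\mathrm{Re}\,\chi_a(s))$, and $\lam_2=\min_{a\neq 0}\lam_a$ by vertex-transitivity. Fix a large absolute constant $C$, and set $A=\set{a\in\hat\Gam : \lam_a\le C\lam_2}$, $V=\Span\set{\chi_a : a\in A}$, $r=\card A=\mul_{C\lam_2}(G)$. The algorithm computes an orthonormal basis of $V$, enumerates a family (of size $\exp\{d^{O(d)}\}$) of candidate cuts derived from $V$ --- dually, candidate hyperplane normals for the spectral embedding $x\mapsto(\chi_a(x))_{a\in A}$ --- and for each candidate sweeps over its $\poly(n)$ induced level-set cuts, returning the sparsest one of volume at most $\vol(G)/2$.

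The first main step is the spectral lemma: $r=\mul_{C\lam_2}(G)\le d^{O(d)}$. Writing $\Gam=\Z^d/\Lambda$ via $e_i\mapsto s_i$, the condition $\lam_a\le C\lam_2$ translates (using $1-\cos\theta\asymp\min_{k\in\Z}\card{\theta-2\pi k}^2$) into the statement that the corresponding vector of the dual lattice lies in a ball whose radius is calibrated against $\lam_2$; connectivity of $G$ (equivalently $\langle S\rangle=\Gam$) is exactly what forces $\lam_2$ to be large enough that this ball is bounded, and a covering/volume-packing argument then caps the number of such lattice points by $d^{O(d)}$. This both extends and improves the Lee--Makarychev bound, which only bounded the exact multiplicity of $\lam_2$ by $2^{O(d^2)}$.

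The second main step shows a sparse cut is captured by $V$. Let $Q^\star$ be optimal, and run the heat flow $g:=P_t\1_{Q^\star}$ with $t\asymp 1/(C\lam_2)$, where $P_t=e^{-t\mathcal{L}}$. Then: (a) after enlarging $C$ to a suitable constant, $g$ has, up to any prescribed constant fraction, all its $\ell_2$-mass in $V$, since $\hat g(a)=\hat\1_{Q^\star}(a)e^{-t\lam_a}$ exponentially damps modes with $\lam_a\gg C\lam_2$; (b) $\norm{\nabla g}_1\le\norm{\nabla\1_{Q^\star}}_1=E(Q^\star,\bar Q^\star)$, because on an Abelian Cayley graph $P_t$ is convolution by a probability measure, hence an $\ell_1$-contraction commuting with the edge-difference operator; (c) $\norm{g-\1_{Q^\star}}_1\le\alpha\cdot\vol(Q^\star)/d$ for a small constant $\alpha$ --- this is precisely the Buser estimate, and it is here that non-negative curvature enters and that $t\lesssim 1/\lam_2$ (heat has not yet mixed) is essential. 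Combining (b) and (c), a coarea/threshold argument produces a level set $\{g>\theta^\star\}$ of conductance $O(\phi(G))$ and volume at most $\vol(G)/2$ --- crucially with only a \emph{constant-factor} loss rather than a Cheeger square-root loss, because $g$ remains close to a $\set{0,1}$-valued function. Thus some vector essentially supported on the low-dimensional space $V$ thresholds to an $O(1)$-approximate sparsest cut.

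Finally, the rounding. Because $\dim V=r\le d^{O(d)}$ is independent of $n$, an enumeration of size $\exp\{d^{O(d)}\}$ can be arranged to contain a good proxy for $g$: the subtlety, and the reason the enumeration is over hyperplane directions / short lattice vectors rather than naively over test vectors in $V$, is that $\phi(G)$ may be as small as $n^{-\Omega(1)}$, so one cannot afford an $\ell_2$-net of $V$ fine enough to recover $g$ coordinate-by-coordinate. Instead one exploits that the ``cut direction'' picked out by $g$ is an essentially \emph{short} vector of $\Z^d$, which a constant-precision net of directions already determines exactly; the single remaining degree of freedom --- the offset of the corresponding slab --- is then swept over $\poly(n)$ values, and the coarea guarantee of the previous step certifies that one such slab has conductance $O(\phi(G))$. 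The overall running time is $n^{O(1)}$ (eigenvector computation and level-set evaluations) times $\exp\{d^{O(d)}\}$ (the number of candidates). The main obstacle is the spectral lemma of the first step --- the $d^{O(d)}$ bound on the number of eigenvalues within a constant factor of $\lam_2$, using only $\card S=d$ and $\langle S\rangle=\Gam$ --- with a secondary difficulty in tracking constants through the Buser-based rounding so that the final approximation ratio is an absolute constant independent of $d$, which is what rules out the simpler Cheeger thresholding and forces the heat-flow argument above.
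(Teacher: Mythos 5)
The high-level plan --- bound $\mul_\tau$ for $\tau$ near $\lambda_2$ by $d^{O(d)}$, show via Buser that sparse cuts essentially live in $\low_\tau$, enumerate that low-dimensional space --- matches the paper's proof, which combines \cref{thm:dimension-low}, \cref{thm:bounded-cut-dimension}, and \cref{thm:main-approximate-cut-dimension}. There are, however, two substantive issues plus one change of method.

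\textbf{Different method for Step 1.} The paper proves the multiplicity bound by analyzing the ratio of collision probabilities $\CP_t / \CP_{2t}$ of a lazy random walk (\cref{lem:lower-bound-collision-probability}, \cref{lem:upper-bound-collision-probability}). You instead propose writing $\Gam = \Z^d/\Lambda$, identifying characters with dual lattice points, and counting dual vectors in a ball via packing. This is a genuinely different route and plausibly works (packing gives $(2\sqrt{C}+1)^d$ dual vectors of squared norm within a factor $C$ of the minimum), but you would need to handle the gap between $1-\cos\theta$ and $\theta^2$ and verify that the relevant dual vectors all lie in a region where this is a two-sided equivalence; the paper avoids these lattice-geometry issues entirely.

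\textbf{Your threshold constant is wrong.} You take $A = \{a : \lam_a \leq C\lam_2\}$ with $C$ an \emph{absolute} constant and then claim that after heat flow at time $t\asymp 1/(C\lam_2)$ all but a constant fraction of the $\ell_2$-mass of $\mathbf{1}_{Q^\star}$ lands in $V$. This cannot be right: by \cref{eq:avg-eval}, the spectral mass of $\mathbf{1}_{Q^\star}$ averages to $\phi(Q^\star)\approx\phi(G)$, which by Buser can be as large as $\Theta(\sqrt{d\lam_2})$, so a constant multiple of $\lam_2$ captures nothing. The correct threshold (as in \cref{thm:bounded-cut-dimension}) is $\tau = O(d\,\phi^2(G)/\eps^2) = O(d^2/\eps^2)\cdot\lam_2$, so $C$ must scale like $d^2$. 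This is a real error, though it is benign for the final runtime since $O(d^2)^{O(d)} = d^{O(d)}$ and your packing bound still gives $(2\sqrt{C}+1)^d = d^{O(d)}$.

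\textbf{The rounding step has a genuine gap.} You correctly identify the core difficulty: an $\eps$-net of $V$ at constant granularity yields a vector $w$ with $\|w - \mathbf{1}_{Q^\star}/\|\bar{\mathbf{1}}_{Q^\star}\|\|_2^2 \leq O(\eps)$, but the corresponding threshold cut $R$ can differ from $Q^\star$ on $\Theta(\eps)|Q^\star|$ vertices, which can wreck the conductance since $\phi(G)$ may be $n^{-\Omega(1)}$. Your proposed fix --- that ``the cut direction picked out by $g$ is an essentially short vector of $\Z^d$'' determined exactly by a constant-precision net, so one need only sweep the offset --- does not hold up. The coefficient vector $(\hat{g}(a))_{a\in A}$ depends on $Q^\star$ and is generically a real-valued vector in $\R^{|A|}$ with no reason to be a lattice vector or to define a slab; the level sets of $g = P_t\mathbf{1}_{Q^\star}$ are not in general single-character slabs. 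So your enumeration of short lattice vectors does not provably contain a good cut. This is precisely the problem the paper solves with the ARV-with-advice step (\cref{lem:arv-advice}): after obtaining a coarse threshold cut $R$ with $|R\triangle Q^\star|\leq O(\eps)|R|$, it solves the degree-4 SoS/ARV relaxation of \sparsestcut with the extra constraint $\sum_{i\in R}(1-\xsos_i)^2 + \sum_{i\notin R}\xsos_i^2 \leq \eps|R|$, and rounds with a ball rounding whose analysis gains an $O(1)$ (not $O(\sqrt{\log n})$) separation guarantee from the advice constraint. That algorithmic component is where the bulk of \cref{sec:algorithm} is spent, and your proposal is missing it.
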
 
\salil{added next two paragraphs}
Recall that, because of the aforementioned Buser Inequality, Fiedler's algorithm gives an $O(\sqrt{d})$ approximation algorithm on degree $d$ Abelian Cayley graphs.  In contrast, \cref{cor:apx-sparsest-cut-abelian} gives a $(1+\eps)$ approximation ratio.  The price is that runtime is doubly exponential in the degree $d$.  Indeed, we obtain a PTAS for
degree $d=o(\log\log n/\log\log\log n)$ and a subexponential-time approximation scheme for $d = o(\log n/\log \log n)$.

Every finite abelian group $\Gam$ is isomorphic to a product of cyclic groups $\Gam\cong \Z_{n_1}\times\cdots\times \Z_{n_k}$.  The minimum number $k$ of cyclic factors in such a decomposition is also the minimum size of a generating set $S$ for $\Gam$.  Thus our algorithm is useful for groups $\Gamma$ that can be decomposed into $o(\log n/\log \log n)$ cyclic factors (such as cyclic groups). It is not applicable to $\Z_2^k$, since that requires $d\geq k=\log n$ generators.

To prove \cref{cor:apx-sparsest-cut-abelian} from \cref{thm:main-approximate-cut-dimension}, we bound the solution dimension of Abelian Cayley graphs by analyzing both their sparse cuts
and their low eigenvectors.
First we prove a generic bound on the multiplicity of eigenvalues near $\lam_2$ in terms of a notion of volume growth.

\begin{theorem}\torestate{\label{thm:dimension-low}
     Let $G$ be a regular graph. For every $\lam_2\leq \tau\leq \tfrac{3}{2}$,
    \begin{align*}
        \mul_\tau(G)\leq O\left(\frac{\tau}{\lambda_2}\right)^{\log \gam^{\CP}_G}
    \end{align*}
    where $\gam_G^{\CP} = \max_{t \geq 0} \frac{\CP_t}{\CP_{2t}}$ and $\CP_t$ is the return probability of a $2t$-step lazy random walk (\cref{def:collision-probability}).
    }
\end{theorem}

The quantity $\gam_G^{\CP}$ is bounded by $2^{O(d)}$ for degree-$d$ Abelian Cayley graphs, as we prove, which yields a bound of $2^{O(d)}$ on the multiplicity of the $\lam_2$ eigenvalue itself (by plugging in $\tau = \lam_2$). This
improves a previous bound of $2^{O(d^2)}$ proven by Lee and Makaryachev~\cite{lee2008eigenvalue}. \salil{added the which clause, seems important to repeat here}
Moreover, our bound is robust, in that we also bound the number of eigenvalues that are at most $\tau=c\cdot\lam_2$ by $O(c)^{O(d)}$.

The quantity $\log \gam_G^{\CP}$ appearing in \cref{thm:dimension-low} is essentially the rate of the \emph{cogrowth sequence}
from the geometric group theory literature \cite{pak2022algebraic}.
Obtaining bounds on eigenvalue multiplicity is also of independent interest in spectral graph theory, due in part to
a recent work by Jiang \etal which resolved a longstanding open question in geometry
about the maximum number of equiangular lines in $\R^d$\,, using a key insight
that the maximum multiplicity of $\lam_2$ in a graph with maximum degree $d$ is $o_d(n)$ \cite{jiang2021equiangular, mckenzie2021support, jiang2023spherical}.

For Abelian Cayley graphs with degree $d$ which is equal to a multiple of $\log n$, our bound meets the trivial upper bound $\mul_{\lam_2} \leq n$.
We observe that this upper bound is tight up to the constant factor in the exponent, since there exist Cayley graphs over $\Z^k_2$ coming from linear error-correcting codes which have $d = O(\log n)$ and eigenvalue multiplicity $n^{\Omega(1)}$. We prove this in \cref{app:codes}. 

\medskip

The second key ingredient behind \cref{cor:apx-sparsest-cut-abelian} is a bound on sparse cuts in low-degree Abelian Cayley graphs, established by the following theorem.

\begin{theorem}\label{thm:bounded-cut-dimension}
    Let $G = \Cay(\Gam, S)$ be a Cayley graph over an Abelian group $\Gam, |\Gam| = n$ with generating set $S \subseteq \Gam, |S| = d$.
    Then $\CD_{\eps}(G) \leq \mul_{\tau} (G)$ for $\tau = O(d\cdot \phi^2(G) /\eps^2)\,.$
\end{theorem}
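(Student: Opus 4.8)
The plan is to show that the centered indicator of a sparsest cut already lies almost entirely within the span of the bottom $\mul_\tau(G)$ eigenvectors of the normalized Laplacian $L$, for $\tau=O(d\,\phi(G)^2/\eps^2)$. Fix $Q\subseteq[n]$ with $|Q|\le n/2$ attaining $\psi_G(Q)=\psi(G)$, so condition (i) of \cref{def:cut-dimension} holds with $c=1$; since $G$ is regular, $\psi$ and $\phi$ agree up to a factor of $2$, hence $E(Q,\bar Q)=\Theta(d\,|Q|\,\phi(G))$. Write $f:=\bar{\mathbf 1}_Q$, expand $f=\sum_{\xi\neq 0}\widehat f(\xi)\chi_\xi$ in the characters of $\Gam$, and set $a_\xi:=|\widehat f(\xi)|^2/\|f\|^2$ so that $\sum_{\xi\neq 0}a_\xi=1$; let $\lam_\xi\in(0,2]$ be the Laplacian eigenvalue of $\chi_\xi$. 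The bottom $\mul_\tau(G)$ eigenvectors of $L$ are exactly the characters with $\lam_\xi\le\tau$, so condition (ii) for $k=\mul_\tau(G)$ is equivalent to $\sum_{\xi:\lam_\xi>\tau}a_\xi\le\eps$. A Markov bound off the Rayleigh quotient $\sum_\xi\lam_\xi a_\xi=\langle f,Lf\rangle/\|f\|^2=O(\phi(G))$ only gives $\sum_{\xi:\lam_\xi>\tau}a_\xi=O(\phi(G)/\tau)$, i.e. $\tau=O(\phi(G)/\eps)$, which is far too weak (for the cycle it would yield $\CD_\eps(G)=\Omega(\sqrt n)$). The crux is therefore a sharper spectral-tail estimate exploiting both the Booleanity of $Q$ and the bounded degree.

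Concretely, I would prove the heat-kernel estimate: for $P_t:=e^{-tL}$ and all $t\ge 0$,
\[
\langle f,(I-P_t)f\rangle\;\le\;O\!\bigl(\phi(G)\sqrt{t\,d}\bigr)\,\|f\|^2 .
\]
Since $\Cay(\Gam,S)$ is a convolution graph, $\langle\mathbf 1_Q,(I-P_t)\mathbf 1_Q\rangle=\sum_{v\in\Gam}q_t(v)\,b(v)$, where $q_t$ is the heat-kernel density on $\Gam$—the law of the endpoint of the walk that takes $\Poisson(t)$ steps, each adding a uniform element of $S$—and $b(v):=|Q\setminus(Q+v)|$. The function $b$ is a length function on $\Gam$: symmetric, $b(0)=0$, subadditive ($b(u+v)\le b(u)+b(v)$, since $Q\setminus(Q+u+v)\subseteq(Q\setminus(Q+v))\cup\bigl((Q\setminus(Q+u))+v\bigr)$), and $\sum_{s\in S}b(s)=E(Q,\bar Q)$. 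Split $S$ into $\pm$-pairs $\{s_j,-s_j\}$ (of order $>2$) together with the set $S_2$ of order-$2$ generators, and write the walk's endpoint as $v=\sum_j c_j s_j+\sum_{s\in S_2}c_s s$ with $c_j\in\Z$ the net count of $s_j$-steps and $c_s\in\Z_{\ge 0}$ the count of $s$-steps; subadditivity and $b(cs)\le|c|\,b(s)$ (with $2s=0$ on $S_2$) give $b(v)\le\sum_j|c_j|\,b(s_j)+\sum_{s\in S_2}b(s)\,\mathbf 1[c_s\text{ odd}]$. Under $q_t$ the step counts are independent $\Poisson(t/d)$ variables, so each $c_j$ is a centered difference of two such (hence $\E|c_j|\le\sqrt{2t/d}$) and $\Pr[c_s\text{ odd}]=\tfrac12(1-e^{-2t/d})\le\min(1,t/d)$. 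Summing, using $\sum_j b(s_j),\ \sum_{s\in S_2}b(s)\le E(Q,\bar Q)$ and $\min(1,t/d)\le\sqrt{t/d}$, yields $\E_{v\sim q_t}[b(v)]\le O(\sqrt{t/d})\,E(Q,\bar Q)=O(\phi(G)\sqrt{t\,d})\,\|f\|^2$, where the last step uses $E(Q,\bar Q)=\Theta(d|Q|\phi(G))$ and $\|f\|^2=\Theta(|Q|)$.

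Given the estimate, take $t:=1/\tau$; for every $\xi$ with $\lam_\xi>\tau$ we have $1-e^{-t\lam_\xi}>1-e^{-1}>\tfrac12$, so
\[
\sum_{\xi:\lam_\xi>\tau}a_\xi\;\le\;2\,\frac{\langle f,(I-P_t)f\rangle}{\|f\|^2}\;\le\;O\!\bigl(\phi(G)\sqrt{d/\tau}\bigr),
\]
which is at most $\eps$ once $\tau\ge C\,d\,\phi(G)^2/\eps^2$ for an absolute constant $C$; by the Buser inequality for Abelian Cayley graphs ($\lam_2\le O(d\,\phi(G)^2)$) we may enlarge $C$ so that also $\tau\ge\lam_2$, making $\mul_\tau(G)\ge 1$ and the statement non-vacuous. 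Hence $\bar{\mathbf 1}_Q/\|\bar{\mathbf 1}_Q\|\in C_\eps\bigl(\Span(v_1,\dots,v_{\mul_\tau(G)})\bigr)$, so $Q$ witnesses $\CD_\eps(G)\le\mul_\tau(G)$ with $\tau=O(d\,\phi(G)^2/\eps^2)$. I expect the heat-kernel estimate—in particular squeezing out the $\sqrt{t\,d}$ rather than the trivial $t$ dependence—to be the main obstacle: this is exactly where the $\phi^2$ (not $\phi$) in the final bound is created, where bounded degree enters (the walk spreads at rate $\sqrt{t\,d}$, not $t$, because of the cancellation within $\pm$-pairs, a reflection of the nonnegative curvature of Abelian Cayley graphs), and where the order-$2$ generators force the separate but ultimately harmless bookkeeping above.
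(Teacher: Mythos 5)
Your proof is correct and takes essentially the same route as the paper. The paper proves the discrete-time bound $\phi_{G^{2t}}(Q)\le 2\sqrt{td}\cdot\phi_G(Q)$ (the Oveis Gharan--Trevisan combinatorial Buser lemma), rewrites it spectrally as $\tfrac1{|Q|}\sum_i q_i^2\bigl(1-(1-\lam_i)^{2t}\bigr)\le 2\sqrt{td}\,\phi_G(Q)$, and concludes with the same Markov-type tail bound at $t\approx 1/\tau$; your continuous-time heat-kernel estimate and the subadditive length function $b(v)=|Q\setminus(Q+v)|$ are a Poissonized repackaging of the identical combinatorial insight (pair $\pm s$, use cancellation to get $\E|c_j|\lesssim\sqrt{t/d}$ rather than $t/d$), including the explicit bookkeeping for order-$2$ generators that the paper relegates to a footnote.
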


Combining \cref{thm:dimension-low} and \cref{thm:bounded-cut-dimension} with the Cheeger inequality $\phi^2(G) \leq O(\lam_2)$ returns the bound $\CD_\eps(G) \leq (d/\varepsilon^2)^{O(d)}\,$ on the solution dimension.
This implies the final result in \cref{cor:apx-sparsest-cut-abelian}.
When $d \cdot \phi^2(G) \ll \phi(G)$ this theorem can lead to a large gap between $\CD_\eps(G)$ and $\mul_{\phi(G)}$ and consequently to the large speedup we observe over \cref{thm:threshold-rank}.

In fact, our proof shows a stronger result that \emph{all} of the sparsest cuts are $(1-\eps)$-contained in the span of the first $d^{O(d)}$ eigenvectors.
That is, all of the sparsest cuts are $(1-\eps)$-close to a hyperplane cut
in the $d^{O(d)}$ dimensional spectral embedding of $G$.
This upper bound on the number of sparse cuts is likely of independent interest.
In particular, a line of recent work \cite{bafna2021playing, bafna2023solving} shows how to solve \uniquegames instances over graphs with ``certifiable'' upper bounds on the number of solutions.
We consider it likely that our results can therefore be extended to solve \uniquegames instances in polynomial time on constant-degree Abelian Cayley graphs.\footnote{Bafna and Minzer \cite{bafna2023solving} informally define a ``globally hypercontractive graph'' to be one with a succinct and algorithmic characterization of its small non-expanding sets. Our result informally shows that low-degree Abelian Cayley graphs have this property, although the formal definitions do not exactly meet.}

\medskip
Regarding Cayley graphs of larger degree, $d = \Omega(\log n)$ appears to be a natural threshold for our analysis. Indeed, all Abelian Cayley graphs with $o(\log n)$ generators have $o(1)$ expansion by the bound $\phi(G) \leq O(n^{-2/d})$,
whereas a random Abelian Cayley graph with $2 \log n$ generators
will be an expander with high probability \cite{alon1994random}. 

Notably, Cayley graphs over $\Z_p^k$ with $p = O(1)\,,$ which may have small solution dimension, fall outside the range of structural results above, since they necessarily have $d \geq \Omega(\log n)$ in order to be connected.
Nonetheless, Cayley graphs over $\Z_p^k$ in fact admit an $O(p)$-approximation algorithm to \sparsestcut. This follows from \cite{improved_cheeger}, but we give a simpler algorithm and proof in \cref{prop:sparse_cut_vector_space}.

\medskip
In light of the results above, we conjecture that the class of Abelian Cayley graphs
does not contain hard examples for \sparsestcut:

\begin{conjecture}
    Let $G =\cay(\Gam, S)$ be a Cayley graph over an Abelian group $\Gam$ of size $n$. There is an algorithm that finds a set $Q\subseteq [n], |Q| \leq n/2$ satisfying $\phi_G(Q)\leq O(\phi(G))$ in time $n^{O(1)}$.
\end{conjecture}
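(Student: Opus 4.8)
Since this statement is a conjecture rather than a result proved in the paper, what follows is the research program I would pursue to resolve it, built on the machinery developed above. The plan is to reduce the conjecture to a single \emph{structural} statement — that the cut dimension of an Abelian Cayley graph is bounded independently of the degree — and then invoke \cref{thm:main-approximate-cut-dimension}.

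\textbf{Step 1: Target bound on the cut dimension.} The goal would be to show that for every constant $\eps>0$ there is a constant $K(\eps)$, independent of both $n$ and $d$, with $\CD_\eps(G)\leq K(\eps)$ for every Abelian Cayley graph $G$. Given this, \cref{thm:main-approximate-cut-dimension} produces a set $Q$ with $\psi_G(Q)\leq (1+O(\sqrt\eps))\,\psi(G)$ in time $n^{O(1)}\cdot\exp\{O(K(\eps))\}=n^{O(1)}$, and the standard sparsity-to-conductance reduction \cite{arora2009expander} upgrades this to the conjectured $O(1)$-approximation for $\phi$. Two data points make $\CD_\eps=O_\eps(1)$ plausible: for the cycle $\Z_n$ one has $\CD_\eps(G)=O(1)$ (sparsest cuts are essentially arcs, i.e.\ low-frequency threshold sets), while for $\Z_2^n$ one has $\CD_0(G)=1$, since the Cheeger lower bound is exact there and a single character is already a sparsest cut even though $\mul_{\lam_2}$ can be $n^{\Omega(1)}$ by \cref{prop:eig_val_lower_bound}.

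\textbf{Step 2: From multiplicity to cut dimension, uniformly in $d$.} \cref{thm:bounded-cut-dimension} reduces $\CD_\eps(G)$ to $\mul_\tau(G)$ at scale $\tau=O(d\,\phi^2(G)/\eps^2)$, and \cref{thm:dimension-low} bounds this by $O(\tau/\lam_2)^{20d}=(d/\eps^2)^{O(d)}$ using $\phi(G)\leq\sqrt{2\lam_2}$; this is $O_\eps(1)$ only for constant $d$. To remove the dependence on $d$ I would sharpen the analysis at the \emph{final} scale: by vertex-transitivity one may average $\mathbf 1_{Q+g}$ over translates $g\in\Gam$, after which a near-minimizer $Q$ has its Fourier mass supported on characters $\chi$ with $\tfrac1d\sum_{s\in S}\Re\chi(s)$ extremely close to $1$; the aim is to prove that this mass, rather than the full low eigenspace, lives in an $O_\eps(1)$-dimensional subspace. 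Concretely, one would iterate the dimension-reduction step of \cref{thm:dimension-low} down to scale $\approx\phi(G)$ while tracking not the dimension of the eigenspace but the dimension needed to capture a $(1-\eps)$ fraction of $\|\bar{\mathbf 1}_Q\|^2$, hoping that the per-step multiplicative blow-up of $\poly(d)$ is offset because only $O(\log(1/\phi(G))/\log d)$ steps are needed and the slack $\eps$ can be spent generously at each level.

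\textbf{Step 3 (alternative route): a structure theorem for sparse cuts.} A complementary and possibly cleaner approach is to classify near-sparsest cuts directly. Writing $\Gam=\prod_i\Z_{q_i}$ and exploiting that non-negatively-curved Cayley graphs behave product-like at the scale of their sparsest cut, one would try to show that every $Q$ with $\phi_G(Q)=O(\phi(G))$ is, up to $o(\vol(Q))$ symmetric difference, either (a) a union of cosets of a proper subgroup $H\leq\Gam$ — in which case one contracts $H$ and recurses on $\Cay(\Gam/H,\bar S)$ — or (b) a ``generalized interval'', the preimage under a homomorphism $\Gam\to\Z$ of an arithmetic-progression-type set (the $\Z_n$ behavior). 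Either case yields a polynomial-size list of candidate cuts to enumerate, with the base case being the expander regime where Fiedler/Cheeger already gives an $O(1)$-approximation. Proving such a dichotomy would presumably go through an isoperimetric stability result refining the Buser inequality \cite{klartag2016discrete, oveis2021ARV} to a near-equality characterization.

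\textbf{Main obstacle.} The crux is the ``medium-degree'' window $d=\Theta(\log n)$: there $\exp\{d^{O(d)}\}$ is super-polynomial, the $2^{O(d)}$ bound of \cref{cor:eigval_mult} is only $n^{O(1)}$ and is tight by \cref{prop:eig_val_lower_bound}, and a random generating set already produces an expander — so the algorithm must detect that it is ``close to'' an expander and fall back on Cheeger while still handling the non-expanding graphs in the same window. I expect that controlling the cut dimension (or establishing the structure theorem) uniformly across this window, rather than paying $d^{O(d)}$, is the real difficulty, and it is not clear that the self-contained Fourier-analytic method of \cref{thm:dimension-low} extends that far without a genuinely new ingredient — perhaps a sum-of-squares certificate of small ``cut entropy'' in the spirit of \cite{bafna2021playing, bafna2023solving}, or an $O(1)$-distortion $\ell_1$-embedding of the shortest-path metric of an Abelian Cayley graph (which would settle the conjecture at once via the Leighton--Rao LP \cite{leighton1999multicommodity}, but is itself open even for planar graphs).
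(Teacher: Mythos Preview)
This statement is a conjecture that the paper explicitly leaves open; there is no proof in the paper to compare against. You correctly recognize this and present a research program rather than a claimed proof, which is the appropriate response.

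Your program is well-aligned with the paper's own discussion. The reduction in Step~1 to a uniform bound on $\CD_\eps(G)$ is exactly the right target given \cref{thm:main-approximate-cut-dimension}, and your identification of the $d=\Theta(\log n)$ window as the crux matches the paper's remarks following the conjecture (where the Grassmann-like graphs and ``expander-like'' non-expanders are singled out). The suggestion to track the mass of $\bar{\mathbf 1}_Q$ rather than the full eigenspace dimension is a natural refinement, and the pointer to \cite{bafna2021playing, bafna2023solving} for a certified cut-entropy bound echoes the paper's own speculation. One small correction: for Cayley graphs over $\Z_2^n$ the cut dimension is $2$, not $1$, since $v_1=\mathbf 1/\sqrt n$ is always included in the span and $\bar{\mathbf 1}_Q\perp v_1$; this does not affect your argument.

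The honest caveat is that none of Steps~2--3 currently comes with a concrete mechanism: the hope that the $\poly(d)$ blow-up per doubling step is ``offset'' by having only $O(\log(1/\phi)/\log d)$ steps is not obviously true (the paper's \cref{thm:dimension-low} already optimizes the step count and still pays $d^{O(d)}$), and the proposed subgroup/interval dichotomy in Step~3 would be a substantial structural result in its own right. These are reasonable directions to explore, but as written they are aspirations rather than a plan with identifiable intermediate lemmas --- which is consistent with the conjecture's status as open.
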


A subexponential time algorithm would already be an interesting result,
considering that there is a subexponential time algorithm for \uniquegames \cite{arora2015subexponential}, but efforts to lift this algorithm back to \sparsestcut have not yet succeeded.

\subsection*{Organization}
The rest of the paper is organized as follows. In \cref{sec:techniques} we present the high level ideas behind our results. In \cref{sec:preliminaries} we introduce preliminary notions and definitions.  \cref{sec:arv-advice} contains the proof of \cref{thm:arv-advice}. \cref{sec:low-eigenspace} contains the proof of \cref{thm:dimension-low},  the matching lower bound is proved in \cref{app:codes}. \cref{sec:sparsest-cut-live-low-eigenspace} contains the proof of \cref{thm:bounded-cut-dimension}. 
In \cref{app:sparsest-cut-vector-space} we consider the restricted case of Cayley graphs over $\Z^k_p$.
The appendices contain results which flesh out the exposition.

\section{Techniques}\label{sec:techniques}
We present here the main ideas behind our results. 

\subsubsection*{Cut improvement (Theorem \ref{thm:arv-advice}) and eigenspace enumeration (Theorem~\ref{thm:main-approximate-cut-dimension})}

The starting point for \cref{thm:arv-advice} is the natural LP relaxation for \sparsestcut studied in \cite{leighton1999multicommodity} (see \cref{sec:arv-advice} for the precise definition of the program). This LP relaxation assigns lengths to edges such that the average distance between all pairs of vertices is fixed, while the average edge length is minimized.
The pairwise distances are constrained to form an $n$-point semi-metric $d(\cdot,\cdot)\,:V(G)\times V(G)\to\R\,$.   
The Leighton-Rao rounding procedure then computes a randomized line embedding $f:V(G)\to\R$ that is $1$-Lipschitz and has small average distortion with respect to this metric:
\begin{align*}
     \Omega(1/\log n)\sum_{\ij \in V(G)} d(i,j) \leq \sum_{\ij \in V(G)} \abs{f(i)-f(j)}\leq \sum_{\ij \in V(G)} d(i,j)\,.
\end{align*}
Finally, the line embedding can be easily converted into a cut, yielding a cut with sparsity at most $O(\log n)\cdot\psi(G)\,.$ 
The improvement of \cite{arora2009expander} comes from the fact that their program returns a metric of negative type and, for such metrics, one can efficiently construct a $1$-Lipschitz line embedding with average distortion at most $O(\sqrt{\log n})\,.$
Hence, natural paths to improve over these results may require better constructions of the metric and the line embedding,
which appears challenging and may be computationally hard in general \cite{raghavendra2012reductions}.

Our key insight is that, if we enforce solutions to be close to a given partition $(Q, V(G) \setminus Q)$ in the sense that:
\begin{align}\label{eq:constraint}
    \sum_{\ij \in Q}d(i,j)+\sum_{\ij \notin Q}d(i,j) \leq \eps |Q| \cdot |V(G) \setminus Q|\,,
\end{align}
then any feasible metric must have two ``well-separated'' sets of size at least $(1 - \eps^{\Omega(1)})|Q|$ and $(1 - \eps^{\Omega(1)})|V(G) \setminus Q|$ which are at distance at least $1 - \eps^{\Omega(1)}\,.$
Any such feasible metric admits an efficiently computable line embedding with distortion at most $1+\eps^{\Omega(1)}$ which can then be easily rounded into an integral solution.
In this approach, all feasible solutions must have large overlap with $(Q, V(G)\setminus Q)$ and hence the overall quality of the output depends on the starting point $Q$. 

Finally, to obtain \cref{thm:main-approximate-cut-dimension}, we simply initialize the program at a good $Q$, by finding in time $\exp\set{\CD_{\eps}(G)}$ a partition $(Q, V(G)\setminus Q)$ that differs from a sparsest cut in at most an $\eps$-fraction of the vertices. For this choice of the constraint \cref{eq:constraint} we can simultaneously ensure that the returned metric is easily roundable \textit{and} that there is a nearby optimal integral solution which we $(1 + \eps^{\Omega(1)})$-approximate. 

\subsubsection*{Eigenvalue multiplicity (Theorem~\ref{thm:dimension-low}) and slow decay of collision probability}

The common approach to bounding the eigenvalue multiplicity of graph Laplacians boils down to relating the local volume growth of induced subgraphs with the spectrum of the whole graph \cite{jiang2021equiangular, lee2008eigenvalue, mckenzie2021support}. We remark that weaker bounds are also immediate consequences of higher order Cheeger inequalities \cite{louis2012many, lee2014multiway}.

Limiting our discussion to $\mul_{\lambda_2}(G)\,,$ in the context of  Abelian Cayley graphs the most relevant work is by Lee and Makarychev \cite{lee2008eigenvalue}. Their notion of volume growth is the \textit{doubling constant}: $\gamma_G:=\max_{t\geq 0}|B(2t)|/|B(t)|$, where $B(t)$ is the ball of radius $t$ about the identity element of $\Gamma$. By vertex transitivity the choice of the center of the ball is inconsequential.
The importance of the doubling constant stems from the observation that it can be used to bound the packing number $\cN(t)$ of the graph $G\,.$
Building on a series of existing results \cite{colding1997harmonic, kleiner2010new, gupta2003bounded}, Lee and Makarychev use this observation to establish 
\begin{align*}
    \mul_{\lambda_2}(G)\leq \cN\left(1/(\gamma_G^{O(1)}\cdot\sqrt{\lambda_2})\right)\leq \gamma_G^{O(\log\gamma_G)}\,,
\end{align*}
where the second step leveraged the inequality $\lambda_2\leq O(\log(\gamma_G)/\diam(G))^2\,.$ 
Because Abelian Cayley graphs are known to have polynomial growth $\gamma_G\leq 2^{O(d)}$ \cite{diaconis1994moderate}, this allows them to conclude $\mul_{\lambda_2}(G)\leq 2^{O(d^2)}\,.$

\bigskip

We take a much more direct (and arguably significantly simpler) approach which directly relates spectral and combinatorial quantities.
The starting point is the notion of $t$-step collision probability $\CP_t\,.$
The $t$-step collision probability equals the probability that two independent walks of length $t$ have the same endpoint when starting from the same vertex, sampled from the stationary distribution of $G$.\footnote{A technical detail is that our analysis uses \textit{lazy} random walks.} Thus $1/\CP_t$ serves as a probabilistic analog of the size of the ball $B(t)$ (a similar notion was also used in \cite{mckenzie2021support}).
This motivates us to consider the following ``smooth'' version of the doubling constant: $$\gamma^{CP}_G =\max_{t\geq 0}\frac{\CP_t}{\CP_{2t}}.$$ 

An important fact is that the collision probability  equals the average of the eigenvalues of the adjacency matrix of $G^{2t}\,,$ the ${2t}$-th power of our starting graph. That is, we have for $t\geq 0\,,$
\begin{align}\label{eq:techniques-collision-probability}
    \CP_t=\tfrac{1}{n}\sum_{i=1}^n(1-\lambda_i)^{2t}\,.
\end{align}
This is the $2t$-th norm of the eigenvalues, also called the $2t$-th Schatten norm of the adjacency matrix of $G\,.$
The crucial consequence of the expressiveness of \cref{eq:techniques-collision-probability} is that the collision probability ratio is then entirely captured by the spectrum of $G\,:$ \begin{align}\label{eq:techniques-collision-probability-ratio}
    \frac{\CP_t}{\CP_{2t}}=\frac{\sum_{i=1}^n(1-\lambda_i)^{2t}}{\sum_{i=1}^n(1-\lambda_i)^{4t}}\,.
\end{align}
And now the key insight is that for an appropriate choice of $t=\Theta(\log\mul_{\lambda_2}(G)/\lambda_2)\,,$ \cref{eq:techniques-collision-probability-ratio} relates the collision probability ratio, the multiplicity of $\lambda_2$ and the eigenvalues of $G\,:$
\begin{align*}
    \mul_{\lambda_2}(G)^{\Omega(1)}\leq \frac{\CP_t}{\CP_{2t}}\leq \gamma^{CP}_G\,.
\end{align*}
\tom{I think the following should be ok}
The observation that, by taking  longer random walks, we can learn critical information about the spectrum and avoid constructing a large packing, is the fundamental improvement over \cite{lee2008eigenvalue}.
Our inequality is valid for all regular graphs and hence immediately implies a bound on the eigenvalue multiplicities for all graphs that satisfy a bound on the collision probability ratio.

\smallskip

Our second key contribution is an upper bound on $\gamma^{CP}_G$ for all Abelian Cayley graphs. To prove this statement, our challenge is to relate walks (not balls) of length $t$ to those of length $2t$ in an Abelian Cayley graph.
First, we can see that in an Abelian Cayley graph, the endpoint of a walk is completely determined by how many times each generator is chosen, and not on their order.
Therefore, we may replace the random walk by a random draw from a multinomial distribution, using $t$ samples of $d$ items each occurring with probability $1/d$.

From here we can directly relate the multinomial density for $t$ to that of $2t$.

The binomial densities for $t$ and $2t$ are  approximately Gaussians $X$ and $Y,$ where $Y$ has twice variance of $X$,
leading to a direct comparison inequality on the probability density functions, $p_Y(x) \le 2 p_X(x)$ for all $x \in \R$.
For $d$ generators, this idea can be extended (with additional arguments) to obtain a comparison equality with factor $2^{O(d)}$ which then implies 
\begin{align}\label{eq:techniques-eig-multiplicity}
    \mul_{\lambda_2}(G)^{\Omega(1)}\leq \gam^{CP}_G \leq 2^{O(d)}\,.
\end{align}

The bound in \cref{eq:techniques-eig-multiplicity} is best possible up to the constant in the exponent: there exist Abelian Cayley graphs $\Cay(\Z^k_2, S)$ with $\mul_{\lambda_2}\geq 2^{\Omega(d)}$. The construction of such graphs comes from a characterization of eigenvalue multiplicity in Cayley graphs over $\Z^k_2$ in terms of binary linear codes. Every Cayley graph $\Cay(\Z^k_2, S)$ corresponds to a binary linear code of dimension $k$ and block length $|S| = d$ with the multiplicity of $\lambda_2$ corresponding to the number of code words of minimum weight. Thus, the problem of constructing Abelian Cayley graphs with large eigenvalue multiplicity is equivalent to constructing binary linear codes with many code words of minimum weight. Using algebraic geometry codes, Ashikhmin, Barg, Vl\u adu\k t \cite{ashikhmin2001linear} constructed a family of binary linear codes with linear rate and distance, and with an exponential number of minimum-weight codewords, which furnishes the desired construction.

\subsubsection*{Sparse cuts are approximately low-dimensional (Theorem~\ref{thm:bounded-cut-dimension})}

To show that a partition $(Q, V(G)\setminus Q)$ in a graph $G$ approximately lives in the span of the first few eigenvectors, we observe that it suffices to relate its conductance in $G$, with its conductance in $G^{2t}$ in the sense:
\begin{align}\label{eq:techniques-expansion-growth}
    \phi_{G^{2t}}(Q) \leq \sqrt{t}\cdot \ell \cdot \phi_G(Q),
\end{align}
for some sufficiently large $t>0$ and small $\ell>0\,.$
Recall now that the expansion can be expressed spectrally as the Rayleigh quotient of the vector $\mathbf{1}_Q$. Let $\mathbf{1}_Q=\sum_{i\geq 1}q_i v_i$ be the representation of the indicator vector of the set $Q$ in the eigenbasis $v_1,\ldots, v_n$ of the graph. Then the above argument implies
\begin{align*}
    \tfrac{1}{|Q|}\sum_{i = 1}^n q_i^2\cdot (1-(1-\lambda_i)^{2t}) \leq \sqrt{t}\cdot \ell \cdot \phi_G(Q)\,.
\end{align*}
From this inequality, simple manipulations show that with the choice $t = \Theta(\eps^2\ell^{-1}\phi^{-2})$ all but $\eps$ of the spectral mass of $\bar{\mathbf{1}}_Q$ must be contained on eigenvalues up to $1/t$.

To establish a version of \cref{eq:techniques-expansion-growth} for Abelian Cayley graphs,
we leverage the analysis of Buser inequality \cite{klartag2016discrete, oveis2021ARV}.
The proof by Oveis Gharan and Trevisan \cite{oveis2021ARV} studies the probability of a length-$2t$ random walk $X_0,\ldots,X_{2t}$ crossing an arbitrary cut $Q$.
This measures the expansion of $Q$ in the graph $G^{2t}$ and their combinatorial analysis proves that \cref{eq:techniques-expansion-growth} holds for $\ell=\Theta(\sqrt{d})\,.$
As before, the crucial property of Abelian Cayley graphs being used is that it suffices to count how many times each generator is used by the walk.
In the proof of this inequality, it is also important that generators $g$ and $-g$ cancel out, which allows to bound the expected number of steps in the direction of generator $g$ by only $O(\sqrt{t/d})$.
Finally, because this argument does not rely on the specific cut $Q\,,$ we immediately obtain that for Abelian Cayley graphs all sparsest cut approximately live in the span of the first $O(\phi^2\cdot d/\eps^2)$ eigenvectors.

\section{Preliminaries}\label{sec:preliminaries}

We establish the notation used throughout the paper along with some preliminary notions.

The norm $\norm{\cdot}$ is the $\el_2$ norm.
For a subspace $S\subseteq \R^n\,,$ we denote by $\matPi_S$ the orthogonal projection onto $S.$ 
For a set $Q\subseteq [n]$ we denote by $\mathbf{1}_Q \in \{0,1\}^n$ its indicator vector.

In this paper, we study undirected graphs which may have self-loops or multiedges, which we refer to as graphs. We always use $n$ to denote the number of vertices in a graph $G$
and we assume $V(G) = [n]$.
We denote by $\matA(G)\in \R^{n\times n}$ the normalized adjacency matrix,
$$\matA(G)_\ij=\begin{cases}\tfrac{1}{\sqrt{\deg_G(i)\deg_G(j)}}&\text{ if $\ij \in E(G)$}\\0&\text{ otherwise\,,}\end{cases}$$
where $\deg_G(i)$ is the degree of $i$ in $G$ (counting a self-loop as degree 1).
When the graph is regular we use $d$ to denote its degree.
In some of these definitions we may omit $G$ when the context is clear.

Let $\matD(G)$ be the diagonal matrix with entries $\deg_G(i)$.
The normalized Laplacian of $G$ is defined as $\matL(G)=\Id -\matA(G)\,.$

The eigenvalues of a graph $G$ are the eigenvalues of its normalized Laplacian $\matL(G)$ and are denoted $0=\lambda_1(G)\leq \lambda_2(G)\leq\ldots\leq \lambda_n(G)$ to the associated eigenvectors $v_1(G),\ldots,v_n(G)$. We use $1=\alpha_1(G)\geq \ldots\geq \alpha_n(G)$ for the eigenvalues of the normalized adjacency matrix, in descending order. Note that for all $i\in[n]$, $\alpha_i(G) = 1-\lambda_i(G)$.
When $G$ is regular, $\matA(G)$ equals $\matW(G)$ which denotes the transition matrix of the simple random walk on $G$. When $G$ is not regular, $\matA(G)$ and $\matW(G)$ are similar (under conjugation by $\matD^{1/2}$) so they still have the same eigenvalues.

For a set $Q\subseteq V(G)$ we let $\partial Q = \{(i,j) \in E(G): i \in Q, j \notin Q\}\,$.
For $t\in \N$ we denote by $G^t$ the multi-graph obtained by taking an
edge for each length-$t$ walk in $G$. In other words, $\matA(G^t) = \matA(G)^t$.

\begin{fact}
    For all graphs $G$,
    $\matL(G)$ is a symmetric positive semidefinite (PSD) matrix with all eigenvalues
    in the range $[0,2]$.
\end{fact}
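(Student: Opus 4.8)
The plan is to verify the three assertions — symmetry, $\matL(G)\succeq 0$, and $\matL(G)\preceq 2\Id$ — one at a time, all by inspecting quadratic forms. Symmetry is immediate from the definition: the entry $\matA(G)_\ij = 1/\sqrt{\deg_G(i)\deg_G(j)}$ is symmetric under swapping $i$ and $j$ (and diagonal entries are symmetric trivially), so $\matA(G)=\transpose{\matA(G)}$ and hence $\matL(G)=\Id-\matA(G)$ is symmetric; in particular all of its eigenvalues are real, so it suffices to bound them below by $0$ and above by $2$.

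For the two spectral bounds the idea is to pass to the variable $x := \matD(G)^{-1/2}y$ for an arbitrary $y\in\R^n$ (restricting attention to vertices of nonzero degree; an isolated vertex, if present, contributes a trivial $1\times 1$ block of $\matL(G)$ with eigenvalue $1\in[0,2]$ and can be ignored). Expanding the contribution of each edge of $G$ separately yields the standard sum-of-squares identities
\begin{align*}
    \transpose{y}\matL(G)\,y &= \sum_{(i,j)\in E(G)} (x_i - x_j)^2 \;\ge\; 0, \\
    \transpose{y}\bigl(2\Id - \matL(G)\bigr)y &= \transpose{y}\bigl(\Id + \matA(G)\bigr)y = \sum_{(i,j)\in E(G)} (x_i + x_j)^2 \;\ge\; 0,
\end{align*}
where each edge is counted once (with its multiplicity). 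The first identity gives $\matL(G)\succeq 0$, i.e. $\lmin(\matL(G))\ge 0$, and the second gives $\matL(G)\preceq 2\Id$, i.e. $\lmax(\matL(G))\le 2$; together with symmetry this is exactly the claim.

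There is essentially no obstacle here — this is a textbook fact — and the only point requiring any care is the bookkeeping in the edge-by-edge expansion: one uses that $\deg_G(i)$ is the number of edge-endpoints at $i$ (so a multiedge is counted with its multiplicity and a self-loop counts twice toward the degree), which is exactly what makes the two displayed sums telescope from $\snorm{y}=\sum_i \deg_G(i)x_i^2$ and $\transpose{y}\matA(G)y=\sum_{(i,j)\in E(G)}2x_ix_j$. An alternative route that avoids the substitution is to note, using the relation recalled in the preliminaries, that $\matA(G)=\matD(G)^{1/2}\matW(G)\matD(G)^{-1/2}$ is similar to the simple random-walk transition matrix $\matW(G)$, which is entrywise nonnegative with all row sums equal to $1$ and hence has spectral radius at most $1$; since $\matA(G)$ is symmetric its spectrum is real, therefore contained in $[-1,1]$, and so the spectrum of $\matL(G)=\Id-\matA(G)$ lies in $[0,2]$.
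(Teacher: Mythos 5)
The paper states this as a \cref{def:low-eigenspace}-style preliminary \emph{Fact} without proof, so there is nothing to compare line by line; your argument supplies the standard textbook justification and it is correct. Symmetry from the definition, the sum-of-squares identities $\transpose{y}\matL y=\sum_{(i,j)\in E}(x_i-x_j)^2$ and $\transpose{y}(2\Id-\matL)y=\sum_{(i,j)\in E}(x_i+x_j)^2$ after the substitution $x=\matD^{-1/2}y$, and the degree convention for self-loops and multiedges you flag are all exactly right; the alternative argument via similarity of $\matA$ to the row-stochastic matrix $\matW$ (so $\rho(\matA)\le 1$, hence $\mathrm{spec}(\matL)\subseteq[0,2]$) is also valid and is arguably closest to the paper's own framing, since it directly uses the observation recalled in the preliminaries that $\matA$ and $\matW$ are conjugate by $\matD^{1/2}$.
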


\begin{fact}\label{fact:matrix-power}
    Let $\mathbf{M}\in \R^{n\times n}$. If $\lambda$ is an eigenvalue of $\mathbf M$, then $\lambda^t$ is an eigenvalue of $\mathbf M^t\,.$ Furthermore, if $\mathbf M\sge 0$ then for any $i\in [n]\,,$ $\lambda_i(\mathbf M^t)=\lambda_i(\mathbf{M})^t\,.$
\end{fact}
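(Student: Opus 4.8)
The plan is to dispatch the two assertions separately, since both are elementary. For the first, I would argue directly from the definition of eigenvalue. Suppose $\lambda$ is an eigenvalue of $\mathbf{M}$ and fix a nonzero vector $v$ with $\mathbf{M}v = \lambda v$. An easy induction on $t$ shows $\mathbf{M}^t v = \lambda^t v$: the case $t=1$ is the hypothesis, and $\mathbf{M}^{t+1}v = \mathbf{M}(\mathbf{M}^t v) = \mathbf{M}(\lambda^t v) = \lambda^t(\mathbf{M}v) = \lambda^{t+1}v$. Since $v \neq 0$, this exhibits $\lambda^t$ as an eigenvalue of $\mathbf{M}^t$, with the same eigenvector.

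For the second assertion, note that $\mathbf{M} \sge 0$ entails that $\mathbf{M}$ is symmetric, so the spectral theorem furnishes an orthonormal basis $u_1,\dots,u_n$ of $\R^n$ with $\mathbf{M}u_i = \lambda_i(\mathbf{M})\,u_i$ and $0 \le \lambda_1(\mathbf{M}) \le \cdots \le \lambda_n(\mathbf{M})$. Writing $\mathbf{M} = \sum_i \lambda_i(\mathbf{M})\, u_i u_i^{\top}$ and using orthonormality of the $u_i$, we get $\mathbf{M}^t = \sum_i \lambda_i(\mathbf{M})^t\, u_i u_i^{\top}$, so the multiset of eigenvalues of $\mathbf{M}^t$ is exactly $\{\lambda_i(\mathbf{M})^t : i \in [n]\}$. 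It then remains only to check that this multiset, listed in the index order $i = 1,\dots,n$, is already sorted in increasing order; this is where positive-semidefiniteness is used. Every $\lambda_i(\mathbf{M})$ is nonnegative and the map $x \mapsto x^t$ is non-decreasing on $[0,\infty)$ for $t \in \N$, so $\lambda_1(\mathbf{M}) \le \cdots \le \lambda_n(\mathbf{M})$ implies $\lambda_1(\mathbf{M})^t \le \cdots \le \lambda_n(\mathbf{M})^t$. Hence $\lambda_i(\mathbf{M}^t) = \lambda_i(\mathbf{M})^t$ for every $i \in [n]$.

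There is no genuine obstacle here; the one point worth flagging is that the ``furthermore'' clause really does need $\mathbf{M} \sge 0$ rather than mere symmetry, since for a symmetric $\mathbf{M}$ with a negative eigenvalue and $t$ even the map $x \mapsto x^t$ fails to be monotone across $0$, and then the sorted order of $\{\lambda_i(\mathbf{M})^t\}$ can differ from the index order of $\{\lambda_i(\mathbf{M})\}$.
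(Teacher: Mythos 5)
Your proof is correct. The paper states this as a \textit{Fact} without proof, so there is no alternative argument in the paper to compare against; you have supplied the standard one. Both halves are sound: the first by induction on the eigenvector equation, the second via the spectral decomposition $\mathbf{M}=\sum_i \lambda_i(\mathbf{M})\,u_iu_i^{\top}$, orthonormality to conclude $\mathbf{M}^t=\sum_i\lambda_i(\mathbf{M})^t\,u_iu_i^{\top}$, and monotonicity of $x\mapsto x^t$ on $[0,\infty)$ to match the index order with the sorted order. Your closing remark that positive-semidefiniteness (not mere symmetry) is what makes the sorted orders agree is exactly the right caveat, and is the reason the paper includes the hypothesis $\mathbf{M}\sge 0$; one could add that for odd $t$ symmetry alone would suffice, since $x\mapsto x^t$ is then monotone on all of $\R$, but the statement as given needs the PSD hypothesis to cover all $t\in\N$.
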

An immediate corollary is the following relation between the spectrum of $G$ and its powers.

\begin{fact}\label{fact:spectrum-graph-power}
    Let $G$ be a graph and let $t\in \N\,.$ For any $i\in [n]$ it holds
    \begin{align*}
        \lambda_i(G^t) = 1 - \Paren{1-\lambda_i(G)}^t\,.
    \end{align*}
\end{fact}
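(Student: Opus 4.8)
The plan is to reduce the claim to the relation $\matA(G^t) = \matA(G)^t$ recorded in the preliminaries, together with the spectral mapping theorem. First I would write $\matL(G^t) = \Id - \matA(G^t) = \Id - \matA(G)^t$. Now $\matA(G) = \Id - \matL(G)$ is real symmetric, so it is diagonalized by the orthonormal Laplacian eigenbasis $v_1(G), \dots, v_n(G)$, with $\matA(G)\, v_i(G) = (1 - \lambda_i(G))\, v_i(G)$. Raising to the $t$-th power in this basis gives $\matA(G)^t\, v_i(G) = (1-\lambda_i(G))^t v_i(G)$, hence $\matL(G^t)\, v_i(G) = \bigl(1 - (1-\lambda_i(G))^t\bigr) v_i(G)$. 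Thus the $v_i(G)$ are eigenvectors of $\matL(G^t)$ as well, and the multiset of eigenvalues of $\matL(G^t)$ is exactly $\{\, 1 - (1-\lambda_i(G))^t : i \in [n]\,\}$.

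It then remains only to match the sorted indexing, i.e.\ to verify that $i \mapsto 1 - (1-\lambda_i(G))^t$ is non-decreasing in $i$. When $t$ is odd this is immediate from monotonicity: $x \mapsto x^t$ is increasing on $\R$, so $1 - \lambda_1(G) \ge \dots \ge 1 - \lambda_n(G)$ forces $(1-\lambda_1(G))^t \ge \dots \ge (1-\lambda_n(G))^t$ and therefore $1 - (1-\lambda_1(G))^t \le \dots \le 1 - (1-\lambda_n(G))^t$. Alternatively, and uniformly in $t$, one can invoke \cref{fact:matrix-power}: whenever $\matA(G) \succeq 0$ (in particular for the lazy walks used throughout the paper) it gives $\lambda_i(\matA(G)^t) = \lambda_i(\matA(G))^t$, and composing with the index reversal $i \mapsto n+1-i$ relating the sorted eigenvalues of $\matA$ and of $\matL$ yields the sorted statement. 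In every place where the identity is used without a laziness assumption, it is only the multiset form (e.g.\ the collision-probability expansion $\CP_t = \tfrac1n \sum_i (1-\lambda_i)^{2t}$) that is invoked, for which the previous paragraph already suffices.

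There is essentially no obstacle here; the only point requiring a sentence of care is the even-$t$ reindexing just discussed, which is handled either by restricting the ordered form to odd $t$ (or to $\matA(G) \succeq 0$ via \cref{fact:matrix-power}) or by reading the conclusion as an identity of multisets of eigenvalues. Everything else is a two-line computation from $\matA(G^t) = \matA(G)^t$.
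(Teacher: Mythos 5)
Your proof is correct and coincides with the paper's implicit argument: the paper records \cref{fact:spectrum-graph-power} without proof as ``an immediate corollary'' of \cref{fact:matrix-power}, and your calculation $\matL(G^t)=\Id-\matA(G^t)=\Id-\matA(G)^t$ followed by diagonalizing in the common Laplacian eigenbasis is exactly that reasoning made explicit. You also correctly put your finger on a real imprecision in the fact as stated: the ``furthermore'' clause of \cref{fact:matrix-power} requires $\matM\psdgeq 0$, but $\matA(G)$ need not be PSD, and for even $t$ the \emph{ordered} identity $\lambda_i(G^t)=1-(1-\lambda_i(G))^t$ can genuinely fail --- for instance if $G$ is bipartite then $\alpha_1(G)=1$ and $\alpha_n(G)=-1$ both square to $1$, so $\lambda_1(G^2)=\lambda_2(G^2)=0$, whereas $1-(1-\lambda_2(G))^2$ is typically nonzero. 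Your two remedies (restrict to odd $t$ or to $\matA(G)\psdgeq 0$; or read the conclusion as an eigenvector-correspondence/multiset identity) are both valid, and the observation that every downstream use in the paper needs only the multiset form is accurate: the collision-probability expansion uses the lazy walk matrix $\tfrac12\Id+\tfrac12\matA$, which \emph{is} PSD, together with a trace, and the Rayleigh-quotient expansion in the Buser-inequality argument needs only that $v_i(G)$ is an eigenvector of $\matL(G^{2t})$ with eigenvalue $1-(1-\lambda_i(G))^{2t}$. So your proof is both correct and more careful than the paper's terse citation.
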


We will make use of the following definition for the span of eigenvectors associated with small eigenvalues of the Laplacian.

\begin{definition}[$\low$ eigenspace]\label{def:low-eigenspace}
    For a graph $G$ and $0 \leq \tau \leq 2$ we define $\low_\tau(G)$ to be the span of the eigenvectors of $G$ associated with eigenvalues $\lambda\leq \tau\,.$ Notice that $\mul_\tau(G)=\dim(\low_\tau(G))\,.$
\end{definition}

The conductance and normalized density of a cut are defined for general graphs as follows.

\begin{definition}[Conductance and density]\label{def:conductance-prelim}
    Let $G$ be a graph and let $Q \subseteq V(G)$. The \emph{conductance} $\phi_G(Q)$ and the \emph{(normalized) density} $\psi_G(Q)$ are:
    \begin{align*}
        \phi_G(Q) := \frac{\card{\partial Q}}{\vol(Q)} \qquad \qquad \psi_G(Q) := \frac{|E(Q, \bar{Q})|/|E(G)|}{2\left(\frac{\vol(Q)}{\vol(G)}\right)\left(\frac{\vol(\bar Q)}{\vol(G)}\right)}
    \end{align*}
    where the \emph{volume} of $Q\subseteq V(G)$ is $\vol(Q) = \sum_{i\in Q}\deg_G(i)$\,.
    The conductance and density of $G$ are then $\phi(G):=\min_{Q\subseteq V: \vol(Q) \leq \vol(G)/2}\phi_G(Q)$ and $\psi(G) := \min_{Q \subseteq V} \psi_G(Q)$\,.
\end{definition}
The denominator of $\psi_G(Q)$ is the probability that two independent random vertices chosen according to the stationary distribution cross the $(Q,\bar Q)$ cut, and the numerator is the probability that a random edge in the graph crosses the $(Q,\bar Q)$ cut.
Although $\partial Q$ and $E(Q, \bar Q)$ are the same, we use
both notations to highlight that $\psi_G(Q)$ is symmetric with respect to complementing $Q$ whereas $\phi_G(Q)$ is not.

\begin{fact}\label{fact:sparsity-vs-conductance}
Let $G$ be a graph and $Q\subseteq[n]\,, \vol(Q)\leq \vol(G)/2\,$. Then,
\begin{align*}
    \phi_G(Q) \leq \psi_G(Q)\leq 2\phi_G(Q)\,.
\end{align*}
\end{fact}
\begin{proof}
    We have
    \[
        \psi_G(Q) = \frac{|E(Q, \bar Q)|/ |E(G)|}{2 \left(\frac{\vol(Q)}{\vol(G)}\right)\left(\frac{\vol(\bar Q)}{\vol(G)}\right)} = \frac{|E(Q, \bar Q)|}{ \vol(Q)\cdot \left(\frac{\vol(\bar Q)}{\vol(G)}\right)}\,.
    \]
    The condition $1/2 \leq \vol(\bar Q)/\vol(G) \leq 1$ implies that:
    \begin{align*}
        \frac{|E(Q,\bar Q)|}{\vol(Q)} &\leq \psi_G(Q) \leq 2\cdot \frac{|E(Q,\bar Q)|}{\vol(Q)}\\
        \iff \quad \phi_G(Q) &\leq \psi_G(Q) \leq 2 \phi_G(Q)\,.
    \end{align*}
\end{proof}

These combinatorial quantities can be interpreted spectrally as simple Rayleigh quotients. 
$\matL(G)$ always has a ``trivial'' eigenvalue $\lam_1 = 0$ to the eigenvector $v_1$ with entries $\sqrt{\deg_G(i)}$\,.
Let $\bar{x}$ denote the projection of $x \in \R^n$ orthogonal to $v_1$:
\[\bar{x} = x - \tfrac{\ip{x, v_1}}{\ip{v_1, v_1}}\cdot v_1\]
\begin{fact}\label{fact:rayleigh-quotient}
    Let $G$ be a graph. For all $Q \subseteq [n]$,
    \begin{align*}
    \phi_G(Q) &= \frac{\mathbf{1}_Q^T \matD^{1/2} \matL(G) \matD^{1/2}\mathbf{1}_Q}{\mathbf{1}_Q^T \matD \mathbf{1}_Q} \qquad  \left(\quad = \frac{\mathbf{1}_Q^T \matL(G) \mathbf{1}_Q}{\mathbf{1}_Q^T \mathbf{1}_Q} \qquad \text{if $G$ is regular}\right)\,,\\
    \psi_G(Q) &= \frac{\bar{\mathbf{1}}_Q^T \matD^{1/2} \matL(G) \matD^{1/2}\bar{\mathbf{1}}_Q}{\bar{\mathbf{1}}_Q^T \matD \bar{\mathbf{1}}_Q} \qquad \left(\quad = \frac{\bar{\mathbf{1}}_Q^T \matL(G) \bar{\mathbf{1}}_Q}{\bar{\mathbf{1}}_Q^T \bar{\mathbf{1}}_Q} \qquad \text{if $G$ is regular}\right)\,.
    \end{align*}
\end{fact}

Cheeger's inequality provides a quantitative relation between eigenvalues and conductance. \salil{it would be nice to update the paper to work with $\psi(G)$ rather than $\phi(G)$ in most places, but that can wait for a future version}

\begin{theorem}[Cheeger inequality]\label{thm:cheeger}
    Let $G$ be a graph. Then $\frac{\lambda_2}{2}\leq \phi(G)\leq \sqrt{2\lambda_2}\,.$
\end{theorem}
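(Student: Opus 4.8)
The plan is to prove the two inequalities of \cref{thm:cheeger} by independent arguments: the lower bound $\lambda_2/2 \le \phi(G)$ is a one-line substitution into the variational formula for $\lambda_2$, while the upper bound $\phi(G) \le \sqrt{2\lambda_2}$ needs the ``sweep cut'' rounding of the Fiedler vector.

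\emph{Lower bound.} Conjugating $\matL(G)$ by $\matD(G)^{1/2}$ gives $\lambda_2 = \min \frac{\sum_{(i,j)\in E(G)}(f_i-f_j)^2}{\sum_i \deg_G(i)\,f_i^2}$ over $f \ne 0$ with $\sum_i \deg_G(i) f_i = 0$. For any $Q$ with $\vol(Q) \le \vol(G)/2$ I would plug in $f = \mathbf{1}_Q - \frac{\vol(Q)}{\vol(G)}\mathbf{1}$, which satisfies the constraint. Additive constants cancel in the differences, so the numerator is exactly $\card{\partial Q}$, and a direct computation gives denominator $\frac{\vol(Q)\,\vol(V\setminus Q)}{\vol(G)} \ge \frac{1}{2}\vol(Q)$ using $\vol(V\setminus Q) \ge \vol(G)/2$. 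Hence $\lambda_2 \le 2\card{\partial Q}/\vol(Q) = 2\phi_G(Q)$, and taking the minimum over $Q$ finishes this direction.

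\emph{Upper bound.} The first step is to manufacture a nonnegative ``half-volume'' test function with small Rayleigh quotient. Let $v_2$ be the $\lambda_2$-eigenvector of $\matL(G)$ and set $x = \matD(G)^{-1/2}v_2$, so $\sum_i \deg_G(i) x_i = 0$ and the combinatorial Rayleigh quotient $R(x) := \frac{\sum_{(i,j)\in E(G)}(x_i-x_j)^2}{\sum_i \deg_G(i) x_i^2}$ equals $\lambda_2$. Order the vertices by $x$-value, let $r$ be a volume median, and translate so $x_r = 0$; since $\sum_i \deg_G(i) x_i = 0$, this translation only enlarges the denominator, so $R$ stays $\le \lambda_2$. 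Writing $x = x^+ - x^-$ for the positive and negative parts, the pointwise bounds $(x_i - x_j)^2 \ge (x_i^+ - x_j^+)^2 + (x_i^- - x_j^-)^2$ and $x_i^2 = (x_i^+)^2 + (x_i^-)^2$ combined with the mediant inequality show that one of $x^+, x^-$ --- call it $g$ --- has $R(g) \le \lambda_2$; moreover $g \ge 0$ and $\supp(g)$ has volume $\le \vol(G)/2$ by the choice of $r$. The second step rounds $g$: draw $\theta$ uniformly from $[0, M]$ with $M := \max_i g_i^2$ and set $S_\theta = \{i : g_i^2 > \theta\}$, so $\Pr[i \in S_\theta] = g_i^2/M$ and $\Pr[(i,j)\in\partial S_\theta] = \card{g_i^2-g_j^2}/M$. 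Then $\E\bigl[\vol(S_\theta)\bigr] = \frac{1}{M}\sum_i \deg_G(i) g_i^2$ and, using $\card{g_i^2-g_j^2} = \card{g_i-g_j}\cdot\card{g_i+g_j}$, Cauchy--Schwarz, and $(g_i+g_j)^2 \le 2(g_i^2+g_j^2)$,
\begin{align*}
\E\bigl[\card{\partial S_\theta}\bigr] \;\le\; \frac{1}{M}\sqrt{\sum_{(i,j)\in E(G)}(g_i-g_j)^2}\cdot\sqrt{2\sum_i \deg_G(i)\,g_i^2} \;\le\; \sqrt{2\lambda_2}\cdot\E\bigl[\vol(S_\theta)\bigr]\,,
\end{align*}
where the last step uses $R(g) \le \lambda_2$. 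Therefore some outcome has $\card{\partial S_\theta} \le \sqrt{2\lambda_2}\,\vol(S_\theta)$ with $S_\theta \ne \emptyset$, and since $S_\theta \subseteq \supp(g)$ we also have $\vol(S_\theta) \le \vol(G)/2$, so $\phi(G) \le \phi_G(S_\theta) \le \sqrt{2\lambda_2}$.

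I expect the main obstacle to be the upper bound, and in particular the reduction to a single nonnegative test function supported on at most half the volume whose Rayleigh quotient is still $\le \lambda_2$: one must check that the median translation does not hurt the quotient and that the positive/negative-part split loses only a constant factor --- this is precisely where the constant inside $\sqrt{2\lambda_2}$ originates. The rounding inequality itself is a routine first-moment argument, the only real idea being to make $g_i^2$ (rather than $g_i$) uniform so that the edge-crossing probabilities cancel cleanly against the volume.
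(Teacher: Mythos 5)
Your proof is correct: both directions are the standard Cheeger argument, and the constants come out right ($\lambda_2/2$ from the volume-median bound on the denominator, $\sqrt{2\lambda_2}$ from the restriction to one signed part of the Fiedler vector and Cauchy--Schwarz in the sweep). Note, however, that the paper does not prove \cref{thm:cheeger} at all; it is stated in the preliminaries as a classical fact (citing \cite{fiedler1973algebraic} and standard references elsewhere), so there is no in-paper proof to compare against. The only minor point worth making explicit in your writeup is the degenerate case where, after the median translation, one of $x^+,x^-$ vanishes identically --- then the mediant step is vacuous but the surviving part already has $R(g)\le\lambda_2$, so the argument goes through trivially.
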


We will make use of Stirling's approximation:
\begin{fact}[Stirling's approximation~\cite{stirlingWikipedia}]\label{fact:stirling}
    $2\sqrt{t}\left(\frac{t}{e}\right)^t \leq t!\leq 2\sqrt{2t}\left(\frac{t}{e}\right)^t$ for all $t \in \N \setminus \{0\}$.
\end{fact}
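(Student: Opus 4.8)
The plan is to track the sequence $a_t := t!\,\big/\,\bigl(\sqrt t\,(t/e)^t\bigr)$: I will show it is strictly decreasing in $t$ with $a_1 = e$, and that it stays bounded below by a quantity exceeding $2$. After dividing the target inequality through by $\sqrt t\,(t/e)^t$, what must be proved is exactly $2 \le a_t \le 2\sqrt 2$ for all integers $t \ge 1$, so these two facts suffice.

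First I would pass to logarithms, setting $b_t := \ln a_t = \ln(t!) - \bigl(t+\tfrac12\bigr)\ln t + t$, and compute the telescoping difference
\[
b_t - b_{t+1} \;=\; \Paren{t+\tfrac12}\ln\tfrac{t+1}{t} - 1 .
\]
Substituting $x := \tfrac{1}{2t+1} \in (0,1)$, so that $\tfrac{t+1}{t} = \tfrac{1+x}{1-x}$ and $t+\tfrac12 = \tfrac{1}{2x}$, and inserting the power series $\ln\tfrac{1+x}{1-x} = 2\sum_{k\ge 0}\tfrac{x^{2k+1}}{2k+1}$ (convergent since $|x|<1$), this collapses to
\[
b_t - b_{t+1} \;=\; \sum_{k\ge 1}\frac{x^{2k}}{2k+1} \;\in\; \Paren{\,0,\ \frac{x^2}{3(1-x^2)}\,} \;=\; \Paren{\,0,\ \frac{1}{12t(t+1)}\,},
\]
where the upper estimate bounds $\tfrac{1}{2k+1}\le\tfrac13$, sums the geometric series, and then uses the identity $1-x^2 = \tfrac{4t(t+1)}{(2t+1)^2}$. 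Since $\tfrac{1}{12t(t+1)} = \tfrac{1}{12t} - \tfrac{1}{12(t+1)}$, this single computation shows simultaneously that $b_t$ is strictly decreasing and that $b_t - \tfrac{1}{12t}$ is strictly increasing.

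The two bounds now fall out. Monotonicity of $b_t$ gives $a_t \le a_1 = e^{b_1} = e$ for every $t\ge 1$; since $e < 2\sqrt 2$, multiplying back through by $\sqrt t\,(t/e)^t$ gives the upper bound $t! \le 2\sqrt{2t}\,(t/e)^t$. Monotonicity of $b_t - \tfrac{1}{12t}$ gives $b_t \ge b_t - \tfrac{1}{12t} \ge b_1 - \tfrac{1}{12} = \tfrac{11}{12}$, hence $a_t \ge e^{11/12} > 2$ (since $\tfrac{11}{12} > \ln 2$), and multiplying back through gives the lower bound $t! \ge 2\sqrt t\,(t/e)^t$. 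The argument is essentially bookkeeping, so there is no real obstacle; the one point worth flagging is that a plain induction on $t$ does \emph{not} establish the lower bound, because its inductive step would require $(1+1/t)^{t+1/2} \le e$, which is false --- our computation above in fact shows the reverse inequality. This is precisely why one needs the sharper estimate that $b_t - \tfrac{1}{12t}$ is increasing, rather than mere monotonicity of $b_t$.
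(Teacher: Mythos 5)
Your argument is correct. The paper does not actually prove this Fact---it is stated with a citation to Wikipedia and used as a black box---so there is no ``paper's proof'' to compare against; what you have supplied is a clean, self-contained derivation of the stated constants-in-Stirling bounds. Your route (show $a_t := t!/\bigl(\sqrt t\,(t/e)^t\bigr)$ satisfies $2 \le a_t \le 2\sqrt 2$ by proving $b_t = \ln a_t$ is decreasing while $b_t - \tfrac{1}{12t}$ is increasing, both from the single telescoping identity $b_t - b_{t+1} = \sum_{k\ge1} x^{2k}/(2k+1)$ with $x = 1/(2t+1)$) is the standard sharp argument, and your numerics check out: $b_1 = 1$, so $a_t \le e < 2\sqrt2$, and $b_t \ge b_1 - \tfrac1{12} = \tfrac{11}{12} > \ln 2$, so $a_t > 2$. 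The closing remark that naive induction fails for the lower bound---because $(t+\tfrac12)\ln(1+\tfrac1t) > 1$, the very inequality your computation establishes---is a genuinely useful observation that explains why the finer monotonicity of $b_t - \tfrac{1}{12t}$ is needed. One optional tightening: you could note that $b_t$ and $b_t - \tfrac1{12t}$ bracket the common limit $\tfrac12\ln(2\pi)\approx 0.9189$, which sits between $\tfrac{11}{12}$ and $1$ as your monotonicity forces, but this is not needed for the stated constants.
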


We restate the definition of Abelian Cayley graphs.

\restatedef{def:abelian-cayley-graph}

Note that $G$ is $|S|$-regular and may have multiedges or self-loops.
Since $S$ is symmetric, we view $G$ as undirected.

\section{Cut Improvement algorithm}
\label{sec:arv-advice}

In this section we give an algorithm for the \cutimprovement problem.

\restatetheorem{thm:arv-advice}

Throughout the section we 
let $0\leq \eps \leq 1$, let $G$ be a graph with vertex set $V = [n]$, and let $Q^*, Q\subseteq V$ such that $\vol(Q^*) \leq \vol(G)/2$ and $\vol(Q \triangle Q^*) \leq \eps \vol(Q^*)$.
We prove the result for irregular graphs, but if one is just interested in regular graphs such as Cayley graphs, the notation can simplified using $\vol(Q) = d|Q|$\,.

Our algorithm is based on the Leighton--Rao linear program for the sparsest cut \cite{leighton1999multicommodity}, augmented with an ``advice set'' constraint. We start by describing the Leighton--Rao LP.  The variables of the LP are $d(u,v)$ for $u,v\in V$, which should be thought of as pair-wise distances of a semi-metric $d: V\times V \to \R$. The LP is then defined as:
\begin{align}\label{eq:lp}\notag
	&\text{minimize } \frac{\sum_{(u,v) \in E(G)} d(u,v)}{\sum_{u,v \in V}d(u,v) \deg(u)\deg(v)}\\
    \tag{\(\cP\)}\text{subject to } &\left \{
	\begin{aligned}
        &\sum_{u,v \in V} d(u,v)\deg(u)\deg(v) = \vol(Q^*) \vol(\bar{Q}^*)&\\
        &d(u,w) \leq d(u,v) + d(v,w); \; \forall u,v,w \in V && \text{(triangle inequality)}\\
        &0\leq d(u,v) \leq 1; \; \forall u,v \in V && \text{(non-negativity)}\\
        &d(u,v) = d(v,u); \; \forall u,v \in V && \text{(symmetric)}
	\end{aligned}
	\right \}=\colon
\end{align}
Note that we may assume to know $\vol(Q^*)$ as there are only $O(n^2)$ possible values.
Due to the normalization of the cut size in the first constraint, the denominator of the objective function is a constant and the objective becomes linear.
The remaining constraints enforce that $d$ is a semi-metric.

We then augment the Leighton--Rao LP with the following advice constraint:
\begin{equation}\label{eq:advice-constraint}\tag{\(\cP'(Q)\)}
    \sum_{u,v \in Q}d(u,v)\deg(u)\deg(v) + \sum_{u,v \in \bar{Q}} d(u,v)\deg(u)\deg(v) \leq 2\eps \vol(Q^*)\vol(\bar{Q}^*)
\end{equation}
The next lemma shows that the  above LP has a feasible solution of value $\frac{d}{n}\cdot \psi(Q^*)$ given the premise of \cref{thm:arv-advice}.

\begin{lemma}\label{lem:lp-value}
    \ref{eq:lp}$\cup$\ref{eq:advice-constraint} has a feasible solution with value $\frac{2|E(G)|}{\vol(G)^2} \cdot \psi(Q^*)$.
\end{lemma}
\begin{proof}
The candidate solution is $d(u,v) = |\mathbf{1}_{Q^*}(u) - \mathbf{1}_{Q^*}(v)|$ which has value $\frac{2|E(G)|}{\vol(G)^2} \cdot \psi(Q^*)$. It is easy to verify this semi-metric satisfies constraints in \ref{eq:lp}. The advice constraint is upper bounded by:
\begin{align*}
    & \qquad \sum_{u, v \in Q}d(u,v)\deg(u)\deg(v) + \sum_{u,v \in \bar Q}d(u,v)\deg(u)\deg(v)\\
    &= \sum_{\substack{u,v \in Q\\ \text{exactly one of $u,v \in Q^*$}}} \deg(u) \deg(v) + \sum_{\substack{u,v \in \bar Q\\ \text{exactly one of $u,v \in Q^*$}}} \deg(u) \deg(v)\\
    &\leq \vol(Q) \vol(Q \triangle Q^*) + \vol(\bar Q)\vol(Q \triangle Q^*)\\
    & \leq \eps \vol(G)\vol(Q^*) & (\vol(Q \triangle Q^*) \leq \eps \vol(Q^*))\\
    &\leq 2\eps \vol(Q^*)\vol(\bar Q^*)\,. & (\vol(Q^*) \leq \vol(G)/2)
\end{align*}
\end{proof}

We will next argue that any feasible solution to \ref{eq:lp}$\cup$\ref{eq:advice-constraint} can be rounded into a sparse cut.

To that end, we perform the following probabilistic analysis.  Consider the distribution $\mu$ defined by the following process: 
\begin{itemize}
    \item pick $u \in V$ with probability proportional to $\deg(u)$ and $t \overset{u.a.r.}{\sim} [0,1)$
    \item let $B$ be the ball $B := \{v \in V: d(u,v) \leq t\}\,.$
\end{itemize}

We will prove that:

\begin{align*}
    \frac{\E_{\mu} |E(B, \bar B)|}{ \E_{\mu} \vol(B)\vol( \bar B)} &\leq (1 + O(\eps^{1/3})) \cdot \frac{\sum_{(u,v) \in E(G)} d(u,v)}{\sum_{u,v \in V} d(u,v) \deg(u)\deg(v)}\,.
\end{align*}

By the fact that for non-negative random variables $X, Y$,  ${\Pr\left(\frac{X}{Y}\leq \frac{\E X}{\E Y}\right)>0}$ assuming $\Pr(Y > 0) > 0$, there exists $u,t$ such that the same inequality holds with the expectations $\E_\mu$ removed.
Without the expectations, the inequality becomes $\psi(B) \leq (1+O(\eps^{1/3}))\cdot \psi(Q^*)$ using \cref{lem:lp-value}.
The rounding algorithm will thus enumerate all possible vertices $u\in [n]$ and distances $t\in \{d(u,v)\}_{u,v\in V} $ and find the ball $B$ with minimum $\psi(B)$, which takes $O(n^3)$ iterations.
This will complete the proof of \cref{thm:arv-advice}.

\medskip

First we upper bound the numerator.
\begin{lemma}\label{lem:numerator}
    \[{\E}_{\mu} |E(B, \bar B)| \leq \sum_{(u,v)\in E(G)} d(u,v) \]

\end{lemma}
\begin{proof}
Let $u$ be the center of the ball $B$.
Observe that every edge $(v,w)$ is a cut edge with probability 
\[\Pr\left(d(u,v) \leq t \leq d(u,w)\right) = |d(u, v) - d(u, w)| \overset{*}{\leq} d(w, v) ~~~~~~ (*:\text{ triangle inequality})\,.\]
Therefore the claim follows.
\end{proof}

To finish the proof, we lower bound the denominator $\E_{\mu}\vol(B)\vol(\bar B)\,.$
\begin{lemma}
    $\E_\mu \vol(B)\vol(\bar B) \geq (1 - O(\eps^{1/3}))\vol(Q^*)\vol(\bar Q^*)$
\end{lemma}
\begin{proof}
Let $B_r(u) = \{v \in V: d(u,v) \leq r\}$ denote a ball around a vertex.
Define $V' \subseteq V$ to be the vertices such that the ball around each vertex contains many vertices, which we call ``heavy'' clusters,
\[
    V' = \{u \in Q : \vol(B_{\eps^{1/3}}(u) \cap Q) \geq (1 - \eps^{1/3})\vol(Q)\} \cup \{u \in \bar Q : \vol(B_{\eps^{1/3}}(u) \cap \bar Q) \geq (1 - \eps^{1/3})\vol(\bar Q)\}\,.
\]
The advice constraint \cref{eq:advice-constraint} implies that $V'$ is very large as shown in the next claim.
\begin{claim}\label{lem:heavy-clusters}
    $\vol(V') \geq (1 - O(\eps^{1/3}))\vol(G)$
\end{claim}
\begin{proof}[Proof of claim.]
We may lower bound \cref{eq:advice-constraint} by:
\begin{align*}
    &\quad 2\eps \vol(Q^*)\vol(\bar Q^*)\\
    &\geq \sum_{u,v \in Q} d(u,v)\deg(u)\deg(v) + \sum_{u,v \in \bar Q} d(u,v)\deg(u)\deg(v) & (\text{\cref{eq:advice-constraint}})\\
    &\geq   \vol(Q \setminus V') \cdot \eps^{1/3} \cdot \eps^{1/3}\vol(Q) + \vol(\bar Q \setminus V') \cdot \eps^{1/3} \cdot \eps^{1/3} \vol(\bar Q) & (\text{def. of $V'$})\\
    &= \eps^{2/3}\left(\vol(Q \setminus V') \vol(Q) + \vol(\bar Q \setminus V')\vol(\bar Q)\right)\\
    &\geq O(\eps^{2/3})\left(\vol(Q \setminus V') \vol(Q^*) + \vol(\bar Q \setminus V')\vol(\bar Q^*)\right)\,.
\end{align*}
The last line uses $\vol(Q) = (1 \pm O(\eps))\cdot \vol(Q^*)$ and $\vol(\bar Q) = (1 \pm O(\eps))\cdot \vol(\bar Q^*)$.
Therefore, we have $\vol(Q \setminus V') \leq O(\eps^{1/3})\vol(\bar Q^*)$ and $\vol(\bar Q \setminus V') \leq O(\eps^{1/3})\vol(Q^*)$. Adding these together,
\[
    \vol(V \setminus V') = \vol(Q \setminus V') + \vol(\bar Q \setminus V') \leq O(\eps^{1/3}) \vol(G)
\]
as claimed.
\end{proof}

Next define the subset $V'' \subseteq V$ by,
\[
    V'' = \{u \in Q : \vol(B_{1 - \eps^{1/3}}(u) \cap \bar Q) \leq \eps^{1/3}\vol(\bar Q)\} \cup \{u \in \bar Q : \vol(B_{1 - \eps^{1/3}}(u) \cap Q) \leq \eps^{1/3}\vol(Q)\}\,.
\]
The set $V''$ consists of vertices in $Q$ which are separated from almost all of $\bar Q$ and vice versa.
In the same way as \cref{lem:heavy-clusters}, we will bound $\vol(V'') \geq (1 - O(\eps^{1/3}))\vol(G)$.
\begin{claim}
    $\vol(V'') \geq (1 - O(\eps^{1/3})) \vol(G)$
\end{claim}
\begin{proof}[Proof of claim.]
Combining \cref{eq:advice-constraint} with the equation,
\begin{align*}
    \vol(Q^*)&\vol(\bar Q^*) = \sum_{u,v \in V}d(u,v)\deg(u)\deg(v)\\
    &= \sum_{u,v \in Q} d(u,v)\deg(u)\deg(v) + \sum_{u,v \in \bar Q}d(u,v)\deg(u)\deg(v) + \sum_{\substack{u \in Q\\v \in  \bar Q}}d(u,v)\deg(u)\deg(v)
\end{align*}
the advice constraint also implies that
\begin{align}
\sum_{\substack{u \in Q\\v \in \bar Q}}d(u,v)\deg(u)\deg(v) &\geq (1 - 2\eps)\vol(Q^*)\vol(\bar Q^*) \notag\\
\implies \sum_{\substack{u \in Q\\v \in \bar Q}}(1 - d(u,v))\deg(u)\deg(v) &\leq 2\eps\vol(Q^*)\vol(\bar Q^*) + \vol(Q) \vol(\bar Q) - \vol(Q^*)\vol(\bar Q^*)\,. \label{eq:lp2}
\end{align}

Using $\vol(Q) = (1 \pm O(\eps))\cdot \vol(Q^*)$ and $\vol(\bar Q) = (1 \pm O(\eps))\cdot \vol(\bar Q^*)$ we can bound the right-hand side by $O(\eps) \vol(Q^*)\vol(\bar Q^*)$.
    Now similarly to \cref{lem:heavy-clusters}, we may bound \cref{eq:lp2} by,
    \begin{align*}
        &\quad O(\eps) \vol(Q^*)\vol(\bar Q^*) \\
        &\geq \sum_{\substack{u \in Q\\v \in \bar Q}} (1 - d(u,v)) \deg(u) \deg(v) & (\text{\cref{eq:lp2}})\\
        & \geq \vol(Q \setminus V'') \cdot \eps^{1/3}\cdot \eps^{1/3}\vol(\bar Q) + \vol(\bar Q \setminus V'') \cdot \eps^{1/3} \cdot \eps^{1/3}\vol(Q)\\
        & = \eps^{2/3}(\vol(Q\setminus V'')\vol(\bar Q) + \vol(\bar Q \setminus V'')\vol(Q))\\  
        &\geq O(\epsilon^{2/3}) (\vol(Q\setminus V'')\vol(\bar Q^*) + \vol(\bar Q \setminus V'')\vol(Q^*))
    \end{align*}
    which implies $\vol(V \setminus V'') \leq O(\eps^{1/3})\vol(G)$ as claimed.
\end{proof}

Therefore, we have $\vol(V' \cap V'') \geq (1 - O(\eps^{1/3}))\vol(G)$.
Let us call the vertices in $V' \cap V''$ \emph{good}.

When the randomized ball rounding is centered at $u \in Q$\,, all the vertices in $B_{\eps^{1/3}}(u) \cap Q$ will be separated from those in $\bar Q \setminus B_{1 - \eps^{1/3}}(u)$ if additionally the distance threshold satisfies $t \in [\eps^{1/3}, 1 - \eps^{1/3}]$.
When $u$ is good, these sets are both very large,
in particular,
\begin{align*}
\vol(B_{\eps^{1/3}}(u) \cap Q) &\geq (1 - \eps^{1/3})\vol(Q)\\
\vol(\bar Q \setminus B_{1 - \eps^{1/3}}(u)) &\geq (1 - \eps^{1/3})\vol(\bar Q)\,.
\end{align*}
We conclude that,
\begin{align*}
    \E_{\mu'} \vol(B)\vol(\bar B) &\geq \Pr(u \text{ good}) \cdot \Pr\left(t \in [\eps^{1/3}, 1 - \eps^{1/3}]\right) \cdot (1 - O(\eps^{1/3}))\vol(Q) \vol(\bar Q)\\
    &\geq (1 - O(\eps^{1/3}))\cdot (1 - O(\eps^{1/3})) \cdot (1 - O(\eps^{1/3}))\vol(Q)\vol(\bar Q)\\
    & = (1 - O(\eps^{1/3}))\vol(Q)\vol(\bar Q)\\
    & = (1 - O(\eps^{1/3}))\vol(Q^*)\vol(\bar Q^*)\,.
\end{align*}
This completes the analysis of the denominator.
\end{proof}

\subsection{Eigenspace enumeration for Sparsest Cut}

Next, we show how to derive an eigenspace enumeration algorithm for \sparsestcut using the algorithm for \cutimprovement.
Recalling the definition of solution dimension $\CD_{\eps, c}$ from the introduction, we will prove:

\restatetheorem{thm:main-approximate-cut-dimension}

The algorithm is as follows. We assume to know $\CD_{\eps^3, c}(G)$\,.
\chris{I removed ``without loss of generality'' which was here before. It is not possible to enumerate all possible values of $\CD$ since we do not want to enumerate an unnecessarily large eigenspace.}

\begin{algorithmbox}\label{alg:main}
    \mbox{}\\
    \textbf{Input:} graph $G\,, 0\leq \eps\leq 1$\\
    \textbf{Output:} Set $\hat Q\subseteq [n]\,.$
    \begin{enumerate}[(1)]
        \item Compute the eigenvectors $v_1,\dots, v_n$ of the normalized Laplacian of $G$ and let $S := \linspan(v_1,\dots, v_{\CD_{\eps^3, c}})$\,.
        \item Enumerate a $\sqrt{\eps}$-net $N$ for the unit vectors in $S$ of size $O(1/\sqrt{\eps})^{\dim(S)}$.
        \item For each enumerated vector $w \in N$ and each possible threshold $\tau \in \R$, run the \cutimprovement algorithm (\cref{thm:arv-advice}) on the threshold cut $\{i \in [n]: w_i \geq \tau\sqrt{\deg_G(i)}\}$.
        \item Output the minimum sparsity cut seen among the candidates from the previous step.
    \end{enumerate}
\end{algorithmbox}

\begin{remark}[Running time]
    There are at most $O(1/\sqrt{\eps})^{\dim(S)}$ vectors in $N$,
    each of which has at most $n$ distinct threshold cuts,
    and the algorithm in \cref{thm:arv-advice} takes time $n^{O(1)}$,
    for an overall runtime of $n^{O(1)}\cdot O(1/\sqrt{\eps})^{\dim(S)}$.
\end{remark}

We use the following Lemma.

\begin{lemma}\label{lem:threshold}
    Let $0 \leq \eps < 1/8,\, v \in \R^n, \norm{v} = 1$ and $Q^* \subseteq [n]$ such that $\vol(Q^*) \leq \vol(G)/2$.
    If $\left\|\matD^{1/2}\bar{\mathbf{1}}_{Q^*} / \norm{\matD^{1/2}\bar{\mathbf{1}}_{Q^*}} - v\right\|^2 \leq \eps$ then $\norm{\matD^{1/2}\mathbf{1}_{Q} - \matD^{1/2}\mathbf{1}_{Q^*}}^2 = \vol(Q \triangle Q^*) \leq 8\eps \vol(Q^*)$ where $Q = \left\{i \in [n]: v_i \geq \sqrt{\tfrac{\deg(i)}{8\vol(Q^*)}}\right\}$.
\end{lemma}
\begin{proof}
    The entries of $\matD^{1/2}\bar{\mathbf{1}}_{Q^*}/ \norm{\matD^{1/2}\bar{\mathbf{1}}_{Q^*}}$ are either $\sqrt{\frac{\vol(\bar Q^*) \deg(i)}{\vol(Q^*) \vol(G)}}$ or $-\sqrt{\frac{\vol(Q^*)\deg(i)}{\vol(\bar Q^*)\vol(G)}}\,$.
    Since $\vol(Q^*) \leq \vol(G)/2$, these are at least $\sqrt{\tfrac{\deg(i)}{2\vol(Q^*)}}$ and at most 0 respectively.
    We have defined $Q$ by thresholding on the midpoint of these two values.
    
    Therefore, each disagreement between $Q$ and $Q^*$ leads to an entry of somewhat large magnitude in the vector $\matD^{1/2}\bar{\mathbf{1}}_{Q^*}/\norm{\matD^{1/2}\bar{\mathbf{1}}_{Q^*}} - v$\,. We have:
    \begin{align*}
        &\eps \geq \left\|\matD^{1/2}\bar{\mathbf{1}}_{Q^*} / \norm{\matD^{1/2}\bar{\mathbf{1}}_{Q^*}} - v\right\|^2 \geq \sum_{i \in Q \triangle Q^*} \frac{\deg(i)}{8 \vol(Q^*)} = \frac{\vol(Q \triangle Q^*)}{8\vol(Q^*)}\,.
    \end{align*}
    Rearranging proves the claim.
\end{proof}

Now we complete the proof.
For a subspace $S \subseteq \R^n$, recall that we define $C_\eps(S) := \{x \in \R^n \, : \, \norm{x} = 1, \norm{\matPi_S x}^2 \geq 1-\eps\}$ to be the unit vectors near $S$.

\begin{proof}[Proof of \cref{thm:main-approximate-cut-dimension}]
Let $\epsilon$ be small enough and let $Q^* \subseteq [n]$ be the cut with minimum sparsity such that $\matD^{1/2}\bar{\mathbf{1}}_{Q^*}/\norm{\matD^{1/2}\bar{\mathbf{1}}_{Q^*}} \in C_{\eps}(S)\,.$
We may assume $\vol(Q^*) \leq \vol(G)/2$ since $Q^*$ and $\bar Q^*$ are treated identically.
Let $v^* = \matPi_S\matD^{1/2}\bar{\mathbf{1}}_{Q^*} / \norm{\matPi_S\matD^{1/2}\bar{\mathbf{1}}_{Q^*}}$ be the projection of $\matD^{1/2}\bar{\mathbf{1}}_{Q^*}$ to $S$ rescaled into a unit vector.
The subspace enumeration gives a unit vector $v \in S$ such that
\[
    \norm{\matD^{1/2}\bar{\mathbf{1}}_{Q^*}/\norm{\matD^{1/2}\bar{\mathbf{1}}_{Q^*}} - v} \leq \underbrace{\left\|\matD^{1/2}\bar{\mathbf{1}}_{Q^*}/\norm{\matD^{1/2}\bar{\mathbf{1}}_{Q^*}} - v^*\right\|}_{\leq O(\sqrt{\eps})\text{ by defn. of }C_\eps(S)} + \underbrace{\left\|v^* - v\right\|}_{\leq \sqrt{\eps}\text{ by net}} \leq O(\sqrt{\eps})\,.
\]
By \cref{lem:threshold}, there is a threshold cut $Q = \{i \in [n]: v_i \geq \tau \sqrt{\deg(i)}\}$ such that $\vol(Q \triangle Q^*) \leq O(\eps) \cdot \vol(Q^*)$.
We may then use the set $Q$ as the advice in \cref{thm:arv-advice} to obtain the desired cut.
\end{proof}

\begin{remark}
Eigenspace enumeration can also be viewed as searching for a hyperplane cut in the spectral embedding.
That is, letting $\lambda_1\leq \ldots\leq\lambda_n$ be the sorted eigenvalues of the normalized Laplacian and $v_1,\ldots,v_n$ be the associated eigenvectors,
the $k$-dimensional spectral embedding maps vertex $i \in [n]$ to $((v_{1})_i, \dots, (v_{k})_i) \in \R^k$.
Then,
\[
\sum_{i = 1}^k c_i v_i \approx \pm \mathbf{1}_Q \qquad \Longleftrightarrow \qquad Q \approx \begin{array}{c}\text{hyperplane cut in $k$-dimensional spectral} \\ \text{embedding with normal vector }\vec{c} \in \R^k\,.\end{array}
\]

In this way, eigenspace enumeration is equivalent to brute-force searching over all possible hyperplane cuts in a certain low-dimensional embedding of the graph. It is interesting to compare this with the Goemans--Williamson hyperplane rounding technique which computes a random hyperplane cut in the SDP embedding of the graph \cite{goemans1995improved}.
\end{remark}

\section{The low eigenspace of Abelian Cayley graphs}\label{sec:low-eigenspace}

In this section, we study the low eigenspace of Abelian Cayley graphs and prove \cref{thm:dimension-low}. We do so by analyzing the collision probability
of a random walk in $G$. For a probability distribution $\pi$, the \textit{collision probability} of $\pi$ is defined by $\CP(\pi) = \|\pi\|^2_2=\bbP_{x,x'\sim \pi}(x=x')\,.$

\begin{definition}[$t$-step lazy collision probability]
\label{def:collision-probability}
    Let $G$ be a graph and $\pi$ be the stationary distribution for $\frac 1{2}\Id + \frac{1}{2}\matW$.
   The \emph{$t$-step lazy collision probability} is defined by $\CP_t = \E_{x\sim \pi}\Brac{\CP\left(\left(\frac{1}{2}\Id + \frac{1}{2}\matW\right)^t \mathbf{1}_{x}\right)}$\,.
\end{definition}

Recall that the stationary distribution is proportional to the degree,
and is the uniform distribution when the graph is regular. Note that for vertex-transitive graphs such as Abelian Cayley graphs, the collision probability satisfies $\CP_t = \CP((\frac 12 \Id + \frac 12\matW)^{t}\mathbf{1}_x)$ for all vertices $x$.
The next Lemma gives a spectral interpretation of $\CP_t$ as the power sums of the eigenvalues of the normalized adjacency matrix (a.k.a the moments of the empirical spectral distribution).

\begin{lemma}\label{lem:collision_sum_eigenvalues}
     Let $G$ be a regular graph. Then
     \[\CP_t = \frac{1}{n}\sum_{i=1}^n\left(1-\frac{\lambda_i}{2}\right)^{2t}.\]
     \begin{proof}
     Let $X_0, X_1, \dots, X_{2t}$ be a simple random walk initialized at $X_0$ drawn from the stationary distribution.
    Let $\tilde{X}_0, \dots, \tilde{X}_t$ be an independent simple random walk initialized at $X_0$.
    Since the simple random walk is a reversible Markov chain,
    $\CP_t = \Pr(X_t = \tilde{X}_t) = \Pr(X_{2t} = X_0)$.

    On the other hand, the diagonal elements of the transition matrix equal the returning probabilities of a random walk. Therefore
    \begin{align*}
        \CP_t = \E_{x\sim \pi}\Brac{\left(\tfrac 12 \Id + \tfrac 12\matW\right)^{2t}_{x,x}} = \frac 1n \Tr\left(\left(\tfrac 12 \Id + \tfrac 12 \matW\right)^{2t}\right) = \frac{1}{n}\sum_{i=1}^n\left(1-\frac{\lambda_i}{2}\right)^{2t},
    \end{align*}
    as desired. The second equality uses that $G$ is regular, and the last equality uses that $\matW$ and $\matA$ have the same spectrum.
 \end{proof}
\end{lemma}

To bound the multiplicity of eigenvalues close to $\lambda_2$, we analyze the ratio $\CP_t/\CP_{t(\kappa + 1)}$ where $t\in \N$ and $\kappa \geq 1$ are parameters.
Interestingly, the argument works for all regular graphs and not only for Abelian Cayley graphs.
We will derive \cref{thm:dimension-low} as a consequence of the next Lemma.

\begin{lemma}\label{lem:lower-bound-collision-probability}
    Let $G$ be a connected regular graph on $n$ vertices. Suppose $\lambda_2\leq \tau\leq \frac{3}{2}$ and let $\kappa = \lceil\tau/\lambda_2\rceil$. Then for $t =\lfloor\ln(\mul_\tau)/3\tau\rfloor$, we have
    \begin{align*}
        \frac{\CP_t}{\CP_{t(\kappa+1)}}\geq \mul_\tau^{1/3}/(2e^{3/2})\,.
    \end{align*}
\end{lemma}

\begin{proof}
    First notice that for any $i\geq 2\,,$ it holds $0\leq (1-\lambda_i/2)\leq (1-\lambda_2/2)$ since $\lambda_2\leq\lambda_i\leq 2\,.$ By applying  \cref{lem:collision_sum_eigenvalues}, we have 
    \[
    \frac{\CP_t}{\CP_{t(\kappa+1)}} = \frac{\sum_{i\in[n]}(1-\lambda_i/2)^{2t}}{\sum_{i\in[n]}(1-\lambda_i/2)^{2t(\kappa+1)}} = \frac{\sum_{i\in[n]}(1-\lambda_i/2)^{2t}}{1+\sum_{i\geq 2}(1-\lambda_i/2)^{2t(\kappa+1)}}.
    \]
    
    We show the following lower bound,
    \begin{equation}\label{eqn:cp_lwbd}
            \sum_{i\in[n]}(1-\lambda_i/2)^{2t}\geq \max\Paren{\mul_\tau \cdot e^{-2t\tau}, e^{t\tau}\sum_{i\geq 2}(1-\lambda_i/2)^{2t(\kappa+1)}}.
    \end{equation}
    First observe that if $\lambda_i\leq \tau$, then $(1-\lambda_i/2)^{2t}\geq (1-\tau/2)^{2t}$. Now by ignoring all $\lambda_i$ that are not in $\low_{\tau}$, we have
   \begin{align*}
       \sum_{i\in[n]}(1-\lambda_i/2)^{2t} &\geq \mul_\tau \cdot(1-\tau/2)^{2t}\\
       &\geq \mul_\tau \cdot e^{-2t\tau},
   \end{align*}
   where the last inequality uses the fact that $1-x/2\geq e^{-x}$ for $x\in\left[0, 3/2\right]$.
   Now, note that 
   \[(1-\lambda_i/2)^{2t} = (1-\lambda_i/2)^{-2t\kappa}\cdot (1-\lambda_i/2)^{2t(\kappa +1)}.\]
   This implies, $(1-\lambda_i/2)^{2t}\geq (1-\lambda_2/2)^{-2t\kappa}\cdot(1-\lambda_i/2)^{2t(\kappa+1)}$. In particular, we can use this to obtain 
   \begin{align*}
       \sum_{i\in[n]}(1-\lambda_i/2)^{2t} &\geq (1-\lambda_2/2)^{-2t\kappa}\cdot\sum_{i\geq 2}(1-\lambda_i/2)^{2t(\kappa+1)}\\
       &\geq e^{t\tau}\cdot\sum_{i\geq 2}(1-\lambda_i/2)^{2t(\kappa+1)}.
   \end{align*}
   The final inequality uses the fact that $1-x\leq e^{-x}$ and $\lambda_2\kappa\geq \tau$. This proves \eqref{eqn:cp_lwbd}. Observe, that for all non-negative numbers $a,b,c,d$ with $c,d>0$ we have $\max\{a,b\}/(c+d)\geq \frac{1}{2}\min\{a/c,b/d\}.$ This implies the following inequality
   \begin{equation}\label{eqn:cp_lwbd_2}
   \frac{\max\Paren{\mul_\tau \cdot e^{-2t\tau}, e^{t\tau}\cdot\sum_{i\geq 2}(1-\lambda_i/2)^{2t(\kappa+1)}}}{1+\sum_{i\geq 2}(1-\lambda_i/2)^{2t(\kappa +1)}} \geq \frac{1}{2}\min\Paren{\mul_\tau \cdot e^{-2t\tau}, e^{t\tau}}.
    \end{equation}

Choosing $t = \lfloor\ln(\mul_\tau)/3\tau\rfloor$ gives the desired inequality. The factor of $e^{3/2}$ in the denominator comes from the fact that $e^{\tau\lfloor \ln(\mul_\tau)/3\tau\rfloor}\geq e^{\tau\left(\frac{1}{3\tau}\ln(\mul_\tau)-1\right)}\geq \mul_\tau^{1/3}/e^{\tau}$ and the assumption $\tau\leq 3/2$.
\end{proof}

We deduce \cref{thm:dimension-low} as an immediate corollary.

\begin{proof}[Proof of \cref{thm:dimension-low}]
    Let $\gam_G^{\CP} = \max_{t \geq 0} \frac{\CP_t}{\CP_{2t}}$.
    Let $\kappa = \lceil \tau/\lambda_2 \rceil.$ Observe that by \cref{lem:upper-bound-collision-probability},
    \begin{align*}
      \frac{\mul_\tau^{1/3}}{(2e^{3/2})} \leq \frac{\CP_{t}}{\CP_{{t(\kappa + 1)}}} &\leq \frac{\CP_t}{\CP_{2t}}\cdot \frac{\CP_{2t}}{\CP_{4t}}\cdots\frac{\CP_{t2^{\lceil\log(\kappa+1)\rceil-1}}}{\CP_{t2^{\lceil\log(\kappa+1)\rceil}}}\\
      &\leq (\gam_G^{\CP})^{\lceil\log(\kappa +1)\rceil}\\
      &\leq (\gam_G^{\CP})^{\log(O(\tau/\lambda_2))},
    \end{align*}
    where the last inequality uses the fact that $\kappa \leq  2\tau/\lambda_2$ and $\log(2\tau/\lambda_2 +1) \leq \log(3\tau/\lambda_2)$. This implies,
    \begin{align*}   
    \mul_\tau&\leq (\gam_G^{\CP})^{\log(O(\tau/\lambda_2))+11} = O\left(\frac \tau{\lam_2}\right)^{\log \gam_G^{\CP}}.
    \end{align*}
\end{proof}

To specialize this to Abelian Cayley graphs,
we prove an upper bound on the ratio $\CP_t/\CP_{2t}$ with a function that depends on the degree $d$ of the graph but not on $t$ or $n$.
A version of this can be deduced for infinite, finitely-generated Abelian and nilpotent groups using tools related to Gromov's theorem, see \cite{pak2022algebraic} for references.
In comparison, our proof is self-contained, and we cover finite graphs.

\begin{lemma}\label{lem:upper-bound-collision-probability}
    Let $G = \cay(\Gamma, S)$ be a degree $d$ Abelian Cayley graph. Then, for every integer $t\geq 0\,,$ $\frac{\CP_t}{\CP_{2t}}\leq (2e)^{4d}\,.$
    \begin{proof}

    To simplify the analysis of the quantity $\CP_t$, we 
    \begin{enumerate}
        \item replace the lazy random walk with non-lazy random walk by introducing $d$ new copies of the identity element as generators, and 
        \item assume each generator occurs with even multiplicity. This can be done by making a copy of every generator. Note this does not change the random walk matrix and hence the collision probabilities are preserved. 
    \end{enumerate}
    In the above two operations we introduce $3d$ new generators ($2d$ copies of the identity and $1$ copy of each the original generators). To simplify notation, we assume $S = \{s_1,\ldots, s_d\}$ satisfies the assumptions above and replace $d$ with $4d$ in the final bound.
        
    Let $X_0,X_1,\dots, X_{2t}$ be a simple random walk in $G$ initialized at the identity element, denoted 0.
    Because $\Gam$ is Abelian, the position of $X_t$ at any time can be compressed into 
    the count of the number of times that each generator $s_i$ has been used as a step,
    which we write as the tuple $C^{(t)} \in \N^{d}$.
    The returning walks of length $2t$ are exactly those $c \in \N^{d}$
    such that $\sum_{i = 1}^{d} c_i s_i = 0$ (in $\Gamma$) and $\sum_{i=1}^{d} c_i = 2t$ (in $\N$).
    We have:
    \begin{align*}
        \frac{\CP_t}{\CP_{2t}} = \frac{\Pr(X_{2t} = X_0)}{\Pr(X_{4t} = X_0)} = \frac{\sum_{{c \in \N^{d}:\sum_{i=1}^{d} c_i s_i = 0}} \Pr(C^{(2t)} = c)}{\sum_{{c \in \N^{d}:\sum_{i=1}^{d} c_i s_i = 0}} \Pr(C^{(4t)} = c)}
    \end{align*}
    
    We define $\mu \in \N^{d}$ to be an integer vector whose entries are approximately $\frac{t}{d}$.
    \begin{claim}
        There exists $\mu \in \N^{d}$ such that $\sum_{i=1}^{d} \mu_i = 2t$, $\sum_{i=1}^{d} \mu_i s_i = 0$, and $|\mu_i - \mu_j| \leq 1$ for all $i,j \in [d]$.
    \end{claim}
    \begin{proof}[Proof of claim.]
        Since $S$ is a symmetric set of generators and each generator occurs with even multiplicity, we can pair up the generators with their inverses (for a generator which is its own inverse just pair it with another copy since we assume the number is even). Let $x\in\mathbb{Z}^{d/2}_{\geq 0}$ such that $\sum_{i\in [d/2]}x_i = t$ and for all $i,j\in[d/2]$ we have $|x_i-x_j|\leq 1$. Now for every $r\in [d/2]$ define $\mu_i = \mu_j = x_r$ where $(i,j)$ is the $r$-th pair of generators. It can be verified that $\mu$ satisfies the desired properties.
    \end{proof}
    Let $\mu \in \N^{d}$ be as in the Claim. Then,
    by ignoring terms in the denominator except
    for those with $c_i \geq \mu_i$ for all $i$,
    \begin{align*}
        \frac{\sum_{{c \in \N^{d}:\sum_{i=1}^{d} c_i s_i = 0}} \Pr(C^{(2t)} = c)}{\sum_{{c \in \N^{d}:\sum_{i=1}^{d} c_i s_i = 0}} \Pr(C^{(4t)}) = c)} \leq \frac{\sum_{{c \in \N^{d}:\sum_{i=1}^{d} c_i s_i = 0}} \Pr(C^{(2t)} = c)}{\sum_{{c \in \N^{d}:\sum_{i=1}^{d} c_i s_i = 0}} \Pr(C^{(4t)} = c + \mu)}\,.
    \end{align*}
    The point of the inequality is that it now suffices to show the direct comparison inequality $\frac{\Pr(C^{(2t)} = c)}{\Pr(C^{(4t)} = c + \mu)} \leq (2e)^d$ for all $c \in \N^{d}$ with $\sum_{i = 1}^d c_i = 2t$ (dropping the constraint that $\sum_{i=1}^d c_i s_i=0$ in $\Gamma$).
    Towards this, we have
    \begin{align*}
        \frac{\Pr(C^{(2t)} = c)}{\Pr(C^{(4t)} = c + \mu)} = \frac{\binom{2t}{c_1, \dots, c_d} d^{4t}}{\binom{4t}{c_1 + \mu_1, \dots, c_d + \mu_d} d^{2t}}= \frac{(2t)! (c_1 + \mu_1)!\cdots (c_d + \mu_d)! d^{2t}}{(4t)! c_1!\cdots c_d!}\,.
    \end{align*}
    We prove by ``discrete gradient descent'' that this quantity is maximized when $c = \mu$.
    Let $c'$ be $c$ with $c_i$ replaced
    by $c_i+1$ and $c_j$ replaced by $c_j - 1$. The ratio of the consecutive terms is, 
    \begin{align*}
  \frac{\Pr(C^{(2t)} = c')}{\Pr(C^{(4t)}) = c' + \mu)} \cdot \frac{\Pr(C^{(4t)} = c + \mu)}{\Pr(C^{(2t)} = c)} = \frac{(c_i+\mu_i+1)c_j}{(c_i+1)(c_j+\mu_j)} \,.
    \end{align*}
    This is at least 1 if and only if $\frac{c_j}{c_i+1} \geq \frac{\mu_j}{\mu_i}$. If this holds, the change $(c_i, c_j) \to (c_i + 1, c_j - 1)$ increases the value. This implies that
    $c = \mu$ at the maximizer (since if $c \neq \mu$, there is at least once coordinate which is smaller than $\mu$ and one coordinate which is larger than $\mu$ in which we can move to increase the value).
    
    Finally, we bound the value at the maximizer. 
    \begin{align*}
        \frac{\Pr(C^{(2t)} = \mu)}{\Pr(C^{(4t)} = 2\mu)} &= \frac{(2t)! d^{2t} \prod_{i=1}^d (2\mu_i)!}{(4t)! \prod_{i=1}^d\mu_i!}\\
        &\leq 2^{d/2+1}\frac{\sqrt{2t}(2t/e)^{2t} d^{2t} \prod_{i = 1}^d \sqrt{2\mu_i}(2\mu_i/e)^{2\mu_i}}{\sqrt{4t}(4t/e)^{4t} \prod_{i = 1}^d \sqrt{\mu_i}(\mu_i/e)^{\mu_i}} & (\text{\cref{fact:stirling}})\\
        &= 2^d \cdot \frac{d^{2t} \prod_{i = 1}^d 2^{2\mu_i}\mu_i^{\mu_i}}{2^{4t}(2t)^{2t}}\\
        &\leq 2^d \cdot \frac{d^{2t} \prod_{i = 1}^d 2^{2\mu_i}(\frac{2t}{d}+1)^{\mu_i}}{2^{4t}(2t)^{2t}} & (\mu_i \leq \frac {2t}{d} + 1)\\
        &= 2^d \cdot \left(\frac{d}{2t}\right)^{2t} \left(\frac{2t}{d}+1\right)^{2t} & \left(\sum_{i = 1}^d \mu_i = 2t\right)\\
        &= 2^d \left(1 + \frac d{2t}\right)^{2t} \leq (2e)^d & (1+ x \leq e^x)\,.
    \end{align*}
    Which concludes the proof.
    \end{proof}
\end{lemma}

\section{The sparse cuts of Abelian Cayley graphs}\label{sec:sparsest-cut-live-low-eigenspace}

In this section, we prove that \textit{all} sparse cuts of an Abelian Cayley graph are approximately contained in the low eigenspace with $\tau = O(d\cdot \phi^2)$ thus obtaining \cref{thm:bounded-cut-dimension}.

\begin{theorem}\label{thm:sparsest-cut-low-dim} 
    Let $G=\Cay(\Gamma, S)$ with $\Card{S}=d\,.$
    Let $0<\eps\leq 1$ and $\tau=100d \cdot \phi^2/\eps^2\,.$  For all $Q\subseteq [n], |Q| \leq n/2$ such that $\phi_G(Q)\leq 2\phi(G)\,,$ we have $\Norm{\mathbf \Pi_{\low_\tau} \bar{\mathbf{1}}_Q}^2\geq (1-\eps)\Norm{\bar{\mathbf 1}_Q}^2\,.$
\end{theorem}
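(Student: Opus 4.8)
The plan is to bound the probability that a short \emph{lazy} random walk started inside $Q$ escapes $Q$, using the combinatorial random-walk estimate of Oveis Gharan and Trevisan \cite{oveis2021ARV, klartag2016discrete}, and then to read off from the spectral form of this probability that the part of $\bar{\mathbf{1}}_Q$ supported above eigenvalue $\tau$ must be small. Concretely, expand $\mathbf{1}_Q = \sum_{i=1}^n q_i v_i$ in the common eigenbasis of $\matA(G)$ and $\matL(G)$; since $G$ is $d$-regular, $v_1 \propto \mathbf{1}$, so $\bar{\mathbf{1}}_Q = \sum_{i\ge 2} q_i v_i$ and $\|\bar{\mathbf{1}}_Q\|^2 = \sum_{i\ge 2} q_i^2 = |Q|(1-|Q|/n) \ge |Q|/2$, using $|Q|\le n/2$. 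Because $v_1\in\low_\tau$, the desired inequality $\|\matPi_{\low_\tau}\bar{\mathbf{1}}_Q\|^2 \ge (1-\eps)\|\bar{\mathbf{1}}_Q\|^2$ is equivalent to $\sum_{i:\,\lambda_i>\tau} q_i^2 \le \eps\,\|\bar{\mathbf{1}}_Q\|^2$, which is what I will establish; we may assume $\tau<2$, since otherwise $\low_\tau = \R^n$ and there is nothing to prove.

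\emph{Escape probability.} Let $\matW = \tfrac12(\Id + \matA(G))$ be the lazy walk operator, so $\matW v_i = (1-\lambda_i/2)v_i$ with all eigenvalues in $[0,1]$; using the lazy walk (rather than $\matA(G)$) is what ensures that \emph{every} eigendirection with $\lambda_i>\tau$ is contracted below, not just those with $\lambda_i$ bounded away from $2$. Running the argument of \cite{oveis2021ARV} on a length-$2t$ lazy walk $X_0,\dots,X_{2t}$ — equivalently, applying their bound to the $2d$-regular graph $G'$ obtained by adding $d$ self-loops at every vertex, whose $2t$-th power is a regular weighted graph with normalized adjacency eigenvalues $(1-\lambda_i/2)^{2t}$ — one obtains, for every integer $t\ge 1$,
\[
    \Pr_{x_0\sim\mathrm{Unif}(Q)}\bigl(X_{2t}\notin Q\bigr) \;\le\; 2\sqrt{t\,d}\cdot\phi_G(Q)\,.
\]
The two properties of Abelian Cayley graphs that make this work are that the endpoint of a walk depends only on how many times each generator is used (so the walk may be replaced by a multinomial count over the $d$ generators), and that a step along $s$ cancels a step along $-s$, so the net displacement contributed by each generator over $2t$ steps is $O(\sqrt{t/d})$ rather than $\Theta(t/d)$; summing over the $d$ generators bounds the expected number of crossings of $\partial Q$ by $O(\sqrt{td})\cdot\phi_G(Q)$, beating the trivial union bound $O(t)\cdot\phi_G(Q)$.

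\emph{Spectral rewriting and conclusion.} Since $\matW\mathbf{1} = \mathbf{1}$ and $\matW$ is symmetric, $\mathbf{1}_Q^\top(\Id-\matW^{2t})\mathbf{1}_Q = \bar{\mathbf{1}}_Q^\top(\Id-\matW^{2t})\bar{\mathbf{1}}_Q$, and computing the escape probability directly from the transition matrix,
\[
    \Pr_{x_0\sim\mathrm{Unif}(Q)}\bigl(X_{2t}\notin Q\bigr) = \frac{1}{|Q|}\,\bar{\mathbf{1}}_Q^\top(\Id-\matW^{2t})\bar{\mathbf{1}}_Q = \frac{1}{|Q|}\sum_{i\ge 2} q_i^2\bigl(1-(1-\lambda_i/2)^{2t}\bigr)\,.
\]
Combining this with the escape bound and the hypothesis $\phi_G(Q)\le 2\phi(G)$,
\[
    \sum_{i\ge 2} q_i^2\bigl(1-(1-\lambda_i/2)^{2t}\bigr) \;\le\; 4\sqrt{t\,d}\cdot\phi(G)\cdot|Q|\,.
\]
Finally, for any $i$ with $\lambda_i>\tau$ we have $0\le 1-\lambda_i/2 < 1-\tau/2 \le e^{-\tau/2}$, hence $1-(1-\lambda_i/2)^{2t} \ge 1-e^{-t\tau}$. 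Choosing $t$ to be a positive integer of order $1/\tau = \Theta(\eps^2/(d\,\phi(G)^2))$ (taking $t=1$ when $1/\tau<1$) makes $1-e^{-t\tau}$ a positive absolute constant while $\sqrt{t\,d}\cdot\phi(G) = \Theta(\eps)$; substituting into the last display and using $\|\bar{\mathbf{1}}_Q\|^2\ge |Q|/2$ yields $\sum_{i:\lambda_i>\tau} q_i^2 \le \eps\,\|\bar{\mathbf{1}}_Q\|^2$, which is the claim. The constant $100$ in $\tau = 100\,d\,\phi^2/\eps^2$ is chosen precisely so that this last chain of inequalities closes.

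\emph{Main obstacle.} The only substantial ingredient is the $\sqrt{d}$ saving in the escape bound: with just the trivial estimate $1-(1-x)^{2t}\le tx$ — equivalently $\Pr(X_{2t}\notin Q)\le t\,\phi_G(Q)$ — the argument would force $\eps = \Omega(d\,\phi(G))$, which is useless when $\phi(G)$ is small, so the cancellation of each generator with its inverse along the walk is essential. Since that estimate is exactly the content of the Oveis Gharan--Trevisan analysis, the rest — transferring it cleanly to the lazy walk so $\matW$ has spectrum in $[0,1]$, and performing the one-line optimization over $t$ — is routine, with the only bookkeeping being the constants and the observation that treating $Q$ and $V\setminus Q$ symmetrically together with $|Q|\le n/2$ gives $\|\bar{\mathbf{1}}_Q\|^2 \ge |Q|/2$.
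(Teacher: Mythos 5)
Your proposal follows essentially the same route as the paper's proof: apply the Oveis Gharan--Trevisan combinatorial Buser bound to a length-$2t$ walk, read off $\phi_{G^{2t}}(Q)$ (or its lazy analog) as a Rayleigh quotient of $\bar{\mathbf{1}}_Q$ in the eigenbasis, and optimize over $t$. The bookkeeping ($\|\bar{\mathbf{1}}_Q\|^2 \ge |Q|/2$, the choice $t = \Theta(1/\tau)$, the constant $100$) matches.

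The one substantive difference is that you work with the \emph{lazy} walk $\matW = \tfrac12\Id + \tfrac12\matA$, whose spectrum lies in $[0,1]$, and you correctly flag why: it ensures that every eigendirection with $\lambda_i > \tau$ is contracted. This actually repairs a small imprecision in the paper's written proof, which uses the non-lazy walk and asserts
\[
\E_{i \sim \cS(Q)}\bigl[1 - e^{-2\lambda_i t}\bigr] \le \E_{i \sim \cS(Q)}\bigl[1 - (1-\lambda_i)^{2t}\bigr],
\]
i.e.\ $e^{-\lambda_i} \ge |1-\lambda_i|$ for each $\lambda_i \in [0,2]$. That inequality fails for $\lambda_i$ close to $2$ (for example in a bipartite $G$, where some eigenvalue equals $2$ and $(1-\lambda_i)^{2t}=1$), so the non-lazy bound gives no control over the mass at the top of the spectrum. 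Passing to the lazy walk, as you do (viewing $G$ with $d$ self-loops so the relevant eigenvalues become $(1-\lambda_i/2)^{2t} \in [0,1]$), makes the comparison valid for the whole spectrum. The constant you quote for the escape bound is slightly off (adjusting for the self-loops and the degree becoming $2d$, one gets $\sqrt{2td}\cdot\phi_G(Q)$ rather than $2\sqrt{td}\cdot\phi_G(Q)$), but that does not affect anything beyond the numerical constant in $\tau$.
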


The proof extends the combinatorial proof of the Buser inequality in graphs due to Oveis Gharan and Trevisan~\cite{oveis2021ARV}.
Let $Q \subseteq [n]$ be a sparsest cut in $G = \cay(\Gam, S)$.
We analyze the expansion of $Q$ in the graph $G^{2t}$ for an appropriate choice of $t \in \N$.
Following the proof of the Buser inequality~\cite{oveis2021ARV}, this quantity can be bounded in terms of the expansion in $G$. For completeness,
we include the proof of the following Lemma in \cref{sec:buser-alternate}.
\begin{lemma}[\cite{oveis2021ARV}]\label{lem:buser-combinatorial}
    $\phi_{G^{2t}}(Q) \leq 2\sqrt{td} \cdot \phi_G(Q)$.
\end{lemma}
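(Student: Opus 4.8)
The plan is to read both sides of the inequality combinatorially --- this is the Oveis Gharan--Trevisan argument --- and the whole proof will rest on a single random-walk cancellation estimate. Throughout, $Q\subseteq[n]$ is the fixed cut (the bound in fact holds for every $Q$), and $X_0,\dots,X_{2t}$ is a simple random walk in $G$.

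\textbf{Reformulation.} Because $G=\Cay(\Gam,S)$ is Abelian, $X_{2t}=X_0+\delta$ where $\delta:=s_{(1)}+\dots+s_{(2t)}$ is the sum of the $2t$ i.i.d.\ uniform generators used. Since $G^{2t}$ is $d^{2t}$-regular, $\phi_{G^{2t}}(Q)$ is exactly the probability that a $2t$-step walk from a uniform vertex of $Q$ lands outside $Q$, so writing $Q-\delta:=\{q-\delta:q\in Q\}$,
\[
\phi_{G^{2t}}(Q)=\E_\delta\!\left[\frac{|Q\setminus(Q-\delta)|}{|Q|}\right],\qquad \phi_G(Q)=\frac{|\partial Q|}{d|Q|},
\]
so it suffices to prove $\E_\delta\,|Q\setminus(Q-\delta)|\le 2\sqrt{t/d}\,|\partial Q|$.

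\textbf{Telescoping and cancellation.} Pair the generators with their inverses: pick representatives $g_1,\dots,g_r$ so that $S$ (as a multiset) consists of $g_k$ and $-g_k$, each with multiplicity $\mu_k\ge 1$ (self-inverse generators are handled below). Any realized $\delta$ equals $\sum_k N_k g_k$, with $N_k\in\Z$ the net signed count for pair $k$. Applying $|Q\setminus(Q-a-b)|\le|Q\setminus(Q-a)|+|Q\setminus(Q-b)|$ repeatedly --- expand $\delta$ as a list of $\sum_k|N_k|$ single-generator steps, $|N_k|$ of them equal to $\pm g_k$ --- gives, for each fixed $\delta$,
\[
|Q\setminus(Q-\delta)|\ \le\ \sum_{k=1}^{r}|N_k|\,\beta_k,\qquad \beta_k:=|Q\setminus(Q-g_k)|+|Q\setminus(Q+g_k)|,
\]
since $|Q\setminus(Q\pm g_k)|\le\beta_k$; moreover $\sum_k\mu_k\beta_k=\sum_{s\in S}|Q\setminus(Q-s)|=|\partial Q|$. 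Next I would bound $\E|N_k|$: conditioned on the number $m_k$ of the $2t$ steps that use pair $k$, the signs within those steps are i.i.d.\ uniform in $\{\pm1\}$ (this is exactly where symmetry of $S$ is used), so $N_k$ is a $\pm 1$ walk at time $m_k$ and $\E[|N_k|\mid m_k]\le\sqrt{\E[N_k^2\mid m_k]}=\sqrt{m_k}$; as $m_k\sim\bin(2t,2\mu_k/d)$, Jensen gives $\E|N_k|\le\sqrt{\E m_k}=2\sqrt{t\mu_k/d}\le 2\sqrt{t/d}\,\mu_k$. Combining,
\[
\E_\delta\,|Q\setminus(Q-\delta)|\ \le\ \sum_k\E|N_k|\,\beta_k\ \le\ 2\sqrt{t/d}\sum_k\mu_k\beta_k\ =\ 2\sqrt{t/d}\,|\partial Q|,
\]
as needed.

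\textbf{The one delicate point.} A generator $g$ with $2g=0$ has no inverse to cancel with; if used $m$ times it contributes net displacement $(m\bmod 2)\cdot g\in\{0,g\}$, hence a telescoping contribution of at most $(m\bmod2)\,\beta_g$. One then needs $\E[m\bmod 2]\le 2\sqrt{t/d}\cdot(\text{multiplicity of }g)$, which holds because $\E[m\bmod2]\le\min(\tfrac12,\E m)$ together with the elementary bound $\min(\tfrac12,2x)\le 2\sqrt x$ for all $x\ge0$ (apply with $x=t\mu_g/d$, then $\sqrt{\mu_g}\le\mu_g$). Besides this bookkeeping there is no real obstacle: the entire lemma comes down to the cancellation estimate $\E|N_k|=O(\sqrt{t/d})$ in place of the naive $\E|N_k|=\Theta(t/d)$, which is what upgrades the trivial $\phi_{G^{2t}}(Q)=O(t)\cdot\phi_G(Q)$ to the $\sqrt{td}$ Buser-type bound --- and why the cancellation between $g$ and $-g$ is the structural feature of Abelian Cayley graphs being exploited here.
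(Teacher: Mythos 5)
Your proof is correct and follows essentially the same route as the paper (and Oveis Gharan--Trevisan): cast $\phi_{G^{2t}}(Q)$ as a random-walk crossing probability, use Abelianness to reduce the walk to the net signed generator counts $N_k$, telescope the boundary over single-generator steps, and bound $\E|N_k|\le\sqrt{\E N_k^2}=O(\sqrt{t/d})$ via the $g$/$-g$ cancellation. The one place you deviate is the self-inverse generators: you bound the parity $\E[m\bmod 2]$ directly, whereas the paper's footnote instead assigns an artificial random $\pm1$ sign to each use of such a generator and computes the resulting second moment $2t/d$; both are correct, and yours is arguably the more honest accounting, but the underlying idea and the constants are the same.
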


\begin{proof}[Proof of Theorem~\ref{thm:sparsest-cut-low-dim}]
By \cref{fact:rayleigh-quotient} the expansion has a spectral representation,
\begin{equation}\label{eq:phi-gt}\phi_{G^{2t}}(G) = \frac{\mathbf{1}_Q^T \matL(G^{2t}) \mathbf{1}_Q}{\mathbf{1}_Q^T \mathbf{1}_Q}\,.\end{equation}
Let $\mathbf{1}_Q = \sum_{i=1}^n q_i v_i(G)$ be the representation of $\mathbf{1}_Q$ in the eigenbasis. The eigenvalues of $\matL(G^{2t})$ are equal to $1 - (1 - \lam_i(G))^{2t}$. By combining \cref{eq:phi-gt} and  \cref{lem:buser-combinatorial} we obtain,
\[
    \frac{1}{|Q|}\sum_{i=2}^n q_i^2(1-(1-\lam_i)^{2t}) \leq 2\sqrt{td}\cdot \phi_G(Q)\,.
\]

We interpret the left-hand side probabilistically.
Let $i \sim \cS(Q)$ denote the ``spectral sample'' distribution on $\{2,3 ,\dots, n\}$ taking value $i$ with probability proportional to $q_i^2$ i.e. the weight of $\bar{\mathbf{1}}_Q$ on the $i$th eigenvector.
The normalizing factor for $\cS(Q)$ is $\norm{\bar{\mathbf{1}}_Q}^2 = \sum_{i = 2}^n q_i^2 = \frac{|Q|(n - |Q|)}{n} \geq \frac{|Q|}{2}$ using $|Q| \leq n/2$.
Then we have,
\[
    \E_{i \sim \cS(Q)}[1 - e^{-2\lam_i t}] \leq \E_{i \sim \cS(Q)} [1-(1-\lam_i)^{2t}] \leq 8\sqrt{td}\cdot \phi(G)
\]
Fixing a threshold $\tau \geq 0$, we upper bound $\E_{i \sim \cS(Q)}[e^{-2\lam_i t}] \leq (1-p) + pe^{-2\tau t}$ where $p := 1 - \norm{\matPi_{\low_\tau}\bar{\mathbf{1}}_Q}^2/\norm{\bar{\mathbf{1}}_Q}^2$ is the fraction of mass outside of the low eigenspace. Therefore,
\[
    p(1 - e^{-2\tau t}) \leq 8\sqrt{td}\cdot \phi(G)\,.
\]
Selecting $\tau = 100\eps^{-2} d\phi^2(G)$ and $t = 1/\tau\,,$
we conclude $p \leq \eps$ i.e. at least $1-\eps$ fraction
of the mass of $\mathbf{1}_Q$ is on the low eigenspace.
This finishes the proof of \cref{thm:sparsest-cut-low-dim}.
\end{proof}

\subsection{Buser inequality via random walks 
}\label{sec:buser-alternate}

\begin{proof}[Proof of \cref{lem:buser-combinatorial}]
We have assumed that the multiset of generators $S$ is symmetric, meaning that $x$ and $-x$
have the same multiplicity in $S$.
Let $\{s_1, \dots, s_{d'}\} \subseteq S$
be a set of generators ignoring inverses
i.e. we pair up the inverses and take one each of $\{x, -x\}$ and we include all generators which are their own inverse.

We can think of an edge in $G^{2t}$ as a walk of length $2t$ in $G$ which we denote $X_0, X_1, \dots ,X_{2t}$.
Using the fact that we are on an Abelian Cayley graph, the walk can be expressed as $X_{2t} = X_0 + \sum_{i = 1}^{d'} C_i^{(2t)} s_i$ where each $C_i^{(2t)} \in \Z$
is a signed random variable that counts the number of times that generator $s_i$ is used as a step of the walk, using a minus sign when a step is taken on an inverse element.
We initialize the walk at a uniformly random vertex.

The random variable $C_i^{(t)}$ follows a simpler random walk on $\Z$.
It changes by either $\{-1, 0, +1\}$ at each step, the probability
of transitioning to $\pm 1$ is $1/d$,
and due to the symmetry condition of $S$ it has mean zero.
In particular, this walk exhibits a lot of cancellations and we expect it to
have an approximately Gaussian density around 0.

Because of the cancellations,
we can take a shorter walk to reach $X_{2t}$ which only uses $|C_i^{(2t)}|$
of the edges labeled $s_i\,$, say, the first $|C_i^{(2t)}|$ steps in that direction.
Due to the random initialization, each of the steps of the random walk $X_{2t}$ in direction $s_i$ is marginally a uniformly
random edge of the graph in that direction. Let $\partial_i Q = \{(x, x\pm s_i) \in E(G) : x \in Q, x \pm s_i \notin Q\}\,$. By taking a union bound over each of the edges in the shorter walk, we have:
\begin{align*}
    \Pr(X_0 \in Q \wedge X_{2t} \notin Q)
    &\leq \sum_{i=1}^{d'}\E[|C_i^{(2t)}|] \cdot \frac{|\partial_i Q|}{n}\\
    &\leq \sum_{i=1}^{d'}\sqrt{\E[(C_i^{(2t)})^2]} \cdot \frac{|\partial_i Q|}{n}\,.
\end{align*}
Each $C_i^{(2t)}$ is the sum of $2t$ independent, mean-zero random variables that
take values in $\{-1,0,+1\}$ and that are 0 with probability $1-2/d$. We compute $\E[(C_i^{(2t)})^2] = 4t/d$.\footnote{For generators which are their own inverse, we define the walk on $C_i^{(t)}$ to increment either $\pm 1$ at random. The probability of $C_i^{(t)}$ transitioning to $\pm 1$ is $\frac{1}{2d}$.
We compute $\E[(C_i^{(2t)})^2] = 2t/d$ for this case.}
Therefore,
\begin{align*}
        \Pr(X_0 \in Q \wedge X_{2t} \notin Q) &\le \sum_{i=1}^{d'}\sqrt{\frac{4t}{d}} \cdot \frac{|\partial_i Q|}{n}\\
        &= \sqrt{\frac{4t}{d}} \cdot \frac{|\partial Q|}{n}\\
        &= \frac{|Q|}{n}\cdot 2\sqrt{td} \cdot \phi(Q)\,.
\end{align*}
Finally, we have $\Pr(X_0 \in Q \wedge X_{2t}\notin Q) = \frac{|Q|}{n}\cdot \phi_{G^{2t}}(Q)$. Plugging this in completes the claim.
\end{proof}

\section{Cayley graphs with high approximate eigenvalue multiplicity}
In this section we provide two constructions of Abelian Cayley graphs for which the upper bound on eigenvalue multiplicity given by \cref{thm:dimension-low} is tight up to constant factors in the exponent.

The first construction shows that for any finite Abelian group $\Gamma$, there is a Cayley graph $\Cay(\Gamma, S)$ that attains the bound given by \cref{thm:dimension-low}.

\begin{proposition}\label{prop:approx_eig_val_lower_bound}
    For any finite Abelian group $\Gamma$, there exists a subset $S$ with $|S| = \Theta(\log |\Gamma|)$ and constant $c > 0$ such that the Cayley graph $\Cay(\Gamma,S)$ satisfies $\mul_{c}(\lambda_2) = 2^{\Omega(|S|)}$.
\end{proposition}

\cref{prop:approx_eig_val_lower_bound} follows from the classic result of Alon and Roichman \cite{alon1994random} who showed that a random Cayley graph with logarithmic degree is an expander with constant probability. Observe, that $\cay(\Gamma, S)$ is connected then $\mul_{\frac{2}{\lambda_2}}(\lambda_2) = n-1$. If $\cay(\Gamma, S)$ is an expander with degree $|S| = O(\log |\Gamma|)$ we can upper bound $2/\lambda_2$ by some constant $c$. This implies, $\mul_{\frac{2}{\lambda_2}}(\lambda_2) = n-1\leq 2^{\Omega(|S|)}$.

For the second construction, we use the relationship between binary linear codes and Cayley graphs over $\Z^k_2$ to construct Abelian Cayley graphs with large eigenvalue multiplicity. This construction yields the following lower bound, matching \cref{thm:dimension-low}. See~\cref{app:codes}
for the proof. 
\begin{proposition}\label{prop:eig_val_lower_bound}
   Let $n= 2^k$. There is a family of Cayley graphs $\Cay(\Z^k_2, S)$ such that $|S| = \Theta(\log n)$ and $\mul_{\lambda_2}\geq 2^{\Omega(|S|)} = n^{\Omega(1)}\,$.
\end{proposition}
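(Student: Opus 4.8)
The plan is to move to the character (Fourier) basis, reformulate the statement as a problem about binary linear codes, and then exhibit a code with the right parameters.

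\emph{Reduction to coding theory.} For $G = \Cay(\Z_2^k, S)$ with $S = \{s_1, \dots, s_d\} \subseteq \F_2^k$, the characters $\chi_x(v) = (-1)^{\langle x, v\rangle}$, $x \in \F_2^k$, diagonalize the normalized Laplacian of $G$, with eigenvalue
\[
\lambda_x \;=\; \frac1d\sum_{i=1}^d\bigl(1 - (-1)^{\langle s_i, x\rangle}\bigr) \;=\; \frac2d\,\bigl|\{i : \langle s_i, x\rangle = 1\}\bigr| \;=\; \frac2d\,\operatorname{wt}(Mx),
\]
where $M \in \F_2^{d \times k}$ is the matrix with rows $s_1,\dots,s_d$ and $\operatorname{wt}$ is Hamming weight in $\F_2^d$. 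If $G$ is connected then $S$ spans $\F_2^k$, the map $x \mapsto Mx$ is injective, and its image is a binary linear code $C \subseteq \F_2^d$ of length $d$ and dimension $k$. As $x$ ranges over $\F_2^k \setminus \{0\}$ the weights $\operatorname{wt}(Mx)$ are precisely the nonzero weights of $C$, so $\lambda_2(G) = \tfrac2d\, d_{\min}(C)$ and, crucially, $\mul_{\lambda_2}(G)$ equals $A_{d_{\min}}(C)$, the number of minimum-weight codewords of $C$. Conversely, every binary linear code of length $d$ and full dimension $k$ is realized this way, with $S$ the (multiset of) columns of a generator matrix — which also makes connectivity of $G$ transparent (it holds iff the code has dimension exactly $k$). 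Thus the proposition is equivalent to the coding statement: \emph{for infinitely many $k$, there is a binary linear code of dimension $k$ and length $d = \Theta(k)$ with $A_{d_{\min}} \ge 2^{\Omega(k)}$.}

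\emph{Constructing the code.} Since $n = 2^k$, the constraint $|S| = \Theta(\log n)$ forces the code to have constant rate, and since $A_{d_{\min}} \le \binom{d}{d_{\min}}$ a multiplicity of $2^{\Omega(d)}$ forces $d_{\min} = \Theta(d)$ as well; so we want a \emph{good} code (constant rate and relative distance) whose weight enumerator has an exponentially large spike at the minimum weight. A natural way to pin down the weight enumerator is concatenation with a \emph{one-weight} inner code: if each nonzero symbol of the outer $\F_q$-code maps injectively to an inner codeword of a fixed weight $w$, then the concatenated weight enumerator is the outer one scaled by $w$, so the concatenation has exactly as many minimum-weight codewords as the outer code — which for an MDS (Reed--Solomon) outer code equals $\binom N D (q-1) = 2^{\Theta(N)}$ when the outer distance is $D = \Theta(N)$. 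One then tunes the outer family and the inner one-weight code so that the resulting length is $\Theta(k)$, the dimension is $k$, and the minimum-weight-codeword count comes out $2^{\Omega(k)}$, and takes $S$ to be the columns of its generator matrix.

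\emph{Where the difficulty is.} The Fourier reduction above is routine; the real work — and the step I expect to be the main obstacle — is producing a constant-rate binary linear code with $2^{\Omega(\mathrm{length})}$ minimum-weight codewords and verifying that the parameter bookkeeping yields exactly $|S| = \Theta(\log n)$ together with $\mul_{\lambda_2} = n^{\Omega(1)}$ (one cannot, e.g., simply use a simplex code, whose rate is too small, or a Reed--Muller code, which has only quasi-polynomially many minimum-weight codewords in this regime). As a sanity check, the target is consistent with \cref{cor:eigval_mult}: that result gives $\mul_{\lambda_2} \le 2^{O(d)}$, and $\mul_{\lambda_2} = 2^{\Theta(d)}$ saturates it up to the constant in the exponent, which is precisely the tightness being claimed. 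Finally one should confirm the graph is connected, so that $\lambda_2 > 0$ and the identification $\mul_{\lambda_2} = A_{d_{\min}}(C)$ is valid.
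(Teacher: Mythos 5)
Your reduction to coding theory is exactly the paper's (their Proposition~\ref{fact:blc_distance}): $\lambda_2 = \tfrac{2}{d}\,d_{\min}(C_S)$ and $\mul_{\lambda_2}$ equals the number of minimum-weight codewords, with connectivity corresponding to $\matG_S$ having full rank. Where the argument goes wrong is the code construction. Concatenating an MDS outer code over $\F_q$ with a binary one-weight inner code cannot produce a code family of constant rate: by Bonisoli's theorem, every binary one-weight linear code is (up to equivalence) a replicated simplex code with possible zero coordinates, so a one-weight inner code of dimension $m$ has length at least $2^m - 1$ and rate at most $m/(2^m - 1)$. Reed--Solomon over $\F_q$ with $q = 2^m$ has block length at most $q$, so to send $N \to \infty$ you need $m \to \infty$, forcing the inner rate (and hence the overall rate) to zero. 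Concretely, the concatenated length is $\geq N(2^m - 1)$ while the dimension is $Km \leq Nm$, so the ratio is $O(m / 2^m) = O(\log N / N)\to 0$, i.e.\ $|S|$ would have to be $\omega(\log n)$, not $\Theta(\log n)$. You correctly identified this parameter regime as the main obstacle, but the concatenation plan you propose cannot overcome it for any choice of one-weight inner code.

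The paper sidesteps this by invoking Ashikhmin--Barg--Vl\u{a}du\c{t} (Theorem~\ref{thm:codes_many_light_codewords}), who use concatenation with algebraic geometry codes as the \emph{outer} code (over a \emph{fixed} field $\F_q$, $q \geq 49$) --- not Reed--Solomon --- together with a careful analysis of weight enumerators to produce binary linear codes of constant rate, constant relative distance, and $2^{\Omega(k)}$ minimum-weight codewords. That result is substantial and cannot be recovered by the MDS-plus-one-weight recipe; replacing the citation with your concatenation would leave a genuine hole in the proof.
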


Both constructions in this section are expanders and have logarithmic degree. A natural question is whether our bound on approximate eigenvalue multiplicity is tight for non-expanding Abelian Cayley graphs with sub-logarithmic degree.

\begin{question}
    Does there exist Abelian Cayley graphs $\Cay(G,S)$ with degree $|S| = o(\log |G|)$ and $\lambda_2 = o(1)$ such that $\mul_c(\lambda_2) \geq 2^{\Omega(|S|)}$?
\end{question}

\subsection{Eigenvalue multiplicity and binary linear codes}
\label{app:codes}
We begin with some basic definitions from coding theory found in \cite{guruswami2019essential}.

\begin{definition}[Binary Linear Code]
   A binary linear code $C$ of dimension $k$ and block length $n$ is a $k$-dimensional linear subspace of $\Z^n_2$. 
\end{definition}

The \emph{distance} between two elements $x,y\in \Z^n_2$, denoted by $\Delta(x,y)$ is the number of positions in which $x$ and $y$ differ. The \emph{relative distance} between $x$ and $y$ is $\delta(x,y) = \Delta(x,y)/n$.
 The \emph{distance} of a code $C$ is $\Delta(C) = \min_{c\neq c'}\Delta(c,c')$ and the \emph{relative distance} is $\delta(C) = \Delta(C)/n$. Linear codes have the nice property that the distance can be rewritten as $\Delta(C) = \min_{c\neq 0}|c|$, where $|c| = \Delta(c,0)$ is the \emph{Hamming weight} of $c$. The \emph{relative Hamming weight} of a vector $c$ is $|c|/n$.
 
 We write elements of $\Z^n_2$ as row vectors. A \emph{generator matrix} for the code $C$ is a rank $k$ matrix $\matG\in\Z^{k\times n}_2$ whose row span is $C$. 
Let $S\subseteq \Z^k_2$ be a set of size $n$ and consider the Cayley graph $\Cay(\Z^k_2, S)$. We can define a binary linear code with generator matrix
\[\matG_S=
\left[
  \begin{array}{cccc}
    \vertbar & \vertbar &        & \vertbar \\
    s_{1}    & s_{2}    & \ldots & s_{n}    \\
    \vertbar & \vertbar &        & \vertbar 
  \end{array}
\right].
\]

Viewing $\matG_S$ as a linear map from $\Z^k_2$ to $\Z^n_2$ we have that $m\matG_S = (\langle s_1, m\rangle, \ldots, \langle s_n, m\rangle)$. The code generated by $\matG_S$ is $C_S=\mathsf{Im}(\matG_S)\subseteq \Z^n_2$. The relationship between $\Cay(\Z^k_2, S)$ and $C_S$ is summarized by the following well-known fact,
which is a consequence of the eigenvectors for $\Z_2^k$ being the Boolean Fourier characters. 

\begin{proposition}\label{fact:blc_distance}
Let $S \subseteq \Z_2^k$ such that $\matG_S$ has rank $k$.
Let $\lambda_2$ be the second smallest normalized Laplacian eigenvalue of $\Cay(\Z^k_2,S)$. Then $\lambda_2/2 = \delta(C_S)$. Furthermore, the eigenvalue multiplicity of $\lambda_2$ is equal to the number of code words of minimum weight in $C_S$.
\end{proposition}

The assumption $\rank(\matG_S) = k$ is equivalent to $\Cay(\Z_2^k, S)$ being connected.

\cref{fact:blc_distance} shows that the maximum eigenvalue multiplicity of Cayley graphs over $\Z^k_2$ is equal to the maximum number of minimum-weight code words in binary linear codes.
The code version of the question has been studied by Ashikhmin, Barg and Vl\u{a}du\c{t} \cite{ashikhmin2001linear} answering a question of Kalai and Linial \cite{kalai1995distance}.

For a code $C$ define $A_{\Delta} = \{x\in C: |x| = \Delta(C) \}$ to be the set of non-zero code words of minimum weight. Define $E_q(\delta) = H(\delta) - \frac{\log q}{\sqrt{q}-1}-\log\frac{q}{q-1}$. For $q\geq 49\,,$ $E_q(\delta)$ has two roots $0 < \delta_1(q) <\delta_2(q)$ and is positive for all $\delta_1(q) < \delta < \delta_2(q)$. Now we can state the main result of \cite{ashikhmin2001linear}.

\begin{theorem}\label{thm:codes_many_light_codewords}\cite{ashikhmin2001linear}
   Fix $s \in \N\,, q = 2^{2s}$ such that $q \geq 49$. Then for any $\delta_1(q) < \delta < \delta_2(q)$ there exists a sequence $k \to \infty$ and a binary linear code $C$ of dimension $k\,,$ block length $n = qk\,,$ and distance $\Delta(C) \geq n\delta/2$ such that
   \[
   \log \abs{A_{\Delta}} \geq kE_{q}(\delta)-o(k).
   \]
\end{theorem}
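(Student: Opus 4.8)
My plan is to use a concatenated-code construction: an algebraic--geometry code over $\F_q$ on the outside and a binary code encoding one $\F_q$-symbol on the inside, with the binary entropy term $H(\delta)$ of $E_q(\delta)$ coming from the inner code and the $q$-dependent corrections coming from the outer code. Concretely, since $q=2^{2s}$ is an even prime power, I would take the outer code from the Tsfasman--Vl\u{a}du\c{t}--Zink construction: a sequence of codes $[N,K_{\mathrm{out}},D_{\mathrm{out}}]_q$ (with $N\to\infty$) whose rate $R_{\mathrm{out}}$ and relative distance $\delta_{\mathrm{out}}$ satisfy $R_{\mathrm{out}}+\delta_{\mathrm{out}}\ge 1-\tfrac{1}{\sqrt q-1}-o(1)$ and which are evaluation codes of Riemann--Roch spaces $L(G)$ at $N$ rational points on a curve of genus $g=\tfrac{N}{\sqrt q-1}(1+o(1))$. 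For the inner code I would use a binary $[n_0,\log q,d_0]_2$ code in which (almost) all nonzero codewords have the same weight $d_0$ --- a replicated simplex code is the cleanest choice, and is the source of the lower-order term $\log\tfrac{q}{q-1}$ --- while keeping $d_0/n_0$ close to the value dictated by the target relative distance. The concatenation $C$ is then a binary code of block length $n=Nn_0$, dimension $k=K_{\mathrm{out}}\log q$, and minimum distance at least $D_{\mathrm{out}}d_0$, and one tunes the free parameters $(\delta_{\mathrm{out}},n_0,d_0,\ldots)$ so that $n=qk$ and the relative distance is at least $\delta/2$, exactly as in the statement.

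The technical core is the bound $\log|A_\Delta|\ge kE_q(\delta)-o(k)$ on the number of minimum-weight codewords. A codeword of $C$ of weight $D_{\mathrm{out}}d_0$ is forced to come from a \emph{minimum-weight} codeword of the outer code all of whose $D_{\mathrm{out}}$ nonzero symbols encode to minimum-weight ($=d_0$) inner words, so it suffices to lower bound the number of minimum-weight codewords of the outer algebraic--geometry code (the inner code then contributes only a bounded loss, since essentially all of its nonzero words have weight $d_0$). The latter count is where algebraic geometry enters: a weight-$D_{\mathrm{out}}$ codeword of the evaluation code of $L(G)$ comes from a function $f\in L(G)$ vanishing at a set $Z$ of $N-D_{\mathrm{out}}$ evaluation points (necessarily $|Z|=\deg G$), i.e.\ from a set $Z$ for which the degree-zero divisor $G-\sum_{i\in Z}P_i$ is principal; counting such $Z$ amounts to estimating how the $\binom{N}{|Z|}$ effective divisors supported on the chosen points distribute over the divisor class group, whose order is $q^{g(1+o(1))}$ by the Weil bounds. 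This gives on the order of $\binom{N}{D_{\mathrm{out}}}q^{-g}$ minimum-weight outer codewords, i.e.\ exponent $N\bigl(H(\delta_{\mathrm{out}})-\tfrac{\log q}{\sqrt q-1}\bigr)(1+o(1))$; dividing by $k=R_{\mathrm{out}}N\log q$ and optimizing $\delta_{\mathrm{out}}$ against the constraints $n=qk$ and relative distance $\ge\delta/2$ should reproduce $E_q(\delta)=H(\delta)-\tfrac{\log q}{\sqrt q-1}-\log\tfrac{q}{q-1}$, and the interval $(\delta_1(q),\delta_2(q))$ is then just the positivity interval of this one-variable function, which is nonempty precisely for $q\ge 49$.

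The step I expect to be the main obstacle is making the count of minimum-weight outer codewords rigorous and tight to within an $o(k)$-slack in the exponent: controlling the distribution of effective divisors of a fixed degree across the class group, restricted to supports inside the prescribed set of rational points, is genuinely delicate. The natural way to sidestep the fine algebraic geometry is to \emph{average} --- instead of a single code, consider a family (e.g.\ varying the base divisor $G$ over a degree class, or varying the chosen set of rational evaluation points) and compute the \emph{expected} number of minimum-weight codewords, so that some code in the family meets the bound, at the cost of slightly worse lower-order terms. Beyond this, one must still check routine facts: that distinct outer codewords and distinct admissible inner encodings give distinct binary codewords with no accidental cancellation pushing the weight below $D_{\mathrm{out}}d_0$ (this follows from disjointness of supports), and that at least one outer minimum-weight codeword is ``good'' so that $\Delta(C)=D_{\mathrm{out}}d_0$ and the vectors we count really are of minimum weight.
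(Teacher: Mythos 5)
The paper does not prove this theorem: it is imported verbatim from Ashikhmin--Barg--Vl\u{a}du\c{t} \cite{ashikhmin2001linear} as a black box and used, via \cref{fact:blc_distance}, only to produce the lower-bound instance in \cref{prop:eig_val_lower_bound}. So there is no ``paper proof'' here for your sketch to diverge from; the only fair comparison is against the cited reference.

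On that score, your outline is a reasonable high-level account of what \cite{ashikhmin2001linear} actually does: concatenate a $q$-ary algebraic--geometry (Tsfasman--Vl\u{a}du\c{t}--Zink / Garcia--Stichtenoth) outer code attaining the Drinfeld--Vl\u{a}du\c{t} bound with a constant-weight binary inner code (a simplex/Hadamard code), reduce the count of minimum-weight binary words to the count of minimum-weight outer codewords, and estimate the latter by counting effective divisors of the appropriate degree supported on the chosen rational points that fall into a prescribed class of the divisor class group, of size roughly $q^{g}$. One internal inconsistency worth flagging: your opening sentence attributes the entropy term $H(\delta)$ to the inner code and the $q$-dependent corrections to the outer code, but as your own later paragraph makes clear, both $H(\delta)$ (from $\binom{N}{D_{\mathrm{out}}}$) and $-\tfrac{\log q}{\sqrt q-1}$ (from dividing by the class number) come from the outer AG count, while $-\log\tfrac{q}{q-1}$ is the rate overhead of the simplex inner code. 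The genuinely delicate step is exactly the one you identify --- a tight lower bound on the number of effective divisors in a fixed class supported on the evaluation set --- and \cite{ashikhmin2001linear} use nontrivial algebraic-geometric input there rather than a plain averaging argument; but for the purposes of the present paper, one cites this theorem rather than reproves it.
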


The proof of \cref{prop:eig_val_lower_bound} follows from combining \cref{thm:codes_many_light_codewords} and \cref{fact:blc_distance}.

\begin{proof}[Proof of \cref{prop:eig_val_lower_bound}]
\cref{thm:codes_many_light_codewords} yields a code in which the number of minimum-weight codewords is at least $2^{\Omega(k)}$ out of the total number of codewords $2^k$.
We can convert this into a Cayley graph $\Cay(\Z_2^k, S)$ with $\mul_{\lambda_2} \geq 2^{\Omega(k)}$ by \cref{fact:blc_distance}. The degree of the Cayley graph is $|S| = n = \Theta(k)$ which is logarithmic in the number of vertices of the graph. Hence, $\mul_{\lambda_2} \geq 2^{\Omega(k)} = 2^{\Omega(|S|)}$.
\end{proof}

\section{Sparsest cut in \texorpdfstring{$\mathbb{Z}^k_p$}{Znp}}\label{app:sparsest-cut-vector-space}

\sparsestcut over the Boolean hypercube $\Z_2^k$ can be solved exactly by a different type of spectral algorithm. The eigenvectors of $\Z^k_2$ are $\pm 1$-valued (the Boolean Fourier characters) which implies that (1)~the lower bound in the Cheeger inequality is exact $\tfrac 12 \lam_2 = \phi(G)\,$, (2)~there is an eigenvector which sign-indicates a sparsest cut.

In this section, we show that there is a simple polynomial time algorithm that computes a $O(p)$-approximation for sparsest cut on undirected Cayley graphs over $\Z^k_p$, where $p$ is prime.

\begin{proposition}
\torestate{\label{prop:sparse_cut_vector_space}
    There exists a polynomial time algorithm that computes an $O(p)$-approximation on undirected Cayley graphs over $\Z^k_p$. 
    }
\end{proposition}

The idea is given $G = \cay(\Z^k_p, S)$, we define a related graph $G' = \cay(\Z^k_p, S')$ such that 
\begin{enumerate}
    \item $\phi(G')$ is a $O(p)$-approximation of $\phi(G)$, and
    \item $\lambda_2(G')$ is a $O(1)$-approximation of $\phi(G')$.
\end{enumerate}
Combining both $(1)$ and $(2)$ implies $\lambda_2(G')$ is an $O(p)$-approximation of $\phi(G)$. Hence, the polynomial time $O(p)$-approximation algorithm for sparsest cut on $\Z^k_p$ is computing $\lambda_2(G')$. In fact, our proof shows that the subspace orthogonal to $g$ where $\chi_{g}$ is an eigenvector corresponding to $\lambda_2(G')$ provides an $O(p)$-approximation to $\phi(G)$.

Let $G = \cay(\Z^k_p, S)$. Consider $G' = \cay(\Z^k_p, S')$, where $S'$ is the multiset of size $\frac{p-1}{2}|S|$ containing $\cup_{\ell\in[(p-1)/2]}\ell S$, where $\ell S = \{\ell s:s\in S\}$. The set $S'$ is obtained from $S$ by taking all non-zero multiples of $S$ up to $(p-1)/2$. By symmetry of $S$, the multiset $S'$ contains all non-zero multiples of each $x\in S$. For each $\ell \neq 0$, the graph $G_\ell = \cay(\Z^k_p, \ell S)$ is isomorphic to $\cay(\Z^k_p, S)$. In this section we use the fact that the eigenvectors of Cayley graphs over $\Z^k_p$ are given by $\chi_g(x) = e^{2\pi i\langle x, g \rangle/p}$, where $g\in \Z^k_p$ (see chapter $16$ of \cite{trevisan2017lecture}).

The lemma below shows that $\phi(G)$ and $\phi(G')$ differ by at most a factor of $(p+1)/4$. 
\begin{lemma}\label{lem:p_approx_cond}
    The graphs $G$ and $G'$ satisfy that,
    \[
    \phi(G)\leq \phi(G')\leq \frac{p+1}{4}\phi(G).
    \]
\end{lemma}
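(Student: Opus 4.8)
The plan is to prove the two inequalities by a vertex-set-preserving comparison of $G=\cay(\Z^n_p,S)$ with $G'=\cay(\Z^n_p,S')$, exploiting that $G$ is $d$-regular ($d=|S|$) while $G'$ is $\tfrac{p-1}{2}d$-regular, and that as a multigraph $E(G')=\bigsqcup_{k=1}^{(p-1)/2}E(G_k)$ where $G_k=\cay(\Z^n_p,kS)$, each $G_k$ being isomorphic to $G$ via the automorphism $x\mapsto kx$ of $\Z^n_p$ (here $p$ prime, $k\neq 0$, makes multiplication by $k$ a bijection permuting the generators, and $p$ is odd since $\tfrac{p-1}{2}$ must be a positive integer). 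For the lower bound $\phi(G)\le\phi(G')$ I would fix any $Q$ with $|Q|\le p^n/2$ and, using that $G,G_1,\dots,G_{(p-1)/2}$ are all $d$-regular, write
\[
\phi_{G'}(Q)=\frac{\sum_{k=1}^{(p-1)/2}\card{\partial_{G_k}Q}}{\tfrac{p-1}{2}d\card Q}=\frac{2}{p-1}\sum_{k=1}^{(p-1)/2}\phi_{G_k}(Q)=\frac{2}{p-1}\sum_{k=1}^{(p-1)/2}\phi_G(k^{-1}Q)\ \ge\ \phi(G),
\]
where the isomorphism $x\mapsto k^{-1}x$ carries $G_k$ to $G$ and $Q$ to $k^{-1}Q$, and $|k^{-1}Q|=|Q|\le p^n/2$ makes each term at least $\phi(G)$. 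Minimizing over $Q$ gives this direction.

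For the upper bound $\phi(G')\le\tfrac{p+1}{4}\phi(G)$, I would take a sparsest cut $Q^*$ of $G$ (so $|Q^*|\le p^n/2$, a legal cut for $G'$ too) and bound $\phi_{G'}(Q^*)$ directly through a decomposition into \emph{lines}. For a generator $s\in S$ the cyclic subgroup $\langle s\rangle$ has order $p$, and its cosets partition $\Z^n_p$ into \emph{$s$-lines} of size $p$, each carrying the cyclic order $u,u+s,u+2s,\dots$. The structural observation is that on a single $s$-line $L$ the $G$-edges $\{x,x+s\}$ form the $p$-cycle on $L$, while the $G'$-edges in the directions $s,2s,\dots,\tfrac{p-1}{2}s$ form the complete graph on $L$, each edge of $K_p$ occurring once (for distinct $u,v\in L$ write $v-u=ms$; exactly one of $m,p-m$ lies in $\{1,\dots,\tfrac{p-1}{2}\}$, and that value of $k$ produces the edge $\{u,v\}$). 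Hence, writing $q_{s,L}=|Q^*\cap L|$ and $a_{s,L}$ for the number of maximal cyclic runs of $Q^*\cap L$ in $L$, the number of $G$-edges of $L$ crossing $Q^*$ is $2a_{s,L}$ and the number of $G'$-edges of $L$ crossing $Q^*$ is $q_{s,L}(p-q_{s,L})$. Summing over all generators $s\in S$ and all $s$-lines (the edge multiplicities of $G$ and $G'$ matching direction-by-direction because $S$ and $S'$ are symmetric) reduces everything to the elementary inequality
\[
q(p-q)\ \le\ \frac{p^2-1}{4}\,a
\]
for any $q$-subset of a $p$-cycle with $a$ maximal runs: when $q\in\{0,p\}$ both sides vanish, otherwise $1\le q\le p-1$ and $a\ge1$, and $\max_{1\le j\le p-1}j(p-j)=\tfrac{p-1}{2}\cdot\tfrac{p+1}{2}=\tfrac{p^2-1}{4}$ since $p$ is odd.

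Applying this termwise gives $\card{\partial_{G'}Q^*}\le\tfrac{p^2-1}{8}\card{\partial_G Q^*}$ (the factor $\tfrac12$ coming from $\sum_{s,L}a_{s,L}=\tfrac12\sum_{s,L}2a_{s,L}=\tfrac12\card{\partial_G Q^*}$, up to the common multiplicity constant that cancels). Dividing by the volumes, and using that the degree of $G'$ is $\tfrac{p-1}{2}$ times that of $G$,
\[
\phi_{G'}(Q^*)=\frac{\card{\partial_{G'}Q^*}}{\tfrac{p-1}{2}d\card{Q^*}}\ \le\ \frac{2}{p-1}\cdot\frac{p^2-1}{8}\cdot\frac{\card{\partial_G Q^*}}{d\card{Q^*}}=\frac{p+1}{4}\,\phi_G(Q^*)=\frac{p+1}{4}\,\phi(G),
\]
so $\phi(G')\le\phi_{G'}(Q^*)\le\tfrac{p+1}{4}\phi(G)$.

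I expect the only real obstacle to be the bookkeeping in the line decomposition: checking that the step-$k$ cycles for $k=1,\dots,\tfrac{p-1}{2}$ partition the edge set of $K_p$ on each line, and that the per-line comparison assembles into the global one with consistent edge multiplicities on the two sides (which holds since both $S$ and $S'$ are symmetric). The combinatorial inequality $q(p-q)\le\tfrac{p^2-1}{4}a$ is exactly what makes the constant $\tfrac{p+1}{4}$ tight: equality occurs at $a=1$, $q=\tfrac{p\pm1}{2}$, corresponding to the cycle-versus-clique example $G=\cay(\Z_p,\{\pm1\})$, $G'=K_p$, for which $\phi(G')/\phi(G)=\tfrac{p+1}{4}$ exactly.
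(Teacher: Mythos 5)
Your proof is correct. The lower bound follows the paper essentially verbatim: both decompose $G'$ into $\bigsqcup_k G_k$ with each $G_k\cong G$, and your explicit remark that the isomorphism $x\mapsto k^{-1}x$ preserves $|Q|$ is exactly the detail the paper leaves implicit. The upper bound, however, takes a genuinely different route. The paper argues \emph{per pair} $(x,ks)$: it bounds the Boolean increment $(\mathbf 1_Q(x)-\mathbf 1_Q(x+ks))^2$ by a telescoping sum of $k$ step-$1$ increments along the intermediate path, sums over $x,s,k$, and the constant appears as $\sum_{k=1}^{(p-1)/2}k=\tfrac{p^2-1}{8}$, all without ever mentioning the coset structure. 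You instead argue \emph{per line}: on each coset of $\langle s\rangle$ the $G$-edges in direction $\pm s$ form the $p$-cycle, the $G'$-edges in directions $\pm s,\dots,\pm\tfrac{p-1}{2}s$ form $K_p$, and the comparison reduces to the clean inequality $q(p-q)\le\tfrac{p^2-1}{4}\,a$ for a subset with $q$ elements and $a$ arcs on a $p$-cycle. In fact, aggregating the paper's telescoping terms over a single line reproduces exactly your inequality, so the two bounds coincide constant-for-constant. What your route buys is a transparent combinatorial picture that also exhibits when the bound is tight ($a=1$, $q=\tfrac{p\pm 1}{2}$, i.e.\ the cycle-versus-clique example), whereas the paper's telescoping is a drop-in algebraic manipulation that avoids the small bit of bookkeeping in your argument (pairing $s$ with $-s$ to avoid double-counting edges, and the harmless exclusion of $s=0$ which generates a trivial line). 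Both are fine; yours is more geometric, the paper's more mechanical.
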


\begin{proof}
    We begin with the lower bound $\phi(G)\leq \phi(G')$. Let $Q\subseteq \Z^k_p$ such that $|Q|\leq p^k/2$. Observe that $E_{G'}(Q, \bar{Q}) = \sum_{\ell\in[(p-1)/2]}E_{G_\ell}(Q, \bar{Q})$. Since $G_\ell$ is isomorphic to $G$ for all $\ell\in[(p-1)/2]$, we have $E_{G_\ell}(Q, \bar{Q})\geq \phi(G)|S||Q|$. This implies
    \begin{align*}
        \phi_{G'}(Q) &= \frac{E_{G'}(Q, \bar{Q})}{|S'||Q|}\\
        &= \frac{\sum_{\ell\in[(p-1)/2]}E_{G_\ell}(Q,\bar{Q})}{|S'||Q|}\\
        &\geq \frac{\frac{p-1}{2}|S||Q|\phi(G)}{|S'||Q|}\\
        &= \phi(G).
    \end{align*}
   Hence, $\phi(G')\geq \phi(G)$. 
   
   Now, we prove the upper bound $\phi(G')\leq \frac{p+1}{4}\phi(G)$. Observe that we can write the number of edges cut by $Q$ as $E_G(Q,\bar{Q}) = \frac{1}{2}\sum_{s\in S}\sum_{x\in \Z^k_p}(\mathbf{1}_Q(x)-\mathbf{1}_Q(x+s))^2$ and $E_{G'}(Q, \bar{Q}) = \frac{1}{2}\sum_{\ell\in[(p-1)/2]}\sum_{s\in S}\sum_{x\in \Z^k_p}(\mathbf{1}_Q(x)-\mathbf{1}_Q(x+\ell s))^2$. The terms $(\mathbf{1}_Q(x)-\mathbf{1}_Q(x+\ell s))^2$ satisfy a ``triangle inequality''  
   \[(\mathbf{1}_Q(x)-\mathbf{1}_Q(x+\ell s))^2\leq \sum_{i=1}^\ell(\mathbf{1}_Q(x+(i-1)s)-\mathbf{1}_Q(x+is))^2.\]
   
   Note that for each fixed $i\in[\ell]$, $\sum_{x\in\Z^k_p}(\mathbf{1}_Q(x+(i-1)s)-\mathbf{1}_Q(x+is))^2= \sum_{x\in\Z^k_p}(\mathbf{1}_Q(x)-\mathbf{1}_Q(x+s))^2$. This implies, 
   \begin{align*}
       \sum_{x\in\Z^k_p}(\mathbf{1}_Q(x)-\mathbf{1}_Q(x+\ell s))^2&\leq \sum_{x\in\Z^k_p}\sum_{i=1}^\ell(\mathbf{1}_Q(x+(i-1)s)-\mathbf{1}_Q(x+is))^2\\
       &\leq \ell\sum_{x}(\mathbf{1}_Q(x)-\mathbf{1}_Q(x+s))^2.
   \end{align*}
Using this we obtain the following bound on $E_{G'}(Q,\bar{Q})$ in terms of $E_{G}(Q,\bar{Q})$
\begin{align*}
  E_{G'}(Q,\bar{Q}) &= \frac{1}{2}\sum_{\ell\in[(p-1)/2]}\sum_{s\in S}\sum_{x\in \Z^p_n}(\mathbf{1}_Q(x)-\mathbf{1}_Q(x+\ell s))^2\\
  &\leq \frac{1}{2}\sum_{\ell\in[(p-1)/2]}\sum_{s\in S}\sum_{x\in\Z^k_p}(\mathbf{1}_Q(x)-\mathbf{1}_Q(x+s))^2\\
  &=\frac{(p-1)(p+1)}{8}E_G(Q,\bar{Q}).
\end{align*}
Dividing by $|S'||Q|$, we obtain $\phi_{G'}(Q)\leq \frac{p+1}{4}\phi(Q)$. Thus, $\phi(G')\leq \frac{p+1}{4}\phi(G)$, as desired.
\end{proof}

To conclude the analysis, we show that $\lambda_2(G')$ and $\phi(G')$ differ by a factor of at most $1/2$. Our proof makes use of a well-known lemma which can be found in \cite{hoory2006expander}. 

\begin{lemma}\cite{hoory2006expander}\label{lem:hlw}
    Let $G = (V,E)$ be a graph, $\matL$ the normalized Laplacian of $G$, and $\gamma$ an eigenvector corresponding to $\lambda_2(G)$. Define $\gamma_+$ by $\gamma_+(x) = \max\{f(x), 0\}$ and $\gamma_-$ by $\gamma_-(x) = \min\{\gamma(x), 0\}$. Then $\gamma_+,\gamma_-$ are disjointly supported and satisfy
    \[
    \frac{\gamma^T_{-} \matL_{G'}\gamma_-}{\gamma^T_-\gamma_-} \leq \lam_2(G), \qquad  \frac{\gamma^T_{+} \matL_{G'}\gamma_+}{\gamma^T_+\gamma_+}\leq \lambda_2(G).
    \]
\end{lemma}

\begin{lemma}\label{lem:lambda_approx_cond}
 The graph $G'$ satisfies 
    \[
    \frac{\lambda_2(G')}{2}\leq \phi(G')\leq \lambda_2(G').
    \]
    
\end{lemma}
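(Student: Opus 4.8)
The plan is to prove the two inequalities separately. The inequality $\lambda_2(G')/2\le\phi(G')$ is just the easy direction of Cheeger's inequality (\cref{thm:cheeger}) applied to the regular graph $G'$, so there is nothing to do there. The substance is $\phi(G')\le\lambda_2(G')$, which fails for general graphs and must exploit that $S'$ uses \emph{all} nonzero multiples of the generators. I would prove it by exhibiting one explicit cut $Q$ with $\phi_{G'}(Q)\le\lambda_2(G')$, read off from the second eigenvector. Write $\omega=e^{2\pi i/p}$ and recall that the eigenfunctions of $G'=\cay(\Z_p^n,S')$ are the characters $\chi_g(x)=\omega^{\langle x,g\rangle}$ (\cref{fact:cayley}), with $\alpha_g(G')$ real and $\alpha_{-g}(G')=\alpha_g(G')$. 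Fix $g^*\in\Z_p^n\setminus\{0\}$ attaining $\lambda_2(G')=\min_{g\neq0}(1-\alpha_g(G'))$; then $f(x):=\sin(2\pi\langle x,g^*\rangle/p)=\operatorname{Im}\chi_{g^*}(x)$ lies in $\operatorname{span}_{\R}\{\chi_{g^*},\chi_{-g^*}\}$, so it is a nonzero real eigenvector of $\matL_{G'}$ for $\lambda_2(G')$ with $f\perp\mathbf1$. Applying \cref{lem:hlw} to $f$, its positive part $f_+=\max(f,0)$ has Rayleigh quotient $R_{G'}(f_+):=\frac{f_+^{T}\matL_{G'}f_+}{f_+^{T}f_+}\le\lambda_2(G')$, and is supported exactly on $Q:=\{x:\langle x,g^*\rangle\in\{1,\dots,\tfrac{p-1}{2}\}\}$, which has $|Q|=\tfrac{p-1}{2}p^{n-1}<p^n/2$.

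The key step is to push everything forward along the surjection $\pi\colon\Z_p^n\to\Z_p$, $\pi(x)=\langle x,g^*\rangle$: the vector $f_+$ is constant on the fibers of $\pi$, and $G'$ descends to the Cayley multigraph $H=\cay(\Z_p,S'')$ where $S''$ is the image multiset of $S'$ (each edge of $H$ lifting to a perfect matching between two fibers). Because $S$ is symmetric and, for any $a\neq0$ in $\Z_p$, the multiset $\{ka:k\in[\tfrac{p-1}{2}]\}\cup\{-ka:k\in[\tfrac{p-1}{2}]\}$ is precisely $\Z_p\setminus\{0\}$ (each nonzero residue once), every inverse-pair $\{s,-s\}\subseteq S$ with $\langle s,g^*\rangle\neq0$ contributes one full copy of $\Z_p\setminus\{0\}$ to $S''$. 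Hence, discarding self-loops (contributed by the $m_{g^*}:=|\{s\in S:\langle s,g^*\rangle=0\}|$ generators that $\pi$ kills), $H$ is exactly $c:=\tfrac12(|S|-m_{g^*})$ copies of the complete graph $K_p$. A short computation with the all-ones matrix $J$ then gives $\matL_H=\tfrac{c(p-1)}{|S'|}\matL_{K_p}$, together with the pushforward identities $\phi_{G'}(\pi^{-1}(A))=\phi_H(A)$ for every $A\subseteq\Z_p$ and $R_{G'}(f_+)=R_H(\tilde f_+)$, where $\tilde f_+\colon\Z_p\to\R_{\ge0}$ is the function with $f_+=\tilde f_+\circ\pi$.

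It remains to round on $H$ with no square-root loss, which works because $H$ is a blow-up of $K_p$. Since $\matL_{K_p}=\tfrac1{p-1}(pI-J)$, for any nonnegative $g$ supported on $A\subseteq\Z_p$ with $|A|=a$, Cauchy--Schwarz gives $\bigl(\sum_{x\in A}g(x)\bigr)^2\le a\,\|g\|^2$, so $R_{K_p}(g)=\tfrac{p}{p-1}\bigl(1-\tfrac{(\mathbf1^{T}g)^2}{p\|g\|^2}\bigr)\ge\tfrac{p-a}{p-1}=\phi_{K_p}(A)$; that is, replacing a nonnegative vector on $K_p$ by the indicator of its support never increases the Rayleigh quotient. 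Applying this to $\tilde f_+$ (with $a=\tfrac{p-1}{2}$) and chaining the identities,
\[
\phi(G')\le\phi_{G'}(Q)=\phi_H\!\left(\{1,\dots,\tfrac{p-1}{2}\}\right)\le R_H(\tilde f_+)=R_{G'}(f_+)\le\lambda_2(G'),
\]
which completes the proof.

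I expect the middle step to be the main idea and obstacle: recognizing that because $S'$ uses all nonzero multiples and $S$ is symmetric, $\pi$ collapses $G'$ to a disjoint union of copies of $K_p$ together with harmless self-loops, so that spectral thresholding loses nothing and Cheeger's $\phi\le\sqrt{2\lambda_2}$ sharpens to $\phi\le\lambda_2$ here. The surrounding work — tracking multiplicities in $S'$ and $S''$, verifying $\matL_H\propto\matL_{K_p}$ and the two pushforward identities, and noting that connectivity of $G$ (hence of $G'$) forces $m_{g^*}<|S|$, so $c\ge1$ — is routine. As an aside, one can bypass \cref{lem:hlw} entirely: the character formula yields $\alpha_g(G')=\tfrac{pm_g-|S|}{(p-1)|S|}$, hence $\lambda_2(G')=\tfrac{p(|S|-m_{g^*})}{(p-1)|S|}$ and $\phi_{G'}(Q)=\tfrac{(p+1)(|S|-m_{g^*})}{2(p-1)|S|}=\tfrac{p+1}{2p}\,\lambda_2(G')\le\lambda_2(G')$ directly.
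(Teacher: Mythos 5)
Your proof is correct, but it takes a substantially different and longer route than the paper's for the nontrivial inequality $\phi(G')\le\lambda_2(G')$. Both arguments start from the same observation that the eigenvalue $\alpha_g(G')$ is invariant under $g\mapsto kg$ for $k\in\Z_p^\times$ (because $S'$ is closed under nonzero scalar multiplication). The paper exploits this by \emph{averaging the character over the whole orbit}: it forms $\gamma=\sum_{k=1}^{p-1}\chi_{kg^*}$, which is automatically real and \emph{two-valued} ($p-1$ on the hyperplane $\langle g^*,\cdot\rangle=0$, $-1$ elsewhere), so its positive part is already a scaled indicator of $Q=\{x:\langle g^*,x\rangle=0\}$ and \cref{lem:hlw} immediately gives $\phi_{G'}(Q)\le\lambda_2(G')$ with no further rounding. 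You instead keep only the imaginary part of a single character, getting a sinusoidal eigenvector whose positive part is not an indicator, and then repair this with a genuine extra idea: push forward along $\pi(x)=\langle x,g^*\rangle$, recognize the quotient graph as a blow-up of $K_p$ (plus self-loops that only rescale the Laplacian), and use the sharp fact that on $K_p$ replacing a nonnegative vector by the indicator of its support cannot increase the Rayleigh quotient. This is correct (your identities $\phi_{G'}(\pi^{-1}(A))=\phi_H(A)$, $R_{G'}(f_+)=R_H(\tilde f_+)$, and $\matL_H=\tfrac{c(p-1)}{|S'|}\matL_{K_p}$ all check out), and it even produces a different sparse cut (a ``half-interval'' of residues rather than the single residue $0$). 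What your route buys is a clearer picture of \emph{why} the square-root loss in Cheeger disappears here (the quotient is a complete graph), and your closing aside—computing $\alpha_g(G')=\tfrac{pm_g-|S|}{(p-1)|S|}$ and $\phi_{G'}(Q)=\tfrac{p+1}{2p}\lambda_2(G')$ directly—is actually the shortest self-contained argument of all three. What the paper's orbit-averaging buys is that no descent or $K_p$-rounding is needed at all, since the symmetrized eigenvector is already essentially Boolean; this is the key simplification you anticipated might exist but did not find in the main argument. One small caveat common to both: your $c\ge1$ (equivalently $m_{g^*}<|S|$) needs $G$ connected, but if $G$ is disconnected then $\lambda_2=\phi=0$ and the claim is vacuous, so this is harmless.
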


\begin{proof}
    The first inequality is a direct application of Cheeger's inequality. It remains to prove the second inequality $\phi(G')\leq \lambda_2(G')$.
    Let $\ell \neq 0$ and $g\in\Z^k_p$ and $\chi_g$ be the corresponding eigenvector. We claim that if $\chi_{g}$ is an eigenvector corresponding to $\lambda_2(G')$, then so is $\chi_{\ell g}$. By definition, for every $s\in S'$ the element $\ell s\in S'$ occurs with the same multiplicity. This implies $\sum_{s\in S'}\chi_{g}(s) = \sum_{s\in S'}\chi_{\ell g}(s)$. Hence, $\chi_{\ell g}$ is also an eigenvector corresponding to $\lambda_2(G')$.
    
    Consider the symmetrized eigenvector $\gamma = \sum_{\ell \in[p-1]}\chi_{\ell g}(s)$ corresponding to $\lambda_2(G')$. One can see that for each $x\in \Z^k_p$, the vector $\gamma$ satisfies
    \[\gamma(x) = 
    \begin{cases}
        p-1 &\text{if } \langle g, x\rangle = 0\\
        -1 &\text{otherwise.}
    \end{cases}
    \]
    
Define $\gamma_+$ by $\gamma_+(x) = \max\{\gamma(x), 0\}$. Observe that $\gamma_+ = (p-1)\mathbf{1}_Q$, where $Q = \{x\in \Z^k_p: \langle g, x\rangle = 0\}$ is the subspace orthogonal to $g$. Applying Lemma~\ref{lem:hlw}, we obtain
    \[
    \phi(G')\leq \phi_{G'}(Q) = \frac{\mathbf{1}^T_Q\matL_{G'} \mathbf{1}_Q}{\mathbf{1}^T_Q\mathbf{1}_Q} = \frac{\gamma^T_{+} \matL_{G'}\gamma_+}{\gamma^T_+\gamma_+} \leq \lambda_2(G').
    \]
\end{proof}

Combining both lemmas gives us proves Proposition~\ref{prop:sparse_cut_vector_space}.

\begin{proof}[Proof of Proposition~\ref{prop:sparse_cut_vector_space}]
Let $G = \cay(\Z^k_p, S)$ and $G' = \cay(\Z^k_p, S')$, where $S'$ is the multiset of size $\frac{p-1}{2}|S|$ containing $\cup_{\ell\in[(p-1)/2]}kS$, where $\ell S = \{\ell s:s\in S\}$. One can see that the graph $G'$ and the eigenvalue $\lambda_2(G')$ can be computed in time polynomial in $|\Z^k_p|$. Combining Lemma~\ref{lem:p_approx_cond} and Lemma~\ref{lem:lambda_approx_cond} gives us 
\[
\phi(G)\leq \lambda_2(G')\leq \frac{p+1}{2}\phi(G).
\]
Hence, $\lambda_2(G')$ provides a $O(p)$-approximation to $\phi(G)$.
\end{proof}

\ifnum\conferenceversion=0
\subsection*{Acknowledgments}
We deeply thank Luca Trevisan for his motivation and wisdom on this problem, and for suggesting to look into Abelian Cayley graphs.
CJ and JZ thank Lucas Pesenti and Robert Wang for discussions. We thank Madhu Sudan for pointing us to \cite{ashikhmin2001linear}.

JR and SV are supported in part by a Simons Investigator Award to Salil Vadhan. Work began while JR and SV were visitors at Bocconi University. SV was a Visiting Researcher at the Bocconi University Department of Computing Sciences, supported by Luca Trevisan's ERC Project GA-834861.
CJ is supported in part by the European Research Council (ERC) under the European Union’s Horizon 2020 research and innovation programme (grant agreement Nos. 834861 and 101019547).
CJ is also a member of the Bocconi Institute for Data Science and Analytics (BIDSA).

\fi

\phantomsection
\addcontentsline{toc}{section}{Bibliography}
{\footnotesize
\bibliographystyle{amsalpha} 
\bibliography{refs, scholar}
}
\clearpage

\appendix
\section{Low threshold rank implies graph decomposition}\label{app:threshold-rank}

\begin{proposition}
    Let $G$ be a graph and $2 \geq \tau \geq 0$.
    If $\mul_\tau(G) = r\,,$ then $V(G)$ has a partition into $H_1, \dots, H_{r'}$ with $r' \leq r+1$ such that: 
    \begin{enumerate}[(i)]
        \item $\phi_G(H_i) \leq \tau$ for all $i = 1, \dots, r'$ except possibly $i = r'$
        \item $\phi_G(K) > \tau$ for all $i = 1,\dots, r'$ and subgraphs $K \subset H_i\,.$
    \end{enumerate}
\end{proposition}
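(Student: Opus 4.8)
The natural plan has two parts: a greedy construction of the partition, then a spectral bound on the number of pieces.

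\textbf{Construction.} Maintain a set $W$ of still-unassigned vertices, starting with $W=V(G)$. As long as $W$ contains a nonempty $K$ with $\phi_G(K)\le\tau$, pick one that is minimal under inclusion, declare it the next piece $H_i$, and remove it from $W$; when no such $K$ remains, take the leftover $W$ (if nonempty) to be the final piece $H_{r'}$. Crucially $\phi_G$ is always measured in the fixed ambient graph $G$, so removing earlier pieces from $W$ never changes any conductance, and the process is well defined. Condition (i) is immediate, since each $H_i$ with $i<r'$ was selected with $\phi_G(H_i)\le\tau$. For (ii): if $i<r'$, any $K\subsetneq H_i$ was available inside $W$ when $H_i$ was chosen, so minimality of $H_i$ forces $\phi_G(K)>\tau$; and if $H_{r'}=W$ arose from termination of the loop, then every nonempty $K\subseteq H_{r'}$, in particular every $K\subsetneq H_{r'}$, satisfies $\phi_G(K)>\tau$. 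So the partition has the stated properties and it remains only to show $r'\le r+1$.

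\textbf{Counting the pieces.} Put $y_i:=\matD^{1/2}\mathbf 1_{H_i}$ for $i=1,\dots,r'$; these are pairwise orthogonal and $\sum_i y_i=\matD^{1/2}\mathbf 1$ is the eigenvector of $\matL(G)$ for eigenvalue $0$. For $z=\sum_i c_i y_i$, a direct computation (cf.\ \cref{fact:rayleigh-quotient}) gives $z^{\top}\matL(G)z=\sum_{i<j}E(H_i,H_j)(c_i-c_j)^2$ and $\|z\|^2=\sum_i c_i^2\,\vol(H_i)$; in other words the Rayleigh quotient of $z$ equals the generalized Rayleigh quotient of $c$ for the Laplacian of the quotient graph $Q$ on vertices $\{1,\dots,r'\}$ with edge weights $E(H_i,H_j)$ and vertex weights $\vol(H_i)$. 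Each good piece $i<r'$ makes the singleton $\{i\}$ a cut of conductance $\phi_G(H_i)\le\tau$ in $Q$. The plan is to leverage this to exhibit an $(r'-1)$-dimensional space of coefficient vectors $c$ all with generalized Rayleigh quotient $\le\tau$; lifting it through $z=\sum_i c_i y_i$ and then projecting onto $\low_{\tau}(G)$ produces $r'-1$ linearly independent vectors there (the one ``lost'' dimension being $\matD^{1/2}\mathbf 1$, which already lies in $\low_{\tau}(G)$), so that $\mul_\tau(G)=\dim\low_{\tau}(G)\ge r'-1$, as needed.

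\textbf{Main obstacle.} The crux is exactly this last step --- turning ``each good piece is a $\le\tau$-sparse cut in $Q$'' into a bound on \emph{many} eigenvalues of $Q$, not just one. Bounding the Rayleigh quotient of $\sum_i c_i y_i$ crudely via $(c_i-c_j)^2\le 2c_i^2+2c_j^2$ loses a factor $2$ on the threshold (the bad contributions are cross edges $E(H_i,H_j)$ with $c_i,c_j$ of opposite sign), which is too weak for the clean bound at threshold $\tau$; and ordinary Cheeger (\cref{thm:cheeger}) only controls $\mu_2(Q)$ rather than all but one eigenvalue. Getting the estimate through with the correct constant is where the inclusion-minimality of $H_1,\dots,H_{r'-1}$ has to be used, and I expect this to be the substantive content of the proof, the construction above being routine.
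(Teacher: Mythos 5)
Your greedy construction is exactly the one the paper uses, and your caution about the Rayleigh-quotient step is better placed than the paper's own writeup. The paper's proof exhibits the pairwise-orthogonal vectors $y_i = \matD^{1/2}\mathbf{1}_{H_i}$ for $i < r'$, notes that each has Rayleigh quotient at most $\tau$, and then appeals to the Courant--Fischer minimax formula as though this alone implied $\lambda_{r'-1}(G) \le \tau$. But the minimax formula requires the Rayleigh quotient to be at most $\tau$ \emph{uniformly on the span}, not merely on the basis vectors; the cross terms $y_i^{\top}\matL(G)y_j = -|E(H_i,H_j)|$ do not vanish, and your estimate $(c_i-c_j)^2 \le 2c_i^2 + 2c_j^2$ shows only a $2\tau$ bound on the span. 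The inclusion-minimality you were hoping would bridge this gap is never touched in the paper's counting step.

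So the ``substantive content'' you expected to find is simply not there, and the statement as written can fail. Take $G = K_4$ and $\tau = 1$: then $\mul_1(K_4) = 1$, but every nonempty proper subset of a piece with at least two vertices has $\phi_G \le 1$, so condition (ii) forces all pieces to be singletons and $r' = 4 > r+1 = 2$; correspondingly $e_1, e_2, e_3$ are orthogonal with Rayleigh quotient $1$ while $\lambda_3(K_4) = 4/3 > 1$. This is not just a boundary effect at $\tau = 1$: running the greedy on $K_{m,m}$ with $\tau = 1 - 1/m$ produces $m$ cross-pair pieces while $\mul_{\tau}(K_{m,m}) = 1$. What the greedy actually delivers is $\lambda_{r'-1}(G) \le 2\tau$, hence $r' \le \mul_{2\tau}(G) + 1$ (equivalently, run the greedy at threshold $\tau/2$ to match the stated $\mul_\tau + 1$). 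For the purely expository role this appendix proposition plays, the factor $2$ is immaterial, but your suspicion was correct and the clean $\tau$ in the paper's argument was not earned.
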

\begin{proof}
    Maintain a set $V$ which is initially equal to $V(G)\,$.
    Select the smallest subgraph $H \subseteq V$ such that $\phi_G(H) \leq \tau\,,$ if any. We output $H$ as a piece of the partition, then remove it and recursively proceed on $V \gets V \setminus H$.
    We take the final piece of the partition to be $V$ once the process terminates.

    It is clear that this generates a partition of $V(G)\,,$ and that conditions (i) and (ii) are satisfied.
    It remains to show $r' \leq r+1$.

    Each of the subgraphs $H_i$ except for the last one has expansion at most $\tau\,.$ Therefore, by \cref{fact:rayleigh-quotient}, we have a collection of $r' - 1$ disjointly-supported (thus orthogonal) vectors $\matD^{1/2}\mathbf{1}_{H_i}$ whose Rayleigh quotients are at most $\tau$.
    This implies $\lam_{r'-1}(G) \leq \tau\,,$
    as can be proven using the variational formula for the eigenvalues:
    \[
        \lam_k(G) = \min_{\substack{v_1, \dots, v_k \in \R^n:\\v_i \perp v_j}} \;\max_{v \in \linspan(v_1, \dots, v_k)} \frac{v^T \matL(G) v}{v^T v}\,.
    \]
    We conclude $r' - 1 \leq r$.
\end{proof}

\chris{TODO make constructive. Ne need to prove that our cut improvement algorithm satisfies $\hat Q \approx Q$ in order to do that.}

\section{Solution dimension and threshold rank}

\subsection{Solution dimension of the cycle graph}\label{app:cycle-graph}

The fact that the solution dimension can be smaller than the $\phi$-threshold-rank is unexpected because every cut must have some component in the eigenspace to at least $\phi$.
This can be seen using the spectral interpretation of the conductance in \cref{fact:rayleigh-quotient}, which says that for all $Q \subseteq [n]$\,, assuming $G$ is regular,
\begin{align}
    \phi_G(Q) = \frac{\mathbf{1}_Q^T \matL \mathbf{1}_Q}{\mathbf{1}_Q^T \mathbf{1}_Q} &= \E_{i \sim \cS(Q)}[\lam_i] \label{eq:avg-eval}
\end{align}
where $\cS(Q)$ is the distribution on $[n]$ proportional to $\iprod{\mathbf{1}_Q, v_i}^2$ and $v_i$ is an eigenbasis for $G$.

\cref{eq:avg-eval} is at least the expansion parameter $\phi$, implying that every Boolean vector $\mathbf{1}_Q$
must have some component in the eigenspaces to at least $\phi$.
In other words, $\CD_0(G) \geq \mul_{\phi}(G)\,,$
and in order to exactly recover a cut indicator, it is necessary to explore
the entire eigenspace up to at least $\phi$.
Conversely, Markov's inequality on \cref{eq:avg-eval} shows that every cut satisfying $\phi_G(Q) \leq 2 \phi(G)$ must have all but $\eps$ fraction of its weight on eigenvalues up to $2\eps^{-1}\phi$. 

Surprisingly, it may still be that only a tiny fraction of $\mathbf{1}_Q$ lies in the eigenspaces above $\phi$, \ie $\CD_\eps(G) \ll \mul_{\phi}(G)$.

For example, the cycle graph $G = \Cay(\Z_n, \{\pm 1\})$ has $\phi = \Theta(\frac{1}{n})$
and the sparsest cuts are the bisections of the cycle into two halves.
An eigenbasis for the cycle consists of the Fourier characters,
\begin{align*}
     v_\al &: \Z_n \to \C\,, \\
     v_\al(x) &= \exp(2\pi i \cdot  \al x / n)
\end{align*}
for $\al \in \Z_n$ with normalized Laplacian eigenvalue $\lam_\al = 1 - \cos(2\pi \al/n)$.
Estimating $1-\cos(2\pi \al /n) = \Theta(\al^2/n^2)$ for $\al = o(n)$
we obtain $\mul_{\phi} = \Theta(\sqrt{n})$.

In contrast we compute that $\CD_{\eps} = \Theta(1)$ for the cycle graph, which may be predicted since the eigenvector to $\lam_2$ already sign-indicates a sparsest cut.
For simplicity, assume $4 | n$ and let $Q^* = \{-\frac n4,-\frac n4 + 1,\dots, \frac n4 - 1\}$ be an optimal cut.
Let the eigenbasis representation of $Q^*$ be $\mathbf{1}_{Q^*} = \sum_{\al \in \Z_n} q_\al v_\al$.
We compute,
\begin{align*}
    \sum_{x \in Q^*} v_\al(x) &= \begin{cases}
        \Theta(\frac{n}{\al}) & \al \text{ odd}\\
        0 & \al \text{ even}
    \end{cases}\\
    \implies q_\al = \frac{\sum_{x \in Q^*} v_\al(x)}{\sum_{x \in \Z_n} |v_\al(x)|^2} &= \begin{cases}
        \Theta(\frac{1}{\al}) & \al \text{ odd}\\
        0 & \al \text{ even}
    \end{cases}
\end{align*}
Therefore, the squared weight on eigenspace $\al$ decreases at the rate $\Theta(\frac{1}{\al^2})$.
Summing this up, the total squared weight beyond the $1/\eps$-th eigenvalue is $\Theta(\eps)$.
At the same time, in the equation,
\begin{align*}
    \phi(Q^*) &= \E_{\al \sim \cS(Q^*)}[\lam_\al]\\
    &= \sum_{\al \in \Z_n} \Theta(\tfrac{1}{\al^2})\cdot \Theta(\al^2/n^2)\\
    &= \Theta(\tfrac 1n)
\end{align*}
each Fourier level is contributing the same amount $\Theta(\frac 1{n^2})$ towards the expansion of $Q^*$.

\subsection{Threshold rank upper bounds solution dimension}\label{sec:threshold-rank-vs-cut-dimensions}

We formally show here that the solution dimension is upper bounded by the $\psi$-threshold rank.

\begin{fact}
    Let $G$ be a graph and let $0<\eps<1\,.$
    Then $\CD_{\eps}(G)\leq \mul_{\psi(G)/\eps}(G)\,.$
    \salil{should the factor of 2 be in the subscript?  I assume that comes from moving between $\phi$ and $\psi$} \jiyu{updated the proof for sparsity and the parameter}
    \chris{updated the proof to include the missing centering of the indicator vector (and also to hold for irregular graphs)}

\begin{proof}
    Let $Q \subseteq V(G)$ be any sparsest cut. Let $v_1,\dots, v_n$ be the eigenbasis for $G$, and expand $\matD^{1/2}\mathbf{1}_Q$ in the eigenbasis as,
    \[
        \matD^{1/2}\mathbf{1}_Q = \sum_{i = 1}^n q_i v_i\,.
    \]
    The normalized density is, using \cref{fact:rayleigh-quotient},
    \[
        \psi_G(Q) = \psi(G) = \frac{\bar{\mathbf{1}}_Q^T \matD^{1/2}\matL \matD^{1/2}\bar{\mathbf{1}}_Q}{\bar{\mathbf{1}}_Q^T \matD \bar{\mathbf{1}}_Q} = \frac{\sum_{i=2}^n q_i^2 \lam_i}{\sum_{i=2}^n q_i^2}\,.
    \]
    The last quantity can be interpreted as the expected value of $\lam_i$ for $i \in [n] \setminus \{1\}$ drawn with probability proportional to $q_i^2$\,.
    Since $\lam_i \geq 0$ we can apply Markov's inequality.
    For all $\tau > 0$ we have:
    \[
        \frac{\sum_{\substack{i \in [n]:\\\lam_i \geq \tau}} q_i^2}{\sum_{i = 2}^n q_i^2} \leq \frac{\psi(G)}{\tau}\,.
    \]
    Taking $\tau = \psi(G) / \eps$\,, the right-hand side becomes $\eps$.
    Hence $\matD^{1/2}\bar{\mathbf{1}}_Q$ has all but $\eps$ fraction of its spectral mass on eigenvalues up to $\tau = \psi(G)/\eps$ which implies $\CD_\eps(G) \leq \mul_{\psi(G)/\eps}$\,.
\end{proof}
\end{fact}

\newcommand*{\horzbar}{\rule[.5ex]{2.5ex}{0.5pt}}
\newcommand{\RS}{\mathsf{RS}}
\newcommand{\had}{\mathsf{Had}}

\end{document}